\def\showcomments{0}
\documentclass[acmsmall]{acmart}

\setcopyright{rightsretained}
\acmDOI{10.1145/3704870}
\acmYear{2025}
\acmJournal{PACMPL}
\acmVolume{9}
\acmNumber{POPL}
\acmArticle{34}
\acmMonth{1}
\received{2024-07-11}
\received[accepted]{2024-11-07}

\citestyle{acmauthoryear}


\usepackage{stmaryrd} 
\usepackage{enumerate}
\usepackage[inline]{enumitem}
\usepackage{xspace}
\usepackage{cuted}
\usepackage{tabularx}
\usepackage{tikz}
\usetikzlibrary{shapes.misc,arrows.meta}
\usepackage{tikz-qtree}
\usepackage{thmtools}
\usepackage{bold-extra}
\usepackage{subcaption}
\usepackage[frozencache=true,cachedir=minted-cache]{minted}
\usepackage{float}

\theoremstyle{definition}
\newtheorem{theorem}{Theorem}[section]
\newtheorem{corollary}[theorem]{Corollary}
\newtheorem{lemma}[theorem]{Lemma}
\newtheorem{proposition}[theorem]{Proposition}
\newtheorem{definition}[theorem]{Definition}

\newtheorem{example}[theorem]{Example}


\newcommand{\figlabel}[1]{\label{fig:#1}}
\newcommand{\figref}[1]{Fig.~\ref{fig:#1}}
\newcommand{\seclabel}[1]{\label{sec:#1}}
\newcommand{\secref}[1]{Section~\ref{sec:#1}}

\newcommand{\deflabel}[1]{\label{def:#1}}
\newcommand{\defref}[1]{Definition~\ref{def:#1}}
\newcommand{\thmlabel}[1]{\label{thm:#1}}
\newcommand{\thmref}[1]{Theorem~\ref{thm:#1}}

\newcommand{\propref}[1]{Proposition~\ref{prop:#1}}
\newcommand{\lemlabel}[1]{\label{lem:#1}}
\newcommand{\lemref}[1]{Lemma~\ref{lem:#1}}
\newcommand{\corlabel}[1]{\label{cor:#1}}
\newcommand{\corref}[1]{Corollary~\ref{cor:#1}}

\makeatletter
\renewenvironment{proof}[1][\proofname]{\par
  \pushQED{\qed}%
  \normalfont\topsep6\p@\@plus6\p@\relax
  \trivlist
  \item[\hskip\labelsep\itshape
    #1\@addpunct{.}]\ignorespaces
}{%
  \popQED\endtrivlist\@endpefalse
}
\makeatother

\ifnum\showcomments=1
\newcommand{\um}[1]{{\sf[{\footnotesize{UM: {\color{red} #1}}]}}}
\newcommand{\mv}[1]{{\sf[{\footnotesize{MV: {\color{cyan} #1}}]}}}
\newcommand{\dm}[1]{{\sf[{\footnotesize{DM: {\color{blue} #1}}]}}}
\else
\newcommand{\um}[1]{}
\newcommand{\mv}[1]{}
\newcommand{\dm}[1]{}
\fi

\newcommand{\revise}[1]{#1}

\newcommand{\mlq} {\text{`}}
\newcommand{\mrq} {\text{'}}

\newcommand{\nats}{\mathbb{N}}
\newcommand{\set}[1]{\{#1\}}
\newcommand{\setpred}[2]{\{#1 \,\mid\, #2\}}
\newcommand{\sem}[1]{\llbracket #1 \rrbracket}

\renewcommand{\emptyset}{\varnothing}
\newcommand{\xdownarrow}[1]{%
  {\left\downarrow\vbox to #1{}\right.\kern-\nulldelimiterspace}
}

\newcommand{\powset}[1]{\mathcal{P}(#1)}

\newcommand{\vocab}{\Pi}
\newcommand{\consts}{\mathcal{C}}
\newcommand{\funcs}{\mathcal{F}}
\newcommand{\rels}{\mathcal{R}}
\newcommand{\vars}{\mathcal{V}}
\newcommand{\conns}{\mathcal{O}}
\newcommand{\arity}[1]{\mathsf{arity}(#1)}
\newcommand{\model}{\mathcal{M}}
\newcommand{\interp}[1]{\sem{#1}}
\renewcommand{\terms}[1]{\mathsf{Terms}_{#1}}
\newcommand{\tru}{\mathtt{true}}
\newcommand{\fls}{\mathtt{false}}
\newcommand{\asgn}{\alpha}

\newcommand{\Ss}{\mathcal{S}}
\newcommand{\Aa}{\mathcal{A}}

\newcommand{\MM}{\mathcal{M}}
\newcommand{\qacc}{q_\text{acc}}

\renewcommand{\phi}{\varphi}

\newcommand{\allc}{\text{all}}
\newcommand{\nonec}{\text{none}}
\newcommand{\epr}{[\exists^*\forall^*,\allc,(0)]_=}
\newcommand{\eprweak}{[\exists^*\wedge \forall^3,(0,2),(0)]_=}
\newcommand{\gurevich}{[\exists^*,\allc,\allc]_=}
\newcommand{\gurevichweak}{[\exists^*,(0),(2)]_=}

\newcommand{\alphabet}{\Sigma}
\newcommand{\nodes}{\text{nodes}}


\newcommand{\shallowterms}{\textsf{ShallowTerms}}
\newcommand{\st}{\shallowterms}
\newcommand{\congr}{\textsf{Cong}}
\newcommand{\val}{\textsf{val}}

\newcommand{\assume}{\textbf{assume}}
\newcommand{\comp}{\text{Comp}}

\newcommand{\core} {\textsf{coRE}}
\newcommand{\re} {\textsf{RE}}
\newcommand{\nexp}{\textsf{NEXP}}

\newcommand{\kw}[1]{\texttt{\textbf{#1}}}
\newcommand{\cd}[1]{\texttt{#1}}
\newcommand{\tuple}[1]{\langle{#1}\rangle}
\newcommand{\ntrm}[1]{\tuple{\texttt{#1}}}
\newcommand{\expr}{\ntrm{e}}
\newcommand{\cond}{\ntrm{b}}
\newcommand{\stmt}{\ntrm{stmt}}
\newcommand{\skipk}{\kw{skip}}
\newcommand{\ifk}{\kw{if}}
\newcommand{\thenk}{\kw{then}}
\newcommand{\elsek}{\kw{else}}
\newcommand{\whilek}{\kw{while}}
\newcommand{\assertk}{\kw{assert}}

\newcommand{\grmsep}{\,\,\,|\,\,\,}

\newcommand{\progstmt}[1]{%
    \tikz[baseline={(word.base)}]{%
        \node[draw=blue!50!black,rectangle,rounded corners=3pt,thick, fill=gray!10, inner sep=0.5mm] (word) {#1};}%
}

\begin{document}

\title{The Decision Problem for Regular First Order Theories}

\author{Umang Mathur}
\orcid{0000-0002-7610-0660}
\affiliation{%
  \institution{National University of Singapore}
  \city{Singapore}
  \country{Singapore}
}
\email{umathur@comp.nus.edu.sg}

\author{David Mestel}
\orcid{0000-0003-3186-8307}
\affiliation{%
  \institution{Maastricht University}
  \city{Maastricht}
  \country{Netherlands}
}
\email{david.mestel@maastrichtuniversity.nl}

\author{Mahesh Viswanathan}
\orcid{0000-0001-7977-0080}
\affiliation{%
  \institution{University of Illinois at Urbana-Champaign}
  \city{Urbana}
  \country{USA}
}
\email{vmahesh@illinois.edu}


\begin{abstract}
The \emph{Entscheidungsproblem}, or the classical decision problem, asks whether a given formula of first-order logic is satisfiable.  In this work, we consider an extension of this problem to regular first-order \emph{theories}, i.e., (infinite) regular sets of formulae.  Building on the elegant classification of syntactic classes as decidable or undecidable for the classical decision problem, we show that some classes (specifically, the EPR and Gurevich classes), which are decidable in the classical setting, become undecidable for regular theories. On the other hand, for each of these classes, we identify a subclass that remains decidable in our setting, leaving a complete classification as a challenge for future work.  Finally, we observe that our problem generalises prior work on automata-theoretic verification of uninterpreted programs and propose a semantic class of existential formulae for which the problem is decidable.
\end{abstract}

\begin{CCSXML}
<ccs2012>
   <concept>
       <concept_id>10003752.10003766.10003776</concept_id>
       <concept_desc>Theory of computation~Regular languages</concept_desc>
       <concept_significance>300</concept_significance>
       </concept>
   <concept>
       <concept_id>10003752.10003790.10002990</concept_id>
       <concept_desc>Theory of computation~Logic and verification</concept_desc>
       <concept_significance>500</concept_significance>
       </concept>
 </ccs2012>
\end{CCSXML}

\ccsdesc[300]{Theory of computation~Regular languages}
\ccsdesc[500]{Theory of computation~Logic and verification}

\keywords{first order logic, regular theories, decidability, effectively propositional logic}

\maketitle


\section{Introduction}

The classical `decision problem' can be stated in one of two ways that are duals of one another: the \emph{satisfiability problem} which asks if a given first order logic formula is satisfiable, or the \emph{validity problem} which asks if a given formula is valid. Since the validity problem is known to be {\re}-complete~\cite{god30,chu36,tur37} (satisfiability is {\core}-complete), recent research has focused on identifying fragments of first order logic for which the classical decision problem is decidable, and when decidable, determining the corresponding complexity. This remains a central challenge in computer science and logic, having inspired a wealth of results and techniques that have found applications in solving other problems. Most questions in this area have been resolved, with an excellent summary provided in~\cite{classical-book}.

In this paper we initiate the study of the classical decision problem for theories of first order formulae. 
In other words, the problem we investigate is, given a set of formulae (effectively presented), 
determine if the set is satisfiable, or determine if the set is valid --- a theory is \emph{satisfiable} if there is a model such that every formula in the theory is true in the model, while a theory is \emph{valid} if all formulae in the theory are true in every model.  (For theories these are distinct problems, in contrast to the situation for single formulae where $\phi$ is valid if and only if $\neg \phi$ is unsatisfiable.) The validity problem for theories can be equivalently expressed as asking (after negating each formula) if there is a model and a formula in the theory such that the formula is true in the model, and so can be viewed as a different form of satisfiability. In this paper we study both these forms: 
\emph{conjunctive satisfiability} asks if there is model $\model$ for which all formulae from a given set $T$ of first order formulae are true (and is denoted $\model \models \bigwedge T$), and 
\emph{disjunctive satisfiability} asks if there is a model in which some formula from a given set of formulae is true (and is denoted $\model \models \bigvee T$). 

The conjunctive and disjunctive satisfiability problems for first order theories arise naturally in a number of contexts. Classically, the entailment problem, which asks if a formula $\varphi$ is entailed by a theory $\Gamma$ (i.e., `does $\Gamma \models \varphi$?'), is equivalent to asking if the set $\Gamma' = \setpred{\neg\varphi \wedge \psi}{\psi \in \Gamma}$ is conjunctively satisfiable. In verification (or bug finding), one is interested in determining if some execution of a given program violates the requirements. Associating with each execution $\rho$ a formula $\varphi_\rho$ that expresses the condition that $\rho$ violates the specification, the problem of bug detection can be reduced to the problem of checking the disjunctive satisfiability of the set consisting of all the formulae associated with executions of the given program~\cite{Mathur2019}. In the context of compiler transformations, `executions' are programs themselves, and correctness of compiler transformations can also be reduced to disjunctive satisfiability~\cite{equalitysaturation2009}. 
In synthesis, the goal is often to identify a syntactic object (program, invariant, formula) from a set of possible programs/invariants/formulae that meets certain requirements. 
Once again, by associating with each possible syntactic entity a formula that characterizes the conditions under which it meets the requirements, synthesis can be reduced to disjunctive satisfiability as well~\cite{srivastava2013template,coherent-synthesis2020}. 
In \secref{motivating}, we discuss, in detail, the connection
between these and other
problems arising in logic and verification with the problem we consider here. 
Finally, in software testing, the goal is often to design a test suite that is diverse and 
exercises different paths of the program under test. 
For example, an effective test suite for an SMT solver must 
have formulae that are both satisfiable and unsatisfiable~\cite{SMT-testing-2020}.
Likewise, an effective test suite for testing database management systems
must contain SQL queries that 
return different types of answers~\cite{Rigger2020}.
Similarly, a test suite for a concurrent program must exercise different interleavings~\cite{fuzzingCCT2024}. 
By associating with every test, a formula  that characterizes its type
(satisfiable/unsatisfiable, part of code exercised), 
we can reduce the problem of checking if a given test suite is diverse
to the problem of checking disjunctive satisfiability of a theory.

There are different ways in which an infinite set of formulae could be presented effectively.
In this paper we consider regular first order theories. 
In other words, a theory $\Gamma$ is represented by a tree automaton that 
accepts the parse tree of a formula if and only if the formula belongs to $\Gamma$. 
For such theories, we ask when conjunctive and disjunctive satisfiability
of theories become decidable. 
Clearly, fragments of first order logic for which the classical decision problem (for single formulae) \cite{classical-book} is known to be undecidable,  will continue to be undecidable in this more general case as well. Hence, we primarily focus on decidable fragments of first order logic
and present the first results, leaving a more complete characterization to future work.


\subsection{Our Contributions}
\seclabel{contrib}

Our first contribution is to adapt the main tool for proving decidability for the classical 
decision problem, the so-called `finite model property', to the setting of regular theories.
We observe that the finite model property is not sufficient for decidability for regular 
theories, but we define strengthenings of it, which we call the weak and strong 
\emph{bounded model property} (see~\defref{bounded-model-property}), 
and show that these are sufficient to establish decidability.



\begin{restatable}{theorem}{boundedModelProperty}
\thmlabel{boundedmodel}
Let $\mathcal{C}$ be a class of formulae.
If $\mathcal{C}$ has the weak bounded model property, then the conjunctive satisfiability problem for regular 
$\mathcal{C}$-theories is decidable.  
If $\mathcal{C}$ has the strong bounded model property,
then the disjunctive satisfiability problem for regular 
$\mathcal{C}$-theories is also decidable.
\end{restatable}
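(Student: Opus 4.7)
The plan is to reduce both problems to bounded-size model checking, using the respective bounded model property to cap the size of structures that need to be considered. The central tool, independent of both problems, is a gadget: for any fixed finite structure $\mathcal{M}$ over the vocabulary, I would construct a tree automaton $\mathcal{B}_{\mathcal{M}}$ on formula parse trees that accepts exactly those formulae true in $\mathcal{M}$. This automaton propagates bottom-up the satisfaction set of each subformula (the set of assignments to its free variables over $\mathcal{M}$'s domain under which it holds); because $\mathcal{M}$ is finite, there are only finitely many such satisfaction sets, and the Boolean connectives, quantifiers (by projection), and atomic formulae all translate into standard set operations. This construction is generic and does not use any property of the class $\mathcal{C}$.

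For conjunctive satisfiability, suppose $\mathcal{C}$ has the weak bounded model property. Given a tree automaton $\mathcal{A}$ representing the theory $T$, I would extract from the weak BMP a computable bound $N$ on the minimum model size, enumerate all structures $\mathcal{M}$ of size at most $N$ over the (fixed, finite) vocabulary --- finitely many up to isomorphism --- and, for each, construct $\mathcal{B}_{\mathcal{M}}$ and test whether $L(\mathcal{A}) \subseteq L(\mathcal{B}_{\mathcal{M}})$ using standard tree-automata inclusion algorithms. The algorithm returns \textbf{yes} iff some $\mathcal{M}$ passes the test; this is correct because inclusion expresses precisely that $\mathcal{M}$ satisfies every formula in $T$, and the weak BMP guarantees that some small $\mathcal{M}$ witnesses satisfiability whenever $T$ is conjunctively satisfiable.

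For disjunctive satisfiability, the strong BMP is needed because we must bound models for \emph{individual} formulae in $T$ rather than a single model for $T$ as a whole, and the formulae in a regular theory have unbounded size; the strong form must therefore supply a model-size bound compatible with enumeration, uniformly across the class. Under this assumption, I would enumerate the corresponding finite family of candidate structures $\mathcal{M}$ and test nonemptiness of the intersection $L(\mathcal{A}) \cap L(\mathcal{B}_{\mathcal{M}})$, which is decidable for tree automata. Returning \textbf{yes} iff some intersection is nonempty is correct: $T$ is disjunctively satisfiable iff some $\varphi \in T$ has a model, and by the strong BMP we may restrict attention to models in our enumeration.

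The main obstacle I anticipate lies in the construction of $\mathcal{B}_{\mathcal{M}}$: although each satisfaction set is finite, a subformula with $k$ free variables has satisfaction set living in the power set of $|\mathcal{M}|^k$, and if $k$ is not bounded a priori along a parse tree the state space blows up. The standard fix is to observe that at each node one only needs to record the satisfaction set over the \emph{current subformula's} free variables, with quantifiers acting as projections; one must verify that this can be organised into a finite-state tree automaton compatible with however parse trees are encoded as ranked trees. With $\mathcal{B}_{\mathcal{M}}$ in hand, the rest is a routine combination of the bounded model property with the closure and decidability properties of regular tree languages.
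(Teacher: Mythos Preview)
Your treatment of the disjunctive case is correct and matches the paper: compute the bound $N = f(\text{sig})$ from the strong bounded model property, enumerate structures of size at most $N$, and test nonemptiness of $L(\mathcal{A}) \cap L(\mathcal{B}_{\mathcal{M}})$. Your construction of $\mathcal{B}_{\mathcal{M}}$ is essentially the paper's Lemma~4.1, and your worry about unbounded free variables is unfounded: a regular grammar mentions only finitely many variable names, so the automaton tracks assignments $\vars \to U$ over that fixed finite set.

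There is, however, a genuine gap in your conjunctive argument. The \emph{weak} bounded model property asserts only the \emph{existence} of a bounding function $f$; computability of $f$ is exactly what distinguishes the strong property. You therefore cannot ``extract from the weak BMP a computable bound $N$'' and enumerate structures up to that size --- you have silently upgraded weak to strong. Since the input theory can be over any finite signature, you would need to compute $f(\text{sig})$ uniformly, which the weak hypothesis does not provide.

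The paper's fix is a dovetailing argument that never needs to know $N$. On the satisfiable side, the weak BMP still guarantees that \emph{some} finite model of $\bigwedge T$ exists (of size at most $f(\text{sig})$, though we cannot compute this number); so simply enumerating \emph{all} finite structures and model-checking each against $T$ via your $\mathcal{B}_{\mathcal{M}}$ (inclusion $L(\mathcal{A}) \subseteq L(\mathcal{B}_{\mathcal{M}})$) will eventually succeed. On the unsatisfiable side, compactness gives a finite $T' \subseteq T$ with $\bigwedge T'$ unsatisfiable, and G{\"o}del completeness gives a finite proof of this; so enumerating finite initial segments of a recursive listing of $T$ together with bounded-length refutations will eventually succeed. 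Interleaving the two searches yields a total decision procedure. Your proposal is missing this second half entirely, and without it the algorithm does not terminate on unsatisfiable inputs.
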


Crucial to the proof of~\thmref{boundedmodel} 
is the decidability of the model checking problem (\corref{modelcheck}),
which asks to determine if 
$\model \models \bigwedge T$ or $\model \models \bigvee T$
for a given finite model $\model$.

We next consider two syntactic classes for which the classical decision problem is decidable: 
the Bernays-Sch{\"o}nfinkel
class of function-free formulae with quantifiers of the form $\exists \ldots \exists 
\forall \ldots \forall$ (also popularly known as the \emph{effectively propositional}, or EPR class), 
and the Gurevich class of purely existential formulae.  
We show that both conjunctive and disjunctive satisfiability problems
are undecidable for both of these classes.
To establish this, we show how each class here contains an undecidable subclass.
In the following, we use notations such as $[\exists^* \forall^3,(0,2),(0)]_=$
to denote fragments of first order logic, and are explained in detail in \secref{fologic}.

\begin{restatable}{theorem}{eprundec}
\thmlabel{eprundec}
    The disjunctive and conjunctive satisfiability problems for regular theories are undecidable for the class 
    $\eprweak$, and hence for the standard syntactic class $[\exists^* \forall^3,(0,2),(0)]_=$ (which is contained in the EPR class).
\end{restatable}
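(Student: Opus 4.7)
The plan is to establish undecidability by reducing classical undecidable problems to satisfiability of regular $\eprweak$-theories; since $\eprweak \subseteq [\exists^*\forall^3,(0,2),(0)]_=$, undecidability for the smaller class immediately lifts to the larger one. For the disjunctive case I would reduce from an $\re$-hard problem, for concreteness the halting problem for deterministic Minsky two-counter machines, and for the conjunctive case from its $\core$-hard dual (non-termination). In each direction I would construct, from an instance $I$, a tree automaton $\Aa_I$ whose accepted language $\Gamma_I$ consists entirely of $\eprweak$-formulas and is such that $I$ has the relevant property iff $\Gamma_I$ is disjunctively (resp.\ conjunctively) satisfiable.

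The encoding idea is to represent a run of length $t$ of the chosen machine as a finite relational structure whose domain provides witnesses for the $t$ configurations and their cells, linked by a fixed handful of binary ``next-time'' and ``next-position'' relations, with constants naming each tape symbol, each control state, and the initial cell. For the disjunctive reduction, let $\phi_t$ be the formula whose $\exists^*$ block introduces all variables needed to witness the $t$ configurations and whose $\forall^3$ conjunct enforces local transition discipline: functionality of each binary successor relation, uniqueness of labels, and correctness of one transition step. A Minsky-machine (or Turing-machine) step is a local condition on a cell, its time-successor and its position-neighbour, and is therefore expressible using exactly three universal variables. One then verifies that $\phi_t$ is satisfiable iff the machine admits a halting run of length $t$, so that disjunctive satisfiability of $\bigvee_t \phi_t$ coincides with halting and is $\re$-hard.

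For the conjunctive case I would reuse essentially the same per-step template but arrange $\psi_t$ to assert only the existence of a valid computation prefix of length at least $t$, without insisting on a halting state. Any single model of $\bigwedge_t\psi_t$ then encodes a run of unbounded length and, by determinism, an infinite run, so conjunctive satisfiability captures non-termination and is $\core$-hard. In both constructions, regularity of $\Gamma_I$ is immediate: each $\phi_t$ or $\psi_t$ is built by iterating one fixed template $t$ times inside the $\exists^*$ block (and, if needed, inside the $\forall^3$ block), and this iteration is recognised by a bottom-up tree automaton on parse trees whose height encodes $t$.

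The principal obstacle is squeezing the encoding into the tight syntactic straightjacket of $\eprweak$: only nullary and binary relation symbols, no non-constant function symbols, and a $\forall^3$ (rather than $\forall^*$) quantifier block. The absence of function symbols forces every tape cell to be introduced as an existential witness, so the $\exists^*$ block must bundle all positional witnesses together and relate them only via binary successor relations in the $\forall^3$ block. The budget of three universal variables is tight: it just suffices to state one-step local consistency, and every functionality/uniqueness axiom must be arranged to inspect at most three elements at a time. A secondary concern is to confirm that the family of parse trees is genuinely regular rather than merely context-free, which is ensured by letting the parse tree's own height carry the counter $t$, so that no unbounded auxiliary bookkeeping is needed.
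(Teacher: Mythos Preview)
Your proposal differs from the paper's: you reduce from Minsky-machine halting, while the paper reduces from the Post Correspondence Problem. In the paper's construction the two available binary relations $s_0,s_1$ are the left/right-child relations of a binary tree; the $\forall^3$ block enforces that each $s_i$ is a partial injective function and that the initial element is a root, and the regular grammar walks two pointers $x,y$ from the root along paths spelled out by the two sides of each chosen PCP tile, finishing with $x=y$ (disjunctive case) or $x\neq y$ (conjunctive case). The two-binary-relation budget is thus used exactly, and only four variable names are ever needed.

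As written, your sketch has a real gap at the level of the encoding. The class $[\exists^*\wedge\forall^3,(0,2),(0)]_=$ allows exactly two binary relations, no unary relations, and no non-constant functions; a regular theory moreover uses a fixed finite stock of variable \emph{names}, so ``introducing all variables needed to witness the $t$ configurations'' must mean nesting and re-quantifying a bounded set of names rather than naming $t$ fresh witnesses. Your description invokes ``a fixed handful'' of next-time and next-position relations together with constants for \emph{tape symbols}, but Minsky machines have no tape, and any scheme that reifies configurations as domain elements so that a $\forall^3$ block can verify ``correctness of one transition step'' requires at least one further relation to attach counter values to configurations---a relation you do not have. What does go through is to let the grammar carry the control state and step through the run itself, with the $\forall^3$ conjunct merely axiomatising a single binary successor $S$ (functional, injective, with a designated root having no $S$-predecessor), while the $\exists^*$ part re-quantifies two counter variables $c_A,c_B$ and moves each along $S$ (forwards for increment, backwards for decrement, equality with the root for zero-test). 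That is essentially the relational analogue of the paper's separate two-counter proof for the Gurevich class, not the configuration-element picture you describe; once rewritten this way your reduction is sound, but the PCP argument remains the cleaner witness that two binary relations and three universal variables are exactly what is needed.
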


\begin{restatable}{theorem}{exundec}
    \thmlabel{exundec}
    The disjunctive and conjunctive satisfiability problems for regular theories are undecidable for the class $[\exists^*, (0),(2)]_=$.
\end{restatable}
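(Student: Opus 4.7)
I would prove both undecidability results by reduction from the halting problem for deterministic Turing machines, exploiting the regular structure of the theory to enumerate candidate computation lengths while using the individual existential formulae to encode correctness of a single computation. For the disjunctive case I would reduce from halting: some formula in the theory is satisfiable iff $M$ halts. For the conjunctive case I would reduce from non-halting: the theory is simultaneously satisfiable iff $M$ does not halt, using compactness (all finite conjunctions are satisfiable iff $M$ admits computations of arbitrarily many non-halting steps, which for deterministic $M$ is equivalent to $M$ diverging). Note that individual formulae in $[\exists^*,(0),(2)]_=$ are decidable (satisfiability reduces to solvability of a system of equations and disequations between terms), so the undecidability must come entirely from the infinitely many formulae packaged by the regular recognizer.

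\textbf{Encoding.} Given a deterministic Turing machine $M$, I would construct a tree automaton over parse trees of formulae in $[\exists^*,(0),(2)]_=$ that accepts a family $\{\varphi_n\}_{n \in \nats}$. Each $\varphi_n$ is purely existential and uses the binary function symbols as ``pairing'' / ``cons'' constructors: a TM configuration is encoded as a term whose leaves are existentially quantified variables standing for tape symbols, blanks, and machine states. The formula $\varphi_n$ introduces existential variables $c_0, \dots, c_n$ for successive configurations, asserts that $c_0$ encodes the initial configuration and $c_n$ encodes an accepting (resp.\ non-halting) configuration, and constrains each consecutive pair $(c_i,c_{i+1})$ to match a TM transition via carefully chosen term equalities; pairwise distinctness of symbols, states, and positional markers is enforced by disequalities between the corresponding existential variables. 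The tree automaton generates $\varphi_n$ by a straightforward length-$n$ unrolling of this construction, which is clearly regular.

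\textbf{Main obstacle.} The central technical challenge is that in arbitrary models of $[\exists^*,(0),(2)]_=$ the binary function symbols need not be injective---so term equalities used to chain consecutive configurations could hold spuriously under unintended interpretations, potentially making $\varphi_n$ satisfiable even without a genuine halting trace. Since we are in the purely existential fragment, we cannot axiomatize injectivity universally. I would address this by engineering the encoding with sufficiently many disequality constraints between intermediate sub-terms so that any satisfying model is forced to realize a Herbrand-like structure on the portion of the domain that encodes the TM trace. Proving this rigidity---that every model of $\varphi_n$ genuinely witnesses a halting (or non-halting-prefix) computation of $M$ of length $n$---is the most delicate part of the argument, and likely requires exploiting both the binary-function layering and carefully placed disequalities in tandem so that semantic collapse at one layer forces distinguishable witnesses to coincide at another, producing a contradiction.
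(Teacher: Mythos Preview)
Your proposal rests on a misreading of the class notation: in $[\exists^*,(0),(2)]_=$, the arity sequence $(2)$ in the function slot means \emph{two functions of arity one}, not a binary function. The paper is explicit about this (the section is titled ``Existential formulae with unary functions'' and the proof uses unary symbols $s$ and $p$). Your entire encoding, which relies on binary ``pairing''/``cons'' constructors to assemble Turing-machine configurations as structured terms, is therefore not available in this class. With only unary functions, terms are just compositions applied to a leaf, and there is no direct way to package a tape, head position, and control state into a single term in the manner you describe.

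Even setting the signature issue aside, you correctly identify the central obstacle---that injectivity cannot be axiomatised existentially---but you do not resolve it; you promise ``sufficiently many disequality constraints'' and concede that the rigidity proof is ``the most delicate part.'' The paper's solution is quite different and sidesteps both difficulties at once: it reduces from \emph{2-counter Minsky machines} rather than Turing machines, so the only data to track are two natural-number counters. The unary functions $s,p$ act as successor and predecessor, and the key device is a small formula $\psi_{s,p}$ asserted at each step imposing $p(s(y))=y$, $s(p(y))=y$, and $s(y)\neq y$ on the current counter values. Threaded through the regular unrolling, these local constraints force the iterates $c_0,s(c_0),s^2(c_0),\ldots$ (as far as they are reached) to be pairwise distinct and make $s,p$ mutual inverses on them, so the two variables genuinely behave as counters without any global disequality bookkeeping. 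For the conjunctive case the paper does not use your compactness-based ``arbitrarily long non-halting prefix'' idea either; instead each formula $\varphi_\rho$ encodes ``run $\rho$ is infeasible or does not reach the halting state,'' and the natural numbers with the standard successor and predecessor serve as a single model for the whole theory whenever the machine fails to halt.
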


We first show that some fragments of the above classes nevertheless admit decidable 
satisfiability of theories.  
Firstly, the classes of purely existential and purely universal function-free formulae 
admit decidable for both the conjunctive and the disjunctive satisfiability problems for theories.
We remark that the former is a subclass of both the EPR and Gurevich class, 
while the latter is a subclass of the EPR class.

\begin{restatable}{theorem}{funcfree}
    \thmlabel{funcfree}
    The disjunctive and conjunctive satisfiability problems for regular theories are decidable for the classes $[\exists^*,\allc,(0)]_=$ 
    and $[\forall^*,\allc,(0)]_=$.
\end{restatable}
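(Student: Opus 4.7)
The plan is to apply~\thmref{boundedmodel}: once we establish the strong bounded model property for each of the two classes, decidability of both the conjunctive and the disjunctive satisfiability problem for regular theories in that class will follow immediately.

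For the universal class $[\forall^*,\allc,(0)]_=$, my first step is to invoke the classical fact that purely universal formulae are preserved under substructures. If $\model \models \forall \bar x.\phi(\bar x)$ with $\phi$ quantifier-free, then the restriction of $\model$ to any substructure containing the interpretations of all constants still satisfies the sentence. Because the vocabulary has no non-constant function symbols, the minimal such substructure has size at most $\max(1,|\consts|)$, a bound depending only on the vocabulary and in particular independent of any particular formula. Crucially, the same small substructure simultaneously witnesses every formula of a conjunctively satisfiable theory, which yields a uniform bound sufficient for the strong bounded model property.

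For the existential class $[\exists^*,\allc,(0)]_=$, I would use the standard witness-extraction argument: given $\model \models \exists x_1\cdots\exists x_k.\phi(\bar x)$ with witnesses $a_1,\dots,a_k \in \model$, the substructure on $\{a_1,\dots,a_k\}\cup\{c^{\model} : c\in\consts\}$ also satisfies the sentence, because the vocabulary is function-free and $\phi$ is quantifier-free. This gives a bound of $k + |\consts|$ on the size of a witnessing model, a quantity computable directly from the formula---indeed, from the number of leading existentials, which the tree automaton for the theory can inspect syntactically.

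The main subtlety I anticipate is that the existential bound is formula-dependent rather than a fixed constant, whereas the universal bound is vocabulary-uniform. I expect~\defref{bounded-model-property} to be formulated so that such computable, per-formula bounds suffice, with~\corref{modelcheck} then converting them into a decision procedure that enumerates finite structures up to the relevant size and checks satisfaction of (some/every) formula of the theory using the tree automaton. The structural content of the argument in each case is classical (preservation under substructures, and witness extraction); the remaining task is a careful matching of the bounds we obtain against the precise definition of~\defref{bounded-model-property}.
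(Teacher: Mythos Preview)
Your argument for the universal class $[\forall^*,\allc,(0)]_=$ is correct and essentially matches the paper's: both amount to restricting to the Herbrand model, which for a function-free signature has at most $|\consts|$ elements (or a single element if $\consts = \emptyset$).

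For the existential class, however, there is a genuine gap. The bound you obtain, $k + |\consts|$ where $k$ is the number of quantifier occurrences, is a function of the \emph{formula}, not of its signature. \defref{bounded-model-property} requires the bound $f$ to depend only on $\text{sig}(\phi)$---the vocabulary together with the finite set of variable \emph{names} occurring in $\phi$. For a regular theory this set of names is fixed (it is the finite set $\vars$ used by the grammar), but since the paper works with bounded-variable logics in which a variable may be quantified many times, the number of quantifier \emph{occurrences} can be unbounded across $T$. Your bound therefore does not establish the (strong or weak) bounded model property, and your hope that the definition accommodates per-formula bounds is not borne out: inspecting the proof of \thmref{boundedmodel}, decidability of $\bigvee T$ relies on having a \emph{single} bound $f(\vocab\cup\vars)$ so that only finitely many models need to be model-checked against $T$; a per-formula computable bound yields only semi-decidability of disjunctive satisfiability, which holds trivially anyway.

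The paper closes this gap by invoking the $k$-Gaifman theory from~\cite{rosen1995preservation}: for a function-free signature with $k$ variable names there is a computable, satisfiable finite theory $\Gamma_k$ (depending only on the signature) such that every satisfiable existential formula over that signature is entailed by $\Gamma_k$. Any finite model of $\Gamma_k$ then satisfies all such formulae simultaneously, and its size is the uniform bound required for the strong bounded model property. The key point---which your witness-extraction argument cannot reach---is that with only $k$ variable names, an existential formula cannot force more than boundedly many elements to be distinct, no matter how many nested quantifiers it contains.
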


Secondly, the class of quantifier-free formulae with functions (a subclass of the Gurevich 
class) admits decidable conjunctive satisfiability.  We believe that disjunctive satisfiability
is also decidable for this class, but leave this open for future work.

\begin{restatable}{theorem}{quantfreecon}
\thmlabel{quantfreecon}
The conjunctive satisfiability problem for regular theories is decidable for the class $[\nonec,\allc,\allc]_=$ of quantifier-free formulae with equality.
\end{restatable}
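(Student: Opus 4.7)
My plan is to invoke \thmref{boundedmodel}: it suffices to show that the class of quantifier-free formulae with equality satisfies the weak bounded model property. The immediate difficulty, which distinguishes this result from \thmref{funcfree}, is that quantifier-free theories with function symbols do not admit finite models in general --- already a simple regular theory such as $\{f^n(c) \neq c : n \geq 1\}$ is satisfiable only over an infinite Herbrand universe. So the weak bounded model property will have to be witnessed by a finitely-presentable (e.g.\ regular) model rather than a finite one in the na\"ive sense.

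My first step is to reduce to the Herbrand setting: by a standard Herbrand-style argument, conjunctive satisfiability of a quantifier-free theory $T$ over arbitrary structures is equivalent to the existence of a congruence $\sim$ on the closed terms of the signature together with a compatible truth-value assignment $v$ to ground atoms that makes every $\phi \in T$ true. The key local observation is that whether a quantifier-free $\phi$ is satisfied under $(\sim,v)$ depends only on the restriction of $(\sim,v)$ to the subterms and atomic subformulae of $\phi$. Using this, I would build a tree automaton $\mathcal{B}$ over an alphabet of formulae annotated with congruence-class labels on subterms and truth-value labels on atoms, such that $\mathcal{B}$ accepts an annotated $\phi$ iff the annotation is internally coherent (equal labels force equal truth values for equalities; function application respects the labelling) and makes the Boolean skeleton of $\phi$ evaluate to true. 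The theory $T$, given by automaton $\mathcal{A}$, is then conjunctively satisfiable iff there is a single global labelling of the whole Herbrand universe whose projection onto each $\phi \in L(\mathcal{A})$ lies in $L(\mathcal{B})$.

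The main obstacle is bounding the number of distinct labels needed in such a global labelling. My plan is a Myhill--Nerode style argument relative to $\mathcal{A}$: two closed terms that are indistinguishable by $\mathcal{A}$ in every context where they occur as subterms of accepted formulae can safely be assigned the same label, since merging them cannot change whether any $\phi \in L(\mathcal{A})$ evaluates to true under the induced valuation. This caps the number of labels by a function of the number of states of $\mathcal{A}$, which should suffice to establish the weak bounded model property for this fragment. Decidability then follows by enumerating candidate bounded labellings (represented as bottom-up deterministic tree automata over the Herbrand universe), folding each such labelling into an automaton recognising the formulae it satisfies, and checking inclusion of $L(\mathcal{A})$ in that language via standard tree-automata operations, as guaranteed by \thmref{boundedmodel}. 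The subtle point I would have to be careful about is that the global labelling need not coincide with the local labels proposed by $\mathcal{B}$ for two distinct formulae sharing subterms; showing that such local/global reconciliation can always be achieved within the bounded number of labels is where the bulk of the technical work will lie.
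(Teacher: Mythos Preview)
Your plan to invoke \thmref{boundedmodel} cannot work: the class $[\nonec,\allc,\allc]_=$ does not have the weak bounded model property of \defref{bounded-model-property}.  That definition asks for a bound on the size of a \emph{finite} model depending only on the signature, and you essentially give the obstruction yourself: over a fixed signature with one constant $c$ and one unary function $f$, the single quantifier-free formula $\bigwedge_{0\le i<j\le N} f^i(c)\neq f^j(c)$ is satisfiable but needs a model of size at least $N{+}1$.  Your proposal to reinterpret ``bounded'' as ``finitely presentable (regular)'' is not a use of \thmref{boundedmodel} but a replacement of it; you would need, at minimum, an analogue of \corref{modelcheck} for model-checking a regular theory against an infinite regular model, and a new argument in place of the compactness/completeness half of the proof of \thmref{boundedmodel}.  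Neither is supplied.

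The Myhill--Nerode step is where the real content would have to be, and it is not an argument yet.  Merging two terms that reach the same automaton states forces, by congruence, merging all their images under function symbols; the effect on the truth of formulae in $T$ is global.  Your justification (``merging them cannot change whether any $\phi\in L(\mathcal{A})$ evaluates to true'') is exactly the statement to be proved, and you acknowledge in the final sentence that this ``local/global reconciliation'' is the bulk of the work.  For contrast, the paper avoids model-size bounds entirely: it guesses, for each atomic-proposition state of the automaton, whether the atoms produced there are to be uniformly true or uniformly false, then checks (i) that the induced set of equalities and disequalities is satisfiable---using that the congruence closure of regular term sets under regular equalities is again regular and computable---and (ii) that the guess propositionally forces every formula of $T$ to hold, a finite-model check via \corref{modelcheck}.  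Both checks are decidable and the number of guesses is finite.
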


Finally, beyond the syntactic classes outlined above, 
we also identify a semantic class of existentially quantified formulae 
with functions for which the disjunctive satisfiability problem is decidable.
This class of formulae, which we call \emph{coherent formulae},
is inspired by recent work that identifies a subclass of uninterpreted programs
called \emph{coherent programs}~\cite{Mathur2019,Mathur2020axioms,Mathur2020heapmanipulating,coherent-synthesis2020} 
for which program verification becomes decidable.
Programs in this class are such that all their program paths (or executions)
retain sufficient information `in-scope'.
In turn, this ensures that the necessary task of computing the congruence closure, 
induced by equality assumptions observed in each execution, 
can be computed on-the-fly using a constant space streaming algorithm.
Such a streaming algorithm can then be translated to an automata-theoretic
decision procedure for verification of programs in this class (coherent programs).
In this work, we generalize the essence of this streaming algorithm,
and extend it to the case of arbitrary existential formulae 
(represented as trees).
This allows us to identify the semantic class of \emph{coherent formulae}
whose satisfiability can be checked soundly using a
finite-state decision procedure.

\begin{restatable}{theorem}{coherentDec}
\thmlabel{coherentDec}
    Disjunctive satisfiability is decidable for regular coherent existential theories.
\end{restatable}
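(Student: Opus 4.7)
The plan is to build a tree automaton $\Aa_{\text{sat}}$ over parse trees of existential formulae such that on coherent inputs $\Aa_{\text{sat}}$ accepts exactly the satisfiable ones. Given such an automaton, together with the tree automaton $\Aa_T$ recognising the input regular theory (whose language, by hypothesis, consists entirely of coherent existential formulae), disjunctive satisfiability reduces immediately to non-emptiness of the product automaton $\Aa_{\text{sat}} \times \Aa_T$, which is decidable.

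The construction of $\Aa_{\text{sat}}$ generalizes the streaming congruence-closure algorithm for coherent programs of \cite{Mathur2019,coherent-synthesis2020} from sequential execution traces to trees carrying disjunction and nested existentials. Processing the formula tree bottom-up, the automaton maintains at each subtree a finite summary of the equalities and disequalities accumulated so far, represented via equivalence classes on an in-scope set of shallow terms together with the functional structure needed to propagate congruences. The coherence property is exactly what guarantees that terms which exit scope can be forgotten without losing any congruence that could later force a contradiction, so that the summary lives in a space bounded by a constant independent of the formula. Nondeterminism handles the branching structure: at a $\vee$-node the automaton guesses which disjunct to use, and at an $\exists$-node it guesses the equivalence class into which the witness should fall among the in-scope classes of the subtree below. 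A contradiction (an asserted $t \neq t'$ whose sides have been merged) causes the run to reject.

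The main obstacle will be making this bounded abstraction precise and proving soundness and completeness. Soundness --- that every accepting run yields a model --- should follow from a standard Herbrand-style construction, reading off a term model from the final equivalence relation and the nondeterministic witness choices made along the run. Completeness --- that every satisfiable coherent existential formula admits an accepting run --- is the technical heart of the argument and is where coherence is essential: one must show that for such formulae the streaming summary contains enough information to mirror any "real" congruence closure that would witness satisfiability, so that no spurious contradiction is forced by prematurely discarded terms. I expect the proof to proceed by structural induction on the formula tree, lifting the per-step invariants of the coherent-program streaming theorem to the richer setting in which subtrees are combined by $\wedge$, $\vee$, and $\exists$, and with the inductive hypothesis maintained in terms of a correspondence between the bounded state at a subtree and the set of \emph{genuine} congruence closures its satisfying assignments can realise.
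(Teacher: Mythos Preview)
Your high-level plan---build a tree automaton that, restricted to coherent inputs, accepts exactly the satisfiable formulae, and then intersect with the theory automaton and test emptiness---is the paper's route. The paper's automaton checks \emph{local consistency}: nondeterministically annotate each node with a congruence on $\shallowterms$ and verify simple compatibility conditions between adjacent nodes; this is a regular property, and Theorem~\ref{thm:consat} says that on coherent formulae it coincides with satisfiability.

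Where your analysis goes astray is in identifying which direction is hard. Completeness (satisfiable $\Rightarrow$ accept) is immediate: a model induces a congruence at every node which is locally consistent, and discarding out-of-scope terms cannot fabricate an equality---congruence closure is sound---so no ``spurious contradiction'' can be forced by forgetting. The genuine difficulty is \emph{soundness} (accept $\Rightarrow$ satisfiable): local consistency records only equalities among shallow terms at each node, and nothing a priori guarantees these snapshots glue into a global model respecting function application. Example~\ref{ex:local} is locally consistent yet unsatisfiable precisely because a value is computed, forgotten, and recomputed inconsistently on two branches, and coherence is exactly the condition ruling this out. The paper does not prove soundness by structural induction (which is awkward here, since an equality deep in one conjunct must be visible to a disequality check in a remote sibling); instead it introduces two global equivalences $\sim\subseteq\sim'$ on $\nodes(\phi)\times\st$, builds a model from the $\sim'$-classes (Proposition~\ref{prop:simprime}), and shows that the absence of forgetful paths forces $\sim=\sim'$ (Lemma~\ref{lem:badpaths}), so that the locally tracked relation already suffices.
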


\noindent
{\bf Remark}. This is an extended version of the paper accepted to appear in the proceedings of POPL 2025~\cite{FOTheories2025}.


\section{Regular Decision Problem in Practice}
\seclabel{motivating}

Although the investigation in this work is primarily theoretical,
we discuss and illustrate natural connections between the problem 
of satisfiability checking for regular theories
and several applications in program analysis, verification and synthesis.


\begin{figure}[t]
\begin{subfigure}[b]{0.4\textwidth}
\scriptsize
\begin{align*}
\begin{array}{rcl}
\expr\!\!\!\!\!&:=&\!\!\!\!\!x \grmsep c \grmsep f(\expr, \ldots, \expr)
\\
\cond\!\!\!\!\!&:=&\!\!\!\!\!\expr=\expr \grmsep R(\expr, \ldots, \expr) \grmsep \neg \cond \grmsep \cond \lor \cond \\
\stmt\!\!\!\!\!&:=&\!\!\!\!\!\skipk \grmsep x \gets \expr \grmsep \ifk \, \cond \, \thenk \, \stmt \, \elsek \, \stmt \\
&& \grmsep \assertk(\cond) \grmsep \whilek \, \cond \, \stmt \grmsep \stmt;\stmt \\
\\
&& x \in \mathcal{X}, c \in \mathcal{C}, f \in \mathcal{F}, R \in \mathcal{R}
\end{array}
\end{align*}
\caption{Grammar for imperative programs}
\figlabel{grammar}
\end{subfigure}
\hfill
\begin{subfigure}[b]{0.45\textwidth}
\begin{center}
\begin{minted}[breaklines=true,xleftmargin=10pt,fontsize=\small]{C}
x:=0;
while (x < 10){
	x := x+1
}
@post: assert(x==10)
\end{minted}
\end{center}
\caption{Example program $P$ over the grammar}
\figlabel{prog}
\end{subfigure}
\caption{Imperative programs}
\figlabel{imp}
\end{figure}

\myparagraph{Algorithmic verification of imperative programs}{
	Consider the verification problem for imperative programs
	with loops, operating over a potentially unbounded data domain.
	\figref{grammar} shows a simple grammar consisting of usual constructs
	like \ifk-\thenk-\elsek, \whilek\ and sequencing,
	and whose expressions comprise of a finite set of variables $\mathcal{X}$, constants $\mathcal{C}$,
	function symbols $f \in \mathcal{F}$ and relation symbols $\mathcal{R}$.
	\figref{prog} shows an example program over this grammar.
	While undecidable in general, a popular paradigm in algorithmic verification
	resorts to a language-theoretic view of this problem,
	and has notably led
	to decidability results~\cite{Mathur2019,AlurCerny2011,RegularModelChecking2000,Mathur2020heapmanipulating,Mathur2020axioms}, as well as state-of-the-art verification tools~\cite{traceAbstraction2009} and semi-decision procedures~\cite{Farzan2020}.
	Here, one models a program as a set (or language) of words
	representing program paths, 
	which is often a regular set over some  fixed vocabulary.
	Consider, for example, the program $P$ in \figref{prog}.
	The set of the program paths of $P$
	can be characterized by the following regular language:
	\[\mathsf{Paths}_P = \progstmt{\cd{x:=0}} \cdot \Big(\progstmt{\kw{assume}\cd{(x < 10)}}\cdot \progstmt{\cd{x := x+1}}\Big)^*.\]
	Here the alphabet 
	$\Pi = \set{\progstmt{\cd{x:=0}}, \progstmt{\kw{assume}\cd{(x < 10)}}, \progstmt{\cd{x := x+1}}}$ consists of individual statements, which are either
	assignments (derived directly from the program syntax) or assume statements
	(compiled down from conditionals and loop guards).
	Besides, such a regular language description of the set of program paths
	can be effectively obtained.
	Next, one asks if the subset of the \emph{feasible} runs
	in the set intersects with the language $L_{\sf bug}$ of runs 
	witnessing a buggy behavior.
	Often $L_{\sf bug}$ is also a regular set.
	However, the set of feasible runs is, in general, an undecidable set.
	When a class $\mathcal{P}$ of programs is such that
	for every program in the class $\mathcal{P}$, 
	either the set of feasible runs of the program is
	regular, or more generally, 
	when the emptiness of this set, intersected with a regular set (namely $L_{\sf bug}$),
	can be checked in a decidable manner, the verification problem for 
	$\mathcal{P}$ also becomes decidable~\cite{Mathur2019,AlurCerny2011,Govind2021}.
	
	An alternative, but still language-theoretic, 
	formulation of the verification problem is through sets of \emph{formulae}.
	For each program path (or run) $\sigma$ in a program $P$,
	one can construct a first-order logic 
	formula $\varphi_{\sigma}$ that symbolically encodes 
	the question --- 
	`is there an initial valuation of the variables that makes $\sigma$ a feasible path'?
	Representing each formula as a tree, the set of formulae $\mathcal{T}_P$ for $P$ 
	can be shown to be tree-regular, when $P$ consists of simple imperative
	constructs as in~\figref{grammar}.
	For the program $P$ in \figref{prog}, the set 
	$\mathcal{T}_P$ of formula trees of $P$ is given by the parse trees
	of the following context-free grammar:
	\begin{align*}
	\begin{array}{rcl}
	S &\rightarrow& \cd{x} = 0 \land T \\
	T &\rightarrow& \cd{x} = 10 \grmsep \exists \cd{x'}\, .\, \cd{x'} = \cd{x} \land (\exists \cd{x}\, .\, \cd{x} = \cd{x'}+1 \land T) 
	\end{array}
	\end{align*}
	The set of formulae described through the grammar above
	essentially captures, for each run $\sigma$ of $P$,
	the symbolic constraint that encodes the feasibility of $\sigma$.
	This means, the program verification question 
	translates to the following question about a set of formulae --- 
	\emph{is there a formula $\varphi$ in $\mathcal{T}_P$ such 
	that $\varphi \land \psi_{\text{bug}}$ is satisfiable?}
	Here $\psi_{\text{bug}}$ is a formula that characterizes a buggy behavior.
	Clearly, an algorithm that solves the regular decision problem
	can be used to solve the program verification question, and hence decidability results for the former problem can be 
	translated to the latter.
	This is a central motivation of the investigation we carry out in this work.
}


\myparagraph{Program synthesis and unrealizability}{
	The syntax-guided synthesis (SyGuS) problem~\cite{sygus2013} 
	asks to generate a program from a given 
	grammar $\mathcal{G}$ (typically generating a class of loop-free programs)
	that satisfies a given specification $\varphi$,
	often specified as a list of pairs of input and output examples.
	The set of terms (or programs) accepted by $\mathcal{G}$
	is tree-regular and further the specification can be conjuncted with formulae 
	characterizing the semantics of these programs.
	This gives a natural translation of the SyGuS to the 
	regular decision problem --- the SyGuS problem admits a solution iff the tree regular set of formulae thus constructed contains a satisfiable formula.
	Notably, when the set of tree accepted by $\mathcal{G}$ is infinite,
	the realizability problem `is there a valid program?' becomes
	undecidable in general~\cite{coherent-synthesis2020,caulfield2016whatsdecidablesyntaxguidedsynthesis}, 
	but nevertheless decidable fragments of the regular decision problem can yield analogous decidability results
	for the (un)realizability problem as well~\cite{Hu-unrealizability-2019,hu-unrealizability-2020}.
}

\myparagraph{Logic Learning}{
	An emerging trend in the area of safe machine learning
	is to study algorithms for approximating the behavior of a classifier
	with a more systematic representation, and in particular, 
	a logical formula~\cite{Krogmeier2022,Krogmeier2023,Koenig2020}.
	Here, one is given a finite set
	of first-order structures (over a fixed vocabulary) marked
	positively or negatively, and
	also a logic $\mathcal{L}$ (either full FOL, or a sub-class defined syntactically using a grammar),
	and the task is to determine
	if there is a formula $\varphi \in \mathcal{L}$
	that evaluates to true over the positive structures, and false on the negative structures.
	When $\mathcal{L}$  is presented as a grammar, then the set of formulae
	that evaluate to true on a given positive example becomes a regular set,
	and a solution to the regular decision problem for this set also solves
	the realizability of the learning problem~\cite{Krogmeier2023}.
	Further, when the algorithm for solving the decision problem
	is also automata-theoretic, then the synthesis problem 
	(i.e., output a formula, if one exists) can also be solved effectively,
	and for multiple positive and negative examples at once.
}


\section{Preliminaries}

\subsection{First order logic and its fragments}\label{sec:fologic}

A first order (FO) vocabulary is a triple $\vocab = (\consts, \funcs, \rels)$,
where $\consts$ is a set of constant symbols, and
$\funcs = \uplus \funcs_i$ and $\rels = \uplus \rels_i$ 
are fixed sets of 
function and relation symbols, indexed by their arities.
$\vocab$ is finite if each of $\consts, \funcs, \rels$ is finite.
For a function symbol $f \in \funcs$ and a relation symbol $R \in \rels$, 
we denote by $\arity{f}$ and $\arity{R}$ the arities of $f$ and $R$ respectively.
The set of terms
that can be formed using a set of variables $\vars$ and symbols from $\vocab$, 
denoted $\terms{\vocab, \vars}$, is given by the following BNF grammar 
(here $x \in \vars, c \in \consts$ and $f \in \funcs$ is a function symbol of appropriate arity):
\begin{align*}
\begin{array}{rcl}
t &::=& c \; | \; x \; | \; f(t, \ldots, t)
\end{array}
\end{align*}
The set of formulae is constructed using equality atoms (`$t_1 = t_2$'),
relational atoms (`$R(t_1, \ldots, t_k)$'), or, 
inductively using existential/universal quantification
or using propositional connectives such as `$\neg$', `$\land$' and `$\lor$'.
The BNF grammar for the set of FO formulae is:
\begin{align*}
\begin{array}{rcl}
\varphi &::=& t = t \; | \; R(t, \ldots, t) \; |\; \neg \varphi \; |\; \varphi \land \varphi \; |\; \varphi \lor \varphi \; |\;  \exists x \cdot \varphi \; |\; \forall x \cdot \varphi \\
\end{array}
\end{align*}
Here, $R \in \rels$ is a relation symbol with appropriate arity. 
We will sometimes refer to the vocabulary 
of a formula $\varphi$, together with the finite
set of variables in it, as its \emph{signature}, $\text{sig}(\varphi)$.


The semantics of first order logic over vocabulary $\vocab$
is given using a first order structure (or model), which is a tuple
$\model = (U, \interp{})$, where $U$ is the universe and
$\interp{}$ is the interpretation that maps every constant symbol in $c$
to an element $\interp{c} \in U$, every function symbol $f$ of arity $r$
to a function $\interp{f} \in [U^r \to U]$
and every relation symbol of 
arity\footnote{We also allow $0$-ary relations. 
In this case, the interpretation $\interp{}$ maps such a relation to either $\tru$ or $\fls$} 
$r$ to a subset of $\interp{R} \subseteq U^r$.
An assignment $\asgn : \vars \to U$ maps variables to elements of the universe.
The evaluation of a first order logic formula $\varphi$ 
over model $\model$ and an assignment $\asgn$ 
can be described inductively by means of 
a satisfaction relation ($\model, \asgn \models \varphi$), is standard, and is skipped. 
An FO formula $\varphi$ is said to be satisfiable if there is an FO structure
$\model$ and an assignment $\asgn$ such that $\model, \asgn \models \varphi$, and valid
if $\model,\asgn \models \varphi$ holds for every FO structure $\model$ and assignment $\asgn$.
When $\varphi$ is a sentence (i.e., it contains no free variables),
then we simply write $\model \models \varphi$.
The size of the model $\model = (U, \interp{})$ is the cardinality $|U|$.
A class of models is said to be bounded by $k \in \nats$, if
each of their sizes is bounded by $k$.

\myparagraph{Syntactic fragments of first order logic}{
The decision problem for first-order logic is undecidable.
Thus, in order 
to identify decidable cases, it is necessary to restrict attention to fragments of first order logic.  
In this work, we will mainly consider `prefix-vocabulary classes', which 
classify formulae by (1) the number and arity of function symbols in the signature, (2) the number and arity 
of relation symbols, (3) the presence or absence of equality and (4) the permitted order of quantifiers.

Briefly (see \cite{classical-book} Definition 1.3.1 for full details), such a class is written as $[\Pi,p,f]$ 
(or $[\Pi,p,f]_=$ if equality is present), where $p$ and $f$ are 
`arity sequences' from $(\nats^+\cup \{\omega\})^\nats$ whose $k^{\text{th}}$ entry denotes the available number of predicates 
(respectively functions) of arity $k$, and $\Pi$ is a set of quantifier prefixes, i.e. strings over $\{\forall, \exists\}$ 
denoting the allowed order of quantifiers.
We say that $p$ or $f$ is `standard' if whenever infinitely many symbols of arity $k$ or above are available, 
the $k^{\text{th}}$ entry of $p$ (respectively $f$) is $\omega$ (which makes sense since we can always use a 
higher arity symbol as a lower arity one).  Note that if $p$ (respectively $f$) is standard then unless it is 
equal to $\omega^\nats$, which we denote `$\allc$', then it contains only finitely many non-zero entries, and 
we omit the trailing zeroes.
We say that $\Pi$ is standard if it is either the set of all prefixes 
(denoted $\allc$) or a set which can be expressed as a finite string in the letters $\{\forall, \exists, \forall^*, \exists^*\}$ 
(and if pairs like $\forall^* \forall$ have been absorbed as $\forall^*$).  The class $[\Pi,p,f]$ or $[\Pi,p,f]_=$ is standard 
if $\Pi$, $p$ and $f$ are standard.\footnote{When considering single formulae, it is usual 
to assume that all variables are bound and that there are no constants, as these are equivalent
to existentially bound variables.  This is not the case for theories and so we will not make 
this assumption.}

Since the set of standard classes with containment is a well-quasi-ordering~\cite{gurevich1969wqo}, there is guaranteed 
to be a classification of finitely many classes as decidable or undecidable, such that every standard class either contains 
an undecidable class (and hence is undecidable) or is contained in a decidable class (and hence is decidable).  This 
important feat was accomplished by the work of many researchers over many years, resulting in a classification into 
16 undecidable and 6 decidable classes, summarised on p.11 of~\cite{classical-book}.  Since in a class for 
which the decision problem is undecidable for single formulae, it will be \emph{a fortiori} undecidable
for theories, we will generally be interested in the decidable syntactic classes and subclasses thereof.

\myparagraph{Normal Forms}{
Note that when classifying formulae in this way it is obviously necessary to restrict to formulae in prenex form---that is, 
which have all their quantifiers brought to the front---since a formula of the form $\neg (\exists x.\varphi)$ is equivalent to 
$\forall x.\neg \varphi$, and so should be thought of as being universal rather than existential.  
Since we work with bounded variable logics where a variable may be quantified many times, we adopt the (slightly weaker) requirement that formulae be in \emph{negation normal form} (NNF).
We say that a 
formula $\varphi$ is `in NNF with quantifier prefix $s$, for $s\in \{\forall, \exists\}^*$, if (1) negations appear only in atomic formulae, 
and (2) $s$ is a topological sort of the quantifiers appearing in $\varphi$ (i.e. a linear ordering compatible with the partial order from the parse tree of $\varphi$).  
We say that a formula is `in NNF' if it is in NNF with quantifier prefix $s$ for some $s$.
Clearly, a formula in NNF with 
quantifier prefix $s$ can be put in prenex form with prefix sequence $s$. 
In \propref{prenex} we show that any regular theory can be transformed into another one with only NNF formulae.
}

}

\subsection{The decision problem for theories}

In this work, we investigate decision problems for \emph{theories}.
An FO \emph{theory} $T$ is simply a (finite or infinite) set of FO formulae.
For a theory $T$, we say that `$\bigwedge T$ is satisfiable' if
there is a model $\model$ such that
for every $\varphi \in T$, we have $\model \models \varphi$, i.e.,
 `$\model \models \bigwedge T$'.  
 We often refer to this problem as the 
\emph{conjunctive satsifiability} problem.  
We say that `$\bigvee T$ is satisfiable' if there is a model 
$\model$ such that there exists $\varphi \in T$ for which 
$\model \models \varphi$, i.e., `$\model \models \bigvee T$',
or equivalently, if there exists $\varphi \in T$ such that 
$\varphi$ is satisfiable.
We often refer to this problem as the \emph{disjunctive satisfiability} problem;
note that this is dual to (conjunctive) validity, 
since $\bigwedge T$ is valid if and only if $\bigvee\{\neg\varphi | \varphi \in T\}$ 
is unsatisfiable.


In order to obtain an algorithmic problem, 
we need that our (infinite) FO theories are finitely presented.
A naive finite presentation may be to specify a Turing machine and say 
that the theory will be the set of formulae accepted by the machine.  However, this model is  
so strong that everything is trivially undecidable: 
given a machine $M$, we can form the machine that accepts the
formula $\top$ padded to size $n$ if and only if $M$ halts in $n$ steps and the disjunction of the 
theory accepted by this machine will be satisfiable if and only if $M$ eventually halts; 
a similar construction can be used for the case of conjunctive satisfiability 
with a machine that outputs $\bot$.  
The conjunctive and 
disjunctive satisfiability problems are thus undecidable even 
when restricted to formulae containing only 
the literals $\top$ and $\bot$, plus a simple padding using conjunction with many copies of 
$\top$!
We will therefore consider a much weaker model for presenting our theories.
In particular, we will resort to theories that are recognised by a finite 
state machine, i.e. they are \emph{regular} languages.

\subsection{Regular theories}
\seclabel{regular-theories-def}

A set of logical formulae is \emph{regular} if it is generated by a regular grammar.  Such a grammar
consists of a set of non-terminals, which we divide into a set of `proposition' non-terminals 
$\{\Phi_1,\Phi_2,\ldots\}$ and a set of `term' non-terminals $\{X_1,X_2,\ldots\}$.  Each non-terminal
has a production rule, which for term non-terminals is an alternation of entries of the form $x$, $c$, or $f(Y_1,\ldots,Y_k)$, 
for variable $x$, constant $c$, function symbol $f$ of arity $k$ and term non-terminals 
$Y_1,\ldots,Y_k$.  For proposition non-terminals, the possible entries are of the form $\neg \Phi$, 
$\Phi \land \Phi'$, $\Phi \lor \Phi'$, $\exists x.\Phi$, $\forall x.\Phi$, $X_1=X_2$, or $R(X_1,\ldots,X_k)$,
for proposition non-terminals $\Phi$ and $\Phi'$, variable $x$, function symbol $f$ and relation symbol $R$
each of arity $k$, and term non-terminals $X_1,\ldots,X_k$.  We write $L(\Phi)$ (or $L(X)$) for the 
language generated starting at the non-terminal $\Phi$ (respectively $X$).
When writing grammars, we will allow ourselves some flexibility with notation, for instance to write
entire formulae in a single production rule entry; it is clear that this does not increase expressiveness.

As a simple example, the theory whose conjunction asserts that $y \neq f^k(x)$ for all $k\geq 0$ is $L(\Phi)$ according to the following grammar:
\begin{align*}
\Phi ::= \neg(y = X) \quad\quad\quad
X ::= x \ | \  f(X).
\end{align*}
Another theory enforcing the same thing is $\bigwedge L(\Psi)$ according to this grammar:
\begin{align*}
\Psi &::= \neg (x=y) \ | \ \exists x'. \left( x' = f(x) \wedge \exists x. \left(x=x' \wedge \Psi
\right) \right).
\end{align*}
\noindent
This technique, of using a dummy variable and existential quantifiers to `change' the value of $x$
to $f(x)$, is one we will use frequently later on.

Recall from our definition of negation normal form from Section \ref{sec:fologic}, that quantifiers appear only immediately before atomic propositions.  We next show that every regular theory can indeed
be transformed into NNF, and will subsequently assume that  all theories are in NNF.
\begin{proposition}[Negation normal form]\label{prop:prenex}
    Given a regular theory $T$, we can produce a regular theory $T'$ consisting only of formulae in negation normal form which is equivalent to $T$, in the sense that for any model $\model$ we have $\model \models \bigwedge T \Leftrightarrow \model \models \bigwedge T'$ and $\model \models \bigvee T \Leftrightarrow \model \models \bigvee T'$.
\end{proposition}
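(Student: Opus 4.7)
The plan is to mirror the standard ``push negations inward'' procedure for a single formula, but lifted to the level of the grammar, by duplicating each proposition non-terminal with a polarity tag. The idea is that one copy will generate the NNF of the formulae originally generated, while the other will generate the NNF of their negations.

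Concretely, given the grammar for $T$ with proposition non-terminals $\{\Phi_1, \Phi_2, \ldots\}$ and term non-terminals $\{X_1, X_2, \ldots\}$, I would introduce two new non-terminals $\Phi_i^+$ and $\Phi_i^-$ for each $\Phi_i$, keeping the term non-terminals unchanged (since terms contain no connectives or negations). The production rules for $\Phi_i^+$ and $\Phi_i^-$ would be defined by a case analysis on the original production rule for $\Phi_i$, following the De Morgan dualities:
\begin{itemize}
\item if $\Phi ::= \neg \Phi'$, then $\Phi^+ ::= \Phi'^-$ and $\Phi^- ::= \Phi'^+$;
\item if $\Phi ::= \Phi_1 \land \Phi_2$, then $\Phi^+ ::= \Phi_1^+ \land \Phi_2^+$ and $\Phi^- ::= \Phi_1^- \lor \Phi_2^-$, and symmetrically for $\lor$;
\item if $\Phi ::= \exists x.\,\Phi'$, then $\Phi^+ ::= \exists x.\,\Phi'^+$ and $\Phi^- ::= \forall x.\,\Phi'^-$, and symmetrically for $\forall$;
\item if $\Phi ::= X_1 = X_2$, then $\Phi^+ ::= X_1 = X_2$ and $\Phi^- ::= \neg(X_1 = X_2)$, and analogously for $R(X_1,\ldots,X_k)$.
\end{itemize}
If the original grammar's start non-terminal is $\Phi_0$, the grammar for $T'$ takes $\Phi_0^+$ as its start symbol. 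Because each original alternative is replaced by one alternative in each of $\Phi^+$ and $\Phi^-$ (with the same shape of reference to other non-terminals), the result is still a regular tree grammar.

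Correctness is then established by a straightforward structural induction, stating simultaneously for every non-terminal $\Phi_i$ that $L(\Phi_i^+) = \{\text{nnf}(\varphi) : \varphi \in L(\Phi_i)\}$ and $L(\Phi_i^-) = \{\text{nnf}(\neg\varphi) : \varphi \in L(\Phi_i)\}$, where $\text{nnf}$ is the usual semantics-preserving rewriting. Each case of the induction matches one of the production-rule cases above; the equality atom and relational atom cases form the base, and the propositional/quantifier cases use the inductive hypothesis on sub-derivations. Semantic equivalence of each $\varphi$ with $\text{nnf}(\varphi)$ is the classical De Morgan fact and gives, for any model $\model$ and any $\varphi \in L(\Phi_0)$, that $\model \models \varphi$ iff $\model \models \text{nnf}(\varphi)$. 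Taking $T' := L(\Phi_0^+)$ therefore makes $\model \models \bigwedge T \Leftrightarrow \model \models \bigwedge T'$ and $\model \models \bigvee T \Leftrightarrow \model \models \bigvee T'$.

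I do not anticipate a serious obstacle here: the construction is a mechanical polarity-duplication of the grammar, and the induction is routine. The only mild subtlety worth highlighting in the write-up is that the transformation must be applied uniformly across mutually recursive non-terminals, which is why we duplicate every proposition non-terminal simultaneously rather than trying to rewrite formulae one at a time; and that the term-level non-terminals need not be touched.
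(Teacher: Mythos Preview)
Your proposal is correct and is essentially the same construction as the paper's: the paper introduces a complement non-terminal $\Phi^c$ for each proposition non-terminal $\Phi$, replaces each occurrence of $\neg\Phi$ by $\Phi^c$, and defines the rules for $\Phi^c$ via the same De~Morgan dualities you list---your $\Phi^+/\Phi^-$ are exactly the paper's (modified) $\Phi$ and $\Phi^c$. Your write-up is slightly more explicit about the inductive invariant $L(\Phi^+)=\{\mathrm{nnf}(\varphi):\varphi\in L(\Phi)\}$ and $L(\Phi^-)=\{\mathrm{nnf}(\neg\varphi):\varphi\in L(\Phi)\}$, which is a nice touch but not a genuinely different approach.
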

\newcommand{\negTrText}{\textsf{NEG}\xspace}
\newcommand{\negTr}[1]{\negTrText\big(#1\big)}
\begin{proof}
Suppose we are given a grammar $\mathcal{G}$ for $T$.
To obtain the desired NNF version $\mathcal{G}'$, we start with $\mathcal{G}$,
and first introduce a fresh propositional non-terminal $\Phi^c$,
for each propositional nonterminal $\Phi$ in $\mathcal{G}$,
and also replace every occurrence of $\neg \Phi$ in production rules of $\mathcal{G}$ with $\Phi^c$.
Next, we describe how to obtain the production rules for each of the newly 
introduced non-terminals of the form $\Phi^c$, using the production rules of
the nonterminal $\Phi$.
In particular, for each rule $\Phi \to \alpha$ in $\mathcal{G}$,
we add the rule $\Phi^c \to \negTr{\alpha}$, where the function $\negTrText$ is defined
as follows (all occurrences of $(\Psi^c)^c$ below must be replaced by $\Psi$):
\begin{align*}
\begin{array}{rclcrcl}
\negTr{X = X'} &=& \neg (X=X') 
& \quad \quad&
\negTr{R(X_1,\ldots,X_k)} &=& \neg R(X_1,\ldots,X_k) \\
\negTr{\Phi_1 \land \Phi_2} &=& \Phi_1^c \lor \Phi_2^c
& \quad \quad&
\negTr{\Phi_1 \lor \Phi_2} &=& \Phi_1^c \land \Phi_2^c \\
\negTr{\exists x.\Psi} &=& \forall x.\Psi^c
& \quad \quad&
\negTr{\forall x.\Psi} &=& \exists x.\Psi^c
\end{array}
\end{align*}
Let $T'$ be the language generated by the transformed grammar.
Then the elements of $T'$ are in NNF and in one-to-one correspondence with equivalent elements of $T$.
\end{proof}

\myparagraph{Tree automata}{
For algorithmic purposes, it will be convenient to consider the equivalent characterisation of regular
theories as those accepted by finite tree automata.  
Briefly, a non-deterministic bottom-up tree automata is a tuple 
$\Aa = (Q, \alphabet, \delta, F)$
where $Q$ is a finite set of states, $\alphabet$ is a finite 
ranked alphabet of maximum rank $k$ (i.e. an alphabet in which each symbol is equipped with an arity $r \leq k$), $F \subseteq Q$ is the set of final states,
and
$\delta = \bigcup\limits_{i=0}^k \delta_i$ is the transition relation, 
with $\delta_i \subseteq Q^i \times \Sigma_i \times Q$.  Intuitively, a transition rule $(q_1,\ldots, q_i, \sigma, q)\in\delta_i$ 
states that if the automaton can reach states $q_1,\ldots,q_i$ on the $i$ arguments of $\sigma$ then it can 
reach $q$ on $\sigma$.  A tree is \emph{accepted} if the automaton can reach a state $q\in F$ on the root. 
See \cite{comon2008tree} for a more detailed treatment of tree automata.
}


\section{Bounded model properties}
\seclabel{bounded}

In the setting of the classical decision problem, one of the most popular
(and certainly the most straightforward) 
methods for establishing decidability is via the \emph{finite model property}.
A class $\mathcal{C}$
of formulae has the finite model property if
for every formula $\varphi$ in $\mathcal{C}$, 
$\varphi$ is satisfiable iff it has a finite model.  
This is sufficient for decidability because of 
G{\"o}del's completeness theorem~\cite{god30} --- if $\phi$ is unsatisfiable then there is a 
proof of the validity of $\neg \phi$.
As a result, we get a decision procedure that simultaneously enumerates candidate finite models for
$\phi$ as well as candidate proofs for $\neg \phi$.  
Five of the seven maximal decidable classes in~\cite{classical-book} enjoy the 
finite model property.

In the setting of regular theories, however, 
the finite model property is not sufficient for decidability.
Indeed, in \secref{undec}, we will show that for two classes with the finite model property, 
the satisfiability problem is undecidable for regular theories of formulae in these classes.
For the conjunctive case this is because $\bigwedge T$ may not have a finite 
model even if each $\phi\in T$ does.
For the disjunctive case this happens because, when $\bigvee T$ is
unsatisfiable, the completeness theorem does not give us a proof of this fact.

In this section we will give a pair of stronger conditions, 
which we call the \emph{bounded model property}, weak and 
strong versions, which suffice for decidability for conjunctive and disjunctive satisfiability 
respectively, and, in some sense, are the analogues of the finite model property in our setting.  
The strengthening will be that the 
size of the finite model for a formula $\phi$ is bounded as a function of the signature $\text{sig}(\phi)$ of 
$\phi$, independent of the particular formula.

\begin{definition}[Bounded model property]
\deflabel{bounded-model-property}
A class $\mathcal{C}$ of formulae is said to have a 
\emph{weak bounded model property} if there is a function $f$ such 
that for every $\phi\in \mathcal{C}$, if $\phi$ is satisfiable 
then it has a model of size at most $f(\text{sig}(\phi))$.
Moreover, $\mathcal{C}$ is said to have the \emph{strong bounded model property} if $f$ is a computable function.
\end{definition}

We next state (\lemref{tree-regular-given-model}) the main ingredient 
towards the proof of our decidability result (Theorem~\ref{thm:boundedmodel}) for 
theories from classes with the above bounded model properties  --- the set of formulae 
satisfying a given finite model (over a fixed finite signature) is regular.
This can be used to decide if a given model satisfies the conjunction or disjunction
of a regular theory (Corollary \ref{cor:modelcheck}).
We remark that \lemref{tree-regular-given-model} is also observed in recent work on learning formulae~\cite{Krogmeier2022} to discriminate a finite set of finite models.

\begin{lemma}
\lemlabel{tree-regular-given-model}
Let $\vocab$ be a finite signature, $\vars$ be a finite set of variables,
and $\model = (U, \interp{})$ be a finite FO structure over $\vocab$.
The set of FO formulae $T_{\model} = \setpred{\varphi}{\model \models \varphi, \text{sig}(\varphi)\subseteq \vocab \uplus \vars}$ is tree regular.
\end{lemma}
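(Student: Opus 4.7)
The plan is to construct a nondeterministic bottom-up tree automaton $\Aa$ that recognizes $T_\model$ by performing Tarski-style semantic evaluation as it reads the parse tree of a formula. Let $A = U^\vars$ be the finite set of all variable assignments. The state space will be the disjoint union $Q = Q_{\mathrm{t}} \sqcup Q_{\mathrm{f}}$, where $Q_{\mathrm{t}} = U^A$ is the set of all maps from assignments to universe elements and $Q_{\mathrm{f}} = \set{\tru,\fls}^A$ is the set of all maps from assignments to truth values; both are finite since $U$ and $\vars$ are. The invariant to maintain is that, on a term subtree $t$, the automaton reaches the state $\asgn \mapsto \interp{t}_{\model,\asgn}$, and on a formula subtree $\psi$, it reaches the state whose value on $\asgn$ is $\tru$ exactly when $\model,\asgn \models \psi$. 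The underlying ranked alphabet contains constants and variables of arity $0$, each $f \in \funcs$ and $R \in \rels$ at its native arity, $\neg$ of arity $1$, $\land$ and $\lor$ of arity $2$, and symbols $\exists x, \forall x$ of arity $1$ for each $x \in \vars$.

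The transitions would directly mirror the inductive semantics and are in fact deterministic. A constant leaf $c$ emits $\asgn \mapsto \interp{c}$; a variable leaf $x$ emits $\asgn \mapsto \asgn(x)$. A node $f(t_1,\dots,t_k)$ whose children are in states $v_1,\dots,v_k$ emits $\asgn \mapsto \interp{f}\bigl(v_1(\asgn),\dots,v_k(\asgn)\bigr)$. Atoms $t_1 = t_2$ and $R(t_1,\dots,t_k)$ are handled analogously, producing a truth function via equality on $U$ or $\interp{R}$. The Boolean connectives $\neg, \land, \lor$ combine incoming truth functions pointwise. A quantifier node $Qx.\psi$ with child in state $v_\psi$ emits the truth function $\asgn \mapsto Q_{u \in U}\, v_\psi(\asgn[x \mapsto u])$, where $Q$ is read as $\bigvee$ or $\bigwedge$ accordingly. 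A routine structural induction over the parse tree then verifies the invariant. For the accepting set $F \subseteq Q_{\mathrm{f}}$ I would take the truth functions $v$ that witness $\model \models \varphi$ under the convention adopted for open formulas: if $\varphi$ is required to be a sentence then the root state is already a constant function and $F$ accepts exactly when that constant is $\tru$; if instead we interpret $\model \models \varphi$ as satisfaction under all (respectively, some) assignments to the variables in $\vars$, $F$ consists of the constantly-$\tru$ function (respectively, of all $v$ with $v(\asgn) = \tru$ for some $\asgn$).

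The main point requiring care is the quantifier case, because the same variable symbol $x$ may be rebound deeper in the tree. The construction handles this uniformly because $v_\psi$ has, by induction, already absorbed the meaning of any inner quantifiers over $x$: an inner $Qx$ makes $v_\psi$ independent of its $x$-coordinate, and the outer projection $\asgn \mapsto Q_{u}\, v_\psi(\asgn[x \mapsto u])$ harmlessly overwrites that coordinate before querying $v_\psi$. Thus $\alpha$-renaming is never needed, and the construction yields a well-defined finite tree automaton whose accepted language is precisely $T_\model$, establishing its regularity.
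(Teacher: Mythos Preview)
Your proof is correct and follows the same overall strategy as the paper---build a finite bottom-up tree automaton that evaluates subtrees semantically---but the state representation differs. You keep a \emph{deterministic} automaton whose term states are functions $A \to U$ and whose formula states are functions $A \to \{\tru,\fls\}$, so each state records the full Tarski semantics of the subtree. The paper instead uses a \emph{nondeterministic} automaton with states of the form $(e,\Gamma)$ or $(b,\Gamma)$, where $e\in U$ (resp.\ $b\in\{\tru,\fls\}$) is a single guessed value and $\Gamma$ is a set of assignments under which that value is attained; acceptance requires reaching $(\tru,\Gamma)$ with $\Gamma\neq\emptyset$ at the root, i.e., the ``some assignment'' reading of $\model\models\varphi$. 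Your construction is arguably cleaner and more elementary (no nondeterminism, the invariant is exactly the compositional semantics), and your explicit discussion of the open-formula convention and of variable rebinding is a nice touch that the paper leaves implicit. The paper's encoding, on the other hand, yields a somewhat smaller nondeterministic state space ($|U|\cdot 2^{|A|}$ versus your $|U|^{|A|}$ for term states), though this is immaterial for the regularity claim.
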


\begin{proof}
\newcommand{\valstates}{Q_\textsf{val}}
\newcommand{\boolstates}{Q_\textsf{bool}}
We construct a tree automaton $\Aa_\model = (Q, \alphabet^{\vocab, \vars}, \delta, F)$ 
that accepts a tree $t_\varphi$
corresponding to a FO formula $\varphi$ iff $\model \models \varphi$.
At a high level, for each node $n$ of the tree $t_\varphi$ corresponding to
a (sub-)term $t_n$ in $\varphi$, the automaton annotates $n$ with the set of all
pairs $(e, \gamma)$ such that the term $t_n$ will evaluate to element $e \in U$ 
under the assignment $\gamma$.
Likewise, for each node $n$ of the tree $t_\varphi$ corresponding to a 
sub-formula $\psi$, the automaton annotates $n$ with the set 
of assignments under which
$\psi$ evaluates to $\tru$ (or alternatively, those assignments under which it evaluates to $\fls$).
The states of this automaton essentially consist of the set of all such annotations,
and are finitely many, since both $\vocab$ and $\vars$ are finite.
The transitions of the automaton outline how one can 
infer the annotations of a node from the annotations of its children.

The set of states of $\Aa_\model$ is the disjoint union 
$Q = \valstates \uplus \boolstates$
where $\valstates$ and $\boolstates$ are defined as follows.
Let $\mathsf{Asgns} = [\vars \to U]$ be the set of all functions 
from $\vars$ to $U$. 
Then, $\valstates = U \times \powset{\mathsf{Asgns}}$ and 
$\boolstates = \set{\tru, \fls} \times \powset{\mathsf{Asgns}}$.
The set of final states is $F = \setpred{(\tru, \Gamma)}{\mathsf{Asgns} \supseteq \Gamma \neq \emptyset}$.
In the following, we formally describe the transitions $\delta_\sigma$ 
for each symbol $\sigma \in \alphabet^{\vocab, \vars}$ separately;
$\delta$ will then simply be $\bigcup_{\sigma \in \alphabet^{\vocab, \vars}} \delta_\sigma$.
\begin{description}
	\item[\textbf{Case $\sigma \in \vars$.}] 
	In this case, the run of the automaton
	picks a state of the form $q  = (e, \Gamma) \in \valstates$ so that, under all assignments
	in $\Gamma$, $\sigma$ evaluates to the element $e \in U$. That is,
	\begin{align*}
	\delta_\sigma = \setpred{\big(\sigma, (e, \Gamma)\big)}{e \in U, \Gamma \subseteq \setpred{\gamma \in \mathsf{Asgns}}{\gamma(\sigma) = e}}
	\end{align*}

	\item[\textbf{Case $\sigma \in \consts \cup \funcs \cup \rels$ with $\arity{\sigma} = r \geq 0$}.]
	Here, in the case of $\sigma \in \consts$, the transitions are similar
	to the case when $\sigma \in \vars$.
	In the case of $\sigma \in \funcs$, the automaton 
	chooses a state for the current node based on the states of the children.
	In particular, if the states of the children are respectively $(e_1, \Gamma_1) \ldots, (e_r, \Gamma_r)$,
	then the automaton picks a state $(e, \Gamma)$ such that every assignment $\gamma \in \Gamma$
	is also in each of $\set{\Gamma_i}_i$ and further, the tree rooted at the current
	node evaluates to $e$ under every $\gamma \in \Gamma$.
	Such a choice for the state ensures consistency with the states of the children.
	The case of $\sigma \in \rels$ is similar.
	The precise formal details are as follows.
	\begin{align*}
	\delta_\sigma = \setpred{\big((e_1, \Gamma_1), \ldots, (e_r, \Gamma_r), \sigma, (e, \Gamma)\big)}{e \in U \uplus \set{\tru, \fls},
	\Gamma \subseteq \bigcap_{i=1}^r \Gamma_i, e = \interp{\sigma}(e_1, \ldots, e_r)}
	\end{align*}
	
	\item \textbf{Case $\sigma$ is some $\oplus \in \set{\neg, \land, \lor}$ with $\arity{\oplus} = r$} 
	In this case, the automaton picks a state whose result
	can be obtained by the boolean combination (under $\oplus$) of the results of the children, and the assignments are chosen to be a common subset of those of the children. That is,
	\begin{align*}
	\delta_\sigma = \setpred{\big((b_1, \Gamma_1), \ldots, (b_r, \Gamma_r), \sigma, (b, \Gamma)\big)}{b = \oplus(b_1, \ldots, b_r),
	\Gamma \subseteq \bigcap_{i=1}^r \Gamma_i}
	\end{align*}

	\item \textbf{Case $\sigma$ is `${=}$'.} 
	Here, the automaton picks a state whose first component is $\tru$ whenever the
	values on both children are picked to be the same. That is,
	\begin{align*}
	\delta_\sigma = \setpred{((e_1, \Gamma_1), (e_2, \Gamma_2), \text{`{=}'}, (b, \Gamma)\big)\!\!}{\!\!e_1, e_2 \in U, \Gamma \subseteq \Gamma_1\cap \Gamma_2, 
	 b \text{ is } \tru \text{ if } e_1 = e_2 \text{ and } \fls \text{ otherwise}
	}
	\end{align*}

	\item \textbf{Case $\sigma$ is `$\exists x$' for some $x \in \vars$.} 
	This is the interesting case. Here, the automaton can pick a state $(\tru, \Gamma)$
	such that $\Gamma$ contains the assignments, derived from those of the children,
	by replacing $x$ with an arbitrary element from the universe.
	Alternatively, the automaton can pick a state whose result is $\fls$ while
	the set $\Gamma$ of assignments picked is obtained by considering only
	those assignments from that of the child formula, which return $\fls$,
	no matter how the element corresponding to $x$ is chosen.
	Both these transitions closely mimic the semantics of the existential quantifier.
	Formally,
	\begin{align*}
		\delta_\sigma = \cup
		\begin{array}{l}
		\setpred{\big((\tru, \Gamma_1), \sigma, (\tru, \Gamma)\big)}{
		\Gamma \subseteq \setpred{\gamma[x \mapsto e] }{ e \in U, \gamma \in \Gamma_1}
		} \\
		\setpred{\big((\fls, \Gamma_1), \sigma, (\fls, \Gamma)\big)}{
		\Gamma \subseteq \setpred{\gamma \in \mathsf{Asgns}}{\forall e\in U, \gamma[x \mapsto e] \in \Gamma_1}
		}
		\end{array}
	\end{align*}

	\item \textbf{Case $\sigma$ is `$\forall x$' for some $x \in \vars$.} 
	This is the dual of the previous case and follows similar reasoning.
	\begin{align*}
		\delta_\sigma = \cup
		\begin{array}{l}
		\setpred{\big((\tru, \Gamma_1), \sigma, (\tru, \Gamma)\big)}{
		\Gamma \subseteq \setpred{\gamma \in \mathsf{Asgns}}{\forall e\in U, \gamma[x \mapsto e] \in \Gamma_1}
		} \\
		\setpred{\big((\fls, \Gamma_1), \sigma, (\fls, \Gamma)\big)}{
		\Gamma \subseteq \setpred{\gamma[x \mapsto e] }{ e \in U, \gamma \in \Gamma_1}
		}
		\end{array}
	\end{align*}
\end{description}
The proof of correctness of the above construction follows from a simple structural induction.
\end{proof}

For a given regular theory $T$, the set of variables, 
relation and function symbols appearing in the grammar generating $T$ is finite, 
and hence $T$ can be considered to be over a finite signature $\vocab$ and finite
set of variables $\vars$.  \lemref{tree-regular-given-model} thus immediately gives:

\begin{corollary}
\corlabel{modelcheck}
The model-checking problems of determining for a given regular theory $T$ and finite model $\model$ whether 
$\model \models \bigwedge T$ and whether $\model \models \bigvee T$, is decidable.
\end{corollary}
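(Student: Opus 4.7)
The plan is to reduce both model-checking questions to standard decidable problems on tree automata, using Lemma~\ref{lem:tree-regular-given-model} as the main ingredient. Since $T$ is a regular theory, we are given a tree automaton $\mathcal{A}_T$ recognising exactly the parse trees of formulae in $T$. Because the grammar generating $T$ mentions only finitely many variable, constant, function and relation symbols, we may take $\vocab$ and $\vars$ to be finite, so Lemma~\ref{lem:tree-regular-given-model} applies to $\model$ with this signature, producing a tree automaton $\mathcal{A}_\model$ recognising exactly the language $T_\model = \{\varphi : \model \models \varphi,\ \text{sig}(\varphi) \subseteq \vocab \uplus \vars\}$.

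For the disjunctive case, observe that $\model \models \bigvee T$ holds if and only if there is some $\varphi \in T$ with $\model \models \varphi$, i.e. if and only if $L(\mathcal{A}_T) \cap L(\mathcal{A}_\model) \neq \varnothing$. So I would form the product automaton of $\mathcal{A}_T$ and $\mathcal{A}_\model$ (which recognises the intersection, using the standard product construction for non-deterministic bottom-up tree automata) and test its language for emptiness, which is decidable in polynomial time.

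For the conjunctive case, $\model \models \bigwedge T$ is equivalent to $L(\mathcal{A}_T) \subseteq L(\mathcal{A}_\model)$, which in turn is equivalent to $L(\mathcal{A}_T) \cap \overline{L(\mathcal{A}_\model)} = \varnothing$. I would determinise and complement $\mathcal{A}_\model$ (non-deterministic bottom-up tree automata are effectively closed under complement, via the subset construction followed by swapping accepting and non-accepting states), then take the product with $\mathcal{A}_T$ and again test emptiness.

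The only subtlety worth flagging is that before applying Lemma~\ref{lem:tree-regular-given-model} I should fix a finite alphabet capturing all symbols that can appear in trees generated by $\mathcal{A}_T$, and likewise ensure that $\mathcal{A}_\model$ is built over the same ranked alphabet, so that the product construction is well-defined; all of this is routine once the alphabets are aligned. No step is really an obstacle here --- the genuine work has already been done inside Lemma~\ref{lem:tree-regular-given-model} --- so the corollary is an immediate consequence of that lemma together with the decidability of emptiness and closure under Boolean operations for finite tree automata~\cite{comon2008tree}.
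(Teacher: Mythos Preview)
Your proposal is correct and is exactly the argument the paper has in mind: the paper simply notes that a regular theory uses only finitely many symbols, invokes \lemref{tree-regular-given-model}, and declares the corollary ``immediate'' without spelling out the automata manipulations. Your explicit reduction to emptiness of $L(\mathcal{A}_T)\cap L(\mathcal{A}_\model)$ for the disjunctive case and $L(\mathcal{A}_T)\cap \overline{L(\mathcal{A}_\model)}$ for the conjunctive case is the intended (and standard) way to cash this out.
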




Equipped with \lemref{tree-regular-given-model}
and Corollary~\ref{cor:modelcheck},
we can now prove the devidability results for classes with 
bounded model properties (\defref{bounded-model-property}).

\boundedModelProperty*

\noindent
\begin{proof}
We again observe that for any regular theory $T$, all formulae 
of $T$ are over a fixed finite set of symbols from the signature $\vocab$ and variables $\vars$ 
appearing in the grammar generating $T$. 
Then, to decide whether $\bigvee T$ 
is satisfiable, we can just check all models up to size $f(\vocab \cup \vars)$, which can be computed
for a theory with strong bounded model property.  

We now turn our attention to conjunctive satisfiability when the underlying class only has weak bounded model property.
If $\bigwedge T$ is satisfiable, then it has a model of size at most $f(\vocab \cup \vars)$ 
(although we cannot necessarily compute it). 
On the other hand if it is unsatisfiable, then by the compactness theorem 
for first-order logic there is a finite subset $T' \subseteq T$ such that 
$\bigwedge T'$ is unsatisfiable, and by G{\"o}del's completeness theorem there is a proof of this.  
The decision procedure now uses the classic ``dove-tailing'' technique,
where one interleaves multiple countable enumerations and terminates 
when one of them terminates.
Here, we enumerate (1) finite models that witness the satisfiability of the set $T$, 
and interleave it with the enumeration of 
(2.1) (an increasing chain of) finite subsets of the theory $T$, together with
(2.2) bounded length proofs of unsatisfiability of each such finite subset. 
More precisely, let $\varphi_1, \varphi_2, \ldots$ be a 
computable enumeration of $T$. 
Likewise let $\model_1, \model_2, \ldots$ be a 
computable enumeration of finite models; we note that the 
collection of finite models is countable and admits a computable enumeration. 
Then, at step $i$ of our decision procedure, we:
\begin{enumerate}
\item Check if the model $\mathcal{M}_i$ satisfies $T$. This is a decidable check (see Corollary~\ref{cor:modelcheck}). If so, we conclude that $\bigwedge T$ is satisfiable.
\item Enumerate all natural deduction proofs of length $\leq i$ for the unsatisfiability of the formula 
$\bigwedge_{j=1}^i \varphi_j$. If there is one, we conclude that $\bigwedge T$ is unsatisfiable
\end{enumerate}
The proof of correctness of the above is as follows.
If $\bigwedge T$ is satisfiable, there is a finite model that witnesses 
the satisfiability (by the weak bounded model property), 
and the above procedure will terminate. 
On the other hand, if $\bigwedge T$ is unsatisfiable, 
then there will be a finite subset $T’$ of it which is unsatisfiable. 
Let $i$ be the smallest index so that 
$T’ \subseteq S_i = \set{\varphi_1, \varphi_2, \ldots \varphi_i}$. 
Clearly $S_i$ is unsatisfiable, and thus has 
a natural deduction proof of unsatisfiability, 
whose length is, say, $j$. Let $k = max(i, j)$. 
Observe that there is a proof of length $\leq k$ 
for the unsatisfiability of the set $S_k = \set{\varphi_1, \varphi_2, \ldots, \varphi_k}$.
Thus, again, our procedure terminates (in $k$ steps) with the correct answer.
\end{proof}



\section{Undecidable classes}\label{sec:undec}

In this section, we will consider two syntactic classes for which the classical decision problem is decidable,
and show that their decision problems for regular theories are undecidable.  The two classes are the \emph{EPR} 
(or \emph{Bernays-Sch{\"o}nfinkel}) class $\epr$, shown decidable by Ramsey~\cite{ramsey1930problem}, and the 
\emph{Gurevich} class $\gurevich$, shown decidable by Gurevich~\cite{gurevich1976decision}. In both cases decidability is established through
the finite model property.

For each of these classes, we show that there is a subclass, respectively 
$\eprweak$ (consisting of formulae obtained by a
conjunction of a purely existential formula and a purely universal formula with at most $3$ universally quantified variables) and $\gurevichweak$, for which the 
conjunctive and disjunctive satisfiability problems are undecidable\footnote{Note that $\eprweak$ is not a 
standard syntactic class, but the fact that it is undecidable shows that the EPR class (which contains it) is undecidable.  More precisely it shows that the standard class $[\exists^*\forall^3,(0,2),(0)]_=$  
(contained in the EPR class) is undecidable.  Note that the other standard classes arising from other 
interleavings of $\forall^3$ with $\exists^*$ are already undecidable for single formulae, since they each contain one of the classicaly undecidable classes $[\forall^3\exists^*,(0,1),(0)]$ and $[\forall\exists^*\forall,(0,1),(0)]$.}.


\subsection{The EPR class}

\eprundec* 

\begin{proof}
The proof proceeds by reduction from the 
\emph{Post Correspondence Problem} (PCP), a classical undecidable problem~\cite{Post1946Correspondence}. 
An instance of PCP is a set of pairs of strings $\set{(u_1,v_1),\ldots,(u_k,v_k)}$, 
with each $u_i,v_i\in \set{0,1}^*$,
and the task is to determine whether there exists a sequence of 
indices $i_1,\ldots,i_n$ such that the concatenations 
$u_{i_1}\ldots u_{i_n}$ and $v_{i_1}\ldots v_{i_n}$ are equal.

\begin{figure*}
\begin{align*}
\Psi &::= \left((\phi_0 \wedge (x = y)\right) \wedge 
\Phi \\
\phi_0 &= \forall y'. (\neg s_0(y',x) \wedge \neg s_1(y',x)) \wedge \forall x. \forall y. \left(\bigwedge_{i\in\{0,1\}} (s_i(x,y) \wedge s_i(x,y')) 
\Rightarrow y=y'\right) \\ &\qquad \wedge \left(\bigwedge_{i,j\in\{0,1\}} (s_i(y,x) \wedge s_j(y',x)) 
    \Rightarrow (y=y' \wedge i=j)\right), \\
\Phi &::= (x=y) \ | \ \Big|_{i=1}^k \Psi(u_i,v_i) \\
\Psi(u_i,v_i) &::= \exists x'.\left( s_{u_i(1)}(x,x') \wedge \exists x. \left(x = x' \wedge \exists x'.\left( s_{u_i(2)}(x,x') \wedge \exists x.\left( x=x' \wedge \ldots \right.\right.\right.\right.\\
&\qquad \wedge \exists x'. \left(s_{u_i(k_i)}(x,x') \wedge \exists x. \left(x=x' \wedge \exists y'. \left(s_{v_i(1)}(y,y') \wedge \exists y.\left( y=y' \wedge \ldots \right.\right.\right.\right.\\
&\qquad \left.\wedge 
    \exists y'. \left(s_{v_i(l_i)}(y,y') \wedge \exists y. \left(y=y' \wedge \Phi\right)\right)\ldots\right).
\end{align*}
\caption{Logical encoding of the Post Correspondence Problem (disjunctive version)}\label{fig:eprforms}
\begin{align*}
\Psi' &::= \left((\phi_0 \wedge (x = y)\right) \wedge 
    \Phi',\\
\Phi' &::= (x\neq y) \ | \ \Big|_{i=1}^k \Psi(u_i,v_i) \\
\Psi'(u_i,v_i) &::= \exists x'.\left( s_{u_i(1)}(x,x') \wedge \exists x. \left(x = x' \wedge \exists x'.\left( s_{u_i(2)}(x,x') \wedge \exists x.\left( x=x' \wedge \ldots \right.\right.\right.\right. \\
&\qquad \left.\left.\left.\left. \wedge \exists x'. \left(s_{u_i(k_i)}(x,x') \wedge \exists x. \left(x=x' \wedge \right.\right.\right.\right. 
    \exists y'. \left(s_{v_i(1)}(y,y') \wedge \exists y.\left( y=y' \wedge \ldots \right.\right.\right.\right.\\
    &\qquad \left.\wedge 
    \exists y'. \left(s_{v_i(l_i)}(y,y') \wedge \exists y. \left(y=y' \wedge \Phi'\right)\right)\ldots\right)
\end{align*}
\caption{Logical encoding of the Post Correspondence Problem (conjunctive version)}\label{fig:eprformscon}
\end{figure*}

\myparagraph{Disjunctive satisfiability}
For disjunctive satisfiability, consider the theory $L(\Psi)$ as shown in Figure \ref{fig:eprforms}, which is over 
a signature of four variables and two binary relations $s_0$ and $s_1$, with equality.  We claim 
that $\bigvee L(\Psi)$ is satisfiable if and only if the given PCP instance has a solution.

Indeed, suppose that some formula $\phi \in L(\Psi)$ is satisfiable.  Note that the second part of $\phi_0$ 
enforces that each $s_i$ is functional, in that for each $x$ there 
is at most one $y$ such that $s_i(x,y)$.  Writing $s_i(x)$ for such a $y$ if it exists, the third part 
enforces that $s_i(x)=s_j(y)$ only if $x=y$ and $i=j$, and the first part that the initial value of $x$ is not in the 
image of $s_0$ or $s_1$.  Hence inductively we have that 
$s_{i_k}(s_{i_{k-1}}(\ldots s_{i_1}(x))\ldots ) = s_{j_l}(s_{j_{l-1}}(\ldots s_{j_1}(x))\ldots )$ if and only if $k=l$ and $i_m=j_m$ for all $m$.

\renewcommand{\vec}[1]{\mathbf{#1}}

Observe that a formula $\phi$ of $L(\Psi)$ is formed by a series of choices $i_1,\ldots,i_n$ for $i$, followed by a final $x=y$.
The formula obtained from a single production of $\Psi(u_i,v_i)$ takes the values of $x$ and $y$ to $s_{u_i(k_i)}(s_{u_i(k_i-1)}(\ldots s_{u_i(1)}(x))\ldots )$ 
and $s_{v_i(l_i)}(s_{v_i(l_i-1)}(\ldots s_{v_i(1)}(y))\ldots )$ respectively; denote these as $s_{\vec{u_i}}(x)$ and 
$s_{\vec{v_i}}(y)$.  Hence at the end of the formula, we have that the final values of $x$ and $y$ are 
$s_{\vec{u_{i_n}}}(s_{\vec{u_{i_{n-1}}}} ( \ldots s_{\vec{u_{i_1}}}(x_0))\ldots )$ and \\ 
$s_{\vec{v_{i_n}}}(s_{\vec{v_{i_{n-1}}}} ( \ldots s_{\vec{v_{i_1}}}(y_0))\ldots )$
respectively, where $x_0=y_0$ are the initial values of $x$ and $y$.  Hence $\phi$ is satisfiable only 
if $i_1,\ldots,i_n$ is a solution to the given PCP instance.  \sloppy

Conversely, if $i_1,\ldots,i_n$ is a solution to the given PCP instance then the corresponding formula $\phi$ 
holds on the infinite binary tree, where $s_0$ and $s_1$ are interpreted as the left and right child relations respectively,
and $x$ and $y$ initially interpreted as the root.

\myparagraph{Conjunctive satisfiability}
For conjunctive satisfiability, we claim that $\wedge L(\Psi')$ is satisfiable if and only if the given PCP instance does 
\emph{not} have a solution ($\Psi'$ as shown in Figure \ref{fig:eprformscon}; the same as $\Psi$ but with the final $x=y$ replaced 
by $x\neq y$).

Indeed, if the given instance has no solution then the infinite binary tree, with $s_0$ and $s_1$ interpreted as the left and 
right child relations, and $x$ and $y$ initially interpreted as the root, is a model for $\wedge L(\Psi')$.  Conversely, if $\wedge L(\Psi')$ 
is satisfiable then every formula $\phi\in L(\Psi')$, say arising from the sequence $i_1,\ldots,i_n$, satisfies $s_{\vec{u_{i_n}}}(s_{\vec{u_{i_{n-1}}}} ( \ldots s_{\vec{u_{i_1}}}(x_0))\ldots ) \neq 
s_{\vec{v_{i_n}}}(s_{\vec{v_{i_{n-1}}}} ( \ldots s_{\vec{v_{i_1}}}(y_0))\ldots )$ and hence (recalling that $x_0=y_0$) we must have 
$u_{i_1}u_{i_2}\ldots u_{i_n} \neq v_{i_1}v_{i_2}\ldots v_{i_n}$ and so $i_1,\ldots,i_n$ is not a solution of the given PCP instance,
as required.
\end{proof}

Note that the spirit of this proof is a kind of inverse of Rabin's theorem that MSO on trees is 
decidable~\cite{rabin1969decidability}---the universal parts of $L(\Psi)$ and $L(\Psi')$ are enforcing that the model is
sufficiently close to a binary tree.


\subsection{Existential formulae with unary functions}

In this section, we show that the satisfiability problem for regular theories
over the purely existential fragment of FO is undecidable, even for a signature limited to two unary functions.



\exundec*


\begin{proof}
\newcommand{\halttext}{\textsf{HALT}}
\newcommand{\start}{q_{\textsf{START}}}
\newcommand{\halt}{q_{\halttext}}
\newcommand{\inc}{\mathsf{INC}}
\newcommand{\dec}{\mathsf{DEC}}
\newcommand{\chk}{\mathsf{CHK}}
\newcommand{\ctr}{\mathsf{ctr}}
\newcommand{\zr}{\texttt{0}}
\newcommand{\scr}{\mathsf{s}}
\newcommand{\pdr}{\mathsf{p}}

Our proof is via a reduction from the halting problem for $2$-counter Minsky machines --- 
given a $2$-counter machine $M$, we will construct a regular set of existentially quantified
FO formulae $T$
such that $\bigvee T$ is satisfiable if and only if $M$ can halt, and similarly a set $T'$ such that $\bigwedge T'$ 
is satisfiable if and only if $M$ does not halt.

We first recall the formal definition of a $2$-counter Minsky machine.
Such a machine is a tuple $M = (Q, \start, \halt, \ctr_A, \ctr_B, D, \delta)$,
where $Q$ is a finite set of states, $\start \neq \halt \in Q$ are designated
initial and halting states respectively, $\ctr_A$ and $\ctr_B$ are counters
that take values over $\nats = \set{0, 1, \ldots}$
and $\delta \subseteq Q \times \set{A, B} \times \set{\inc,\dec,\chk} \times Q$
is the transition relation.
A \emph{run} of such a machine is a finite non-empty sequence $\rho \in (\delta \times (\nats \times \nats))^+$:
\begin{align*}
\rho = &\Big((q_1, C_1, \sigma_1, q'_1), (v^A_1, v^B_2)\Big), \Big((q_2, C_2, \sigma_2, q'_2), (v^A_2, v^B_2)\Big) \ldots \\
 \ldots &\Big((q_{k-1}, C_{k-1}, \sigma_{k-1}, q'_{k-1}), (v^A_{k-1}, v^B_{k-1})\Big), \Big((q_k, C_k, \sigma_k, q'_k), (v^A_k, v^B_k)\Big)
\end{align*}
such that $q_1 = \start$ and for every $2 \leq i \leq k$, $q_i = q'_{i-1}$.  We say $\rho$ is \emph{feasible} if 
for every  $1 \leq i \leq k$ one of the following holds:
\begin{enumerate}[label=(\alph*)]
\item $v^{C_i}_i = v^{C_i}_{i-1} + 1$, $\sigma_i = \inc$, and $v^{\overline{C_i}}_i = v^{\overline{C_i}}_{i-1}$ 
\item $v^{C_i}_i = v^{C_i}_{i-1} - 1 \geq 0$, $\sigma_i = \dec$, and $v^{\overline{C_i}}_i = v^{\overline{C_i}}_{i-1}$
\item $v^C_i = v^C_{i-1}$ for each $C \in \set{A, B}$, $v^{C_i}_i=0$ and $\sigma_i = \chk$,
\end{enumerate}
writing $v^A_0 = v^B_0 = 0$
and using the notation $\overline{A} = B$ and $\overline{B} = A$.
We say $\rho$ reaches state $q \in Q$ if  $q'_k = q$.

The halting problem asks whether a given $2$-counter machines $M$ has a feasible run that reaches $\halt$,
and is undecidable in general.

\newcommand{\rewrite}{\textsf{copy}}
\newcommand{\unequalconsts}{\textsf{consts}}
\newcommand{\scrpdr}{\textsf{s,p}}


\begin{figure*}
\begin{align*}
\begin{array}{rcl}
\Phi &::=& y_A = c_\zr \land y_B = c_\zr \land \Psi_{\start} \\
\Psi_q &::=& \text{ if $q=\halt$ then $\top$ else $\bot$}\\
&& \big|_{(q,C,\inc,q') \in \delta} \exists y'_A, y'_B \Big( \psi_{\inc,C} \land \psi_\scrpdr 
 \land \big(\exists  y_A, y_B \, (\psi_\rewrite \land \Psi_{q'})\big)\Big) \\
&& \big|_{(q,C,\dec,q') \in \delta} \exists y'_A, y'_B \Big( \psi_{\dec,C} \land \psi_\scrpdr 
\land \neg(y_C = c_\zr) \land \big(\exists  y_A, y_B \, (\psi_\rewrite \land \Psi_{q'})\big)\Big) \\
&& \big|_{(q,C,\chk,q') \in \delta} \exists y'_A, y'_B \Big( \psi_\rewrite \land \psi_\scrpdr 
\land y_C = c_\zr \land \big(\exists  y_A, y_B \, (\psi_\rewrite \land \Psi_{q'})\big)\Big) \\
\psi_\rewrite &::=& y_A = y'_A \land y_B = y'_B \\
\psi_\scrpdr &::=& \revise{\neg(\scr(y'_A) = y'_A) \land \neg(\pdr(y'_A) = y'_A)} \\
&& \land \pdr(\scr(y'_A)) = y'_A \land \scr(\pdr(y'_A)) = y'_A 
 \land \pdr(\scr(y'_B)) = y'_B \land \scr(\pdr(y'_B)) = y'_B \\
\psi_{\inc,C} &::=& y'_{\overline{C}} = y_{\overline{C}} \land y'_C = \scr(y_C) \\
\psi_{\dec,C} &::=& y'_{\overline{C}} = y_{\overline{C}} \land y'_C = \pdr(y_C) \\
\end{array}
\end{align*}
\caption{Logical encoding of a 2-counter machine (disjunctive version)}
\figlabel{2countdis}

\input{exists-undec-conjunctive.tex}
\end{figure*}

\myparagraph{Disjunctive satisfiability}
Given a $2$-counter machine $M$, we present in  \figref{2countdis} a regular language $L(\Phi)$
that will contain a formula $\varphi_\rho$ for each run $\rho$, which is satisfiable 
if and only if $\rho$ is a feasible run that reaches $\halt$.

The vocabulary for $L(\Phi)$ is $\vocab = (\consts, \rels,\funcs)$,
where 
$\consts = \set{c_\zr}$, 
$\rels = \emptyset$ and 
$\funcs = \set{\scr, \pdr}$, with $\arity{\scr} = \arity{\pdr} = 1$
and the set of variables is $\vars = \set{y_A, y'_A, y_B, y'_B}$.
Recall that for $C \in \set{A, B}$, 
we use $\overline{C}$ to denote the other value in $\set{A, B}$.
Observe that each formula only contains existential quantifiers,
and all the negations appear in atomic formulae.

We now argue for correctness.
First assume that the given counter machine $M$ halts.
In this case, there is a run $\rho$ that ends in $\halt$.
Then the formula $\varphi_\rho$ corresponding to $\rho$ will be satisfiable
on the model of the natural numbers, with $\scr(\cdot)$, $\pdr(\cdot)$ interpreted as the successor
and predecessor functions respectively, and $c_\zr$ interpreted as $0 \in \nats$.

For the other direction, suppose that $\varphi \in L(\Phi)$ is a satisfiable formula, 
over some model $\MM$.
From the parse tree of $\varphi$ construct a run $\rho$ by at each step choosing the 
transition corresponding to the one used in the production of $\phi$.  
Then this is an 
accepting run: indeed, the occurrences of $\psi_{s,p}$ in $\varphi$ 
enforce that the terms $c_0,\scr(c_0),\scr^2(c_0),\ldots$ (as far as they correspond to elements
corresponding to terms in $\varphi$) map to distinct elements of the universe, 
and further, the functions $\scr$ and $\pdr$ behave as inverses
when restricted to this set of elements.  
This means that $y_A$ and $y_B$ do indeed behave as
counters and so the conditions $y_C \neq c_0$ (for $\dec$ transitions) and $y_C=c_0$ (for 
$\chk$ transitions) ensure that $\rho$ is feasible.  Moreover for $\phi$ to be satisfiable 
the final leaf must be $\top$ rather than $\bot$ and so $\rho$ must end in $q$.

\myparagraph{Conjunctive satisfiability}{
This time we reduce from the problem of checking if a given 2-counter
machine $M$ does not halt.
The grammar for describing $L(\Phi')$, the regular set of formulae,
is as shown in \figref{2countcon} over the same vocabulary $\vocab$.
Here, each formula corresponds to a run $\rho$ and intuitively encodes
that `$\rho$ is either infeasible or does not end in $\halt$'.

The argument for correctness is similar to the disjunctive case.  If $M$ does not halt then 
the natural numbers are a model for $\bigwedge L(\Phi')$: for any run $\rho$ either $\rho$ does 
not end in $\halt$ in which case the final leaf is $\top$ (and this propagates up through the 
$\vee$ nodes), or there is a $\dec$ or $\chk$ condition which fails and so the corresponding 
conditions $y_C=c_0$ or $y_C \neq c_0$ respectively are true (and this again propagates up to 
the root node).

Conversely, we claim that if $\phi_\rho$ is satisfiable then $\rho$ is not a feasible run 
ending in $\halt$.  If the final leaf of $\phi_\rho$ is $\top$ then $\rho$ does not end in 
$\halt$.  On the other hand if the final leaf is $\bot$ then there must be a $\dec$ or $\chk$ 
transition for which the corresponding $y_C=c_0$ or $y_C \neq c_0$ conditions hold, and up to 
this point we again have that the $\psi_{s,p}$ conditions enforce that $y_A$ and $y_B$ behave 
as counters and so the corresponding check is indeed failed in $\rho$ so $\rho$ is not feasible.
}
\end{proof}



\section{Decidable syntactic classes}

In this section we will show that both the EPR and the Gurevich classes contain subclasses which are decidable in our setting.  We will first show in Section \ref{sec:exun} that the classes $[\exists^*,\allc,(0)]_=$ and $[\forall^*,\allc,(0)]_=$ of (respectively) purely existential and purely universal formulae with no functions (which are subclasses of the EPR class, and in the case of the first also of the Gurevich class) are decidable for both conjunctive and disjunctive satisfiability.  We will then show in Section \ref{sec:euf} that the class $[\nonec,\allc,\allc]_=$ of quantifier-free formulae (which is a subclass of the Gurevich class) is decidable for conjunctive satisfiability.  We conjecture that disjunctive satisfiability is also decidable for this class, but leave this as open for future work.


\subsection{Purely existential and purely universal function-free theories}\label{sec:exun}

We will now show that the 
classes of purely existential and purely universal function-free theories, $[\exists^*,\allc,(0)]_=$ and $[\forall^*,\allc,(0)]_=$, 
have the strong bounded model property, and hence the conjunctive and disjunctive satisfiability problems are decidable for these
classes.  The purely universal case follows immediately from the fact that if a purely universal formula is satisfiable then it is 
satisfiable on 
\revise{a structure whose universe is the Herbrand universe,
comprising the set of all ground terms (or, in the presence of equalities, some quotient of this set corresponding to equivalence classes of the congruence induced by equality).  In particular, for a given function-free signature this has (computably) bounded size and so the strong bounded model property holds.
}

For the purely existential case, as observed in \cite{rosen1995preservation} (Proposition 11), for a given function-free signature $\vocab$ 
with $k$ variables there is a satisfiable finite theory $\Gamma_k$, the \emph{$k$-Gaifman theory}, \revise{depending only on $\vocab$,} such that for any existential formula 
$\phi$ over $\vocab$, if $\phi$ is satisfiable then $\Gamma_k \models \phi$. 
Moreover, $\Gamma_k$ can be straightforwardly computed for a given $\vocab$ 
(see~\cite[paragraph before Proposition 11]{rosen1995preservation}).
Therefore if we find a model for $\Gamma_k$, the size of this 
model is a bound on the size of model required for any existential formula over $\sigma$, and hence $[\exists^*,\allc,(0)]_=$ has the 
strong bounded model property.  We thus have:

\funcfree*


\newcommand{\congclos}{\textrm{cclos}}
\newcommand{\kcc}[1]{#1\text{-}\congclos}
\newcommand{\grow}{\textrm{grow}}
\newcommand{\subst}{\textrm{subst}}
\newcommand{\qap}{Q_{\text{ap}}}

\subsection{Conjunctive satisfiability of EUF formulae}
\seclabel{euf}

In this section we will consider the case of the class $[\nonec,\allc,\allc]_=$ of quantifier-free formulae, also known as EUF (equality logic with uninterpreted functions).  We will show that if $T$ is a regular set of quantifier-free formulae specified by a tree automaton then the problem of satisfiability of $\bigwedge T$ is decidable.  For single formulae this was shown to be decidable by Ackermann in 1954~\cite{ackermann1954solvable}, but modern algorithms are based on \emph{congruence closure}~\cite{shostak1978algorithm}.
For the rest of the section, we will assume that the signature is algebraic, i.e., $\rels = \emptyset$.
This is because, we can systematically replace each $k$-ary relation symbol $R$ with a fresh
function symbol $f_R$, introduce constants $c_\top \neq c_\bot$ and convert each occurrence of $R(t_1,\ldots,t_k)$ to $f_R(t_1,\ldots,t_k) = c_\top$ (and similarly $\neg R(t_1,\ldots,t_k)$ to $f_R(t_1,\ldots,t_k)=c_\bot$).
We remark that this transformation also preserves regularity.
We now recall congruence closure, an essential ingredient in our decidability result.

\begin{definition}
Let $E=\set{s_i=t_i}_{i\in I}$ be a set of equality atomic formulae.
The congruence induced by $E$,  is the smallest equivalence relation $R$
on $\terms{\vocab,\vars}$
 such that
\begin{itemize}
    \item $(s_i,t_i) \in R$ for all $s_i=t_i\in E$, and
    \item $R$ is a \emph{congruence}: for any function symbol $f$ of arity $k$ and terms $a_1,\ldots,a_k,b_1,\ldots,b_k$ with $(a_i,b_i)\in R$ for all $i$, we have $(f(a_1,\ldots,a_k),f(b_1,\ldots,b_k))\in R$.
\end{itemize}
Let $S\subseteq \terms{\vocab, \vars}$ be a set of terms.  
The \emph{congruence closure of $S$ under $E$}, $\congclos_E(S)$, is the set of terms related by $R$ 
to elements of $S$.
\end{definition}

The reason why congruence closure is useful for EUF is that it is sufficient to tell us whether a set of EUF constraints is satisfiable. 

\begin{proposition}\label{prop:congsound}
Let $T=E\cup D$, where $E$ and $D$ are sets of equality and disequality atomic propositions respectively.  Then $\bigwedge T$ is unsatisfiable if and only if there exists a disequality $t\neq t' \in D$ such that $t' \in \congclos_E(\{t\})$.
\end{proposition}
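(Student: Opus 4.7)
The plan is to prove the two directions separately; the forward direction is essentially immediate from the definition of congruence, while the backward direction requires constructing an explicit term model.

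For the ``if'' direction, suppose there is some $t \neq t' \in D$ with $t' \in \congclos_E(\{t\})$. I would argue that any model $\mathcal{M}$ with an assignment $\alpha$ satisfying every equality in $E$ must also satisfy $\sem{t}_{\mathcal{M},\alpha} = \sem{t'}_{\mathcal{M},\alpha}$, contradicting $t \neq t' \in D$. This is a routine induction on the derivation of the congruence $R$ induced by $E$: the base case $(s_i,t_i)\in R$ for $s_i = t_i \in E$ is immediate from $\mathcal{M},\alpha \models s_i = t_i$; reflexivity, symmetry, and transitivity are trivial; and the congruence step uses that $\sem{f(a_1,\ldots,a_k)}_{\mathcal{M},\alpha} = \interp{f}(\sem{a_1}_{\mathcal{M},\alpha},\ldots,\sem{a_k}_{\mathcal{M},\alpha})$.

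For the ``only if'' direction, I would construct the standard quotient term model. Let $R$ be the congruence on $\terms{\vocab,\vars}$ induced by $E$, and take the universe to be $U = \terms{\vocab,\vars}/R$, writing $[t]$ for the equivalence class of $t$. Interpret each constant $c$ as $[c]$, each function symbol $f$ of arity $k$ by $\interp{f}([a_1],\ldots,[a_k]) = [f(a_1,\ldots,a_k)]$ (well-defined precisely because $R$ is a congruence), and let the assignment $\alpha$ send each variable $x$ to $[x]$. A straightforward structural induction gives $\sem{t}_{\mathcal{M},\alpha} = [t]$ for every term. Then each equality $s_i = t_i \in E$ is satisfied since $[s_i] = [t_i]$ by construction, and each disequality $t \neq t' \in D$ is satisfied because, by our hypothesis, $t' \notin \congclos_E(\{t\})$, which means exactly $(t,t') \notin R$ and hence $[t] \neq [t']$.

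The only subtle point, and the part I would be careful about, is the well-definedness of the interpretations of function symbols on equivalence classes, which is exactly what the congruence condition guarantees; everything else is standard bookkeeping. Note also that the construction uses the fact that the signature here is purely algebraic (no relation symbols), which the section explicitly arranges for.
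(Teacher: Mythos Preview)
Your proof is correct and is the standard argument for this well-known fact about congruence closure. The paper does not actually supply a proof of this proposition; it states it as a known result (the classical basis for EUF decision procedures, cf.\ Shostak~\cite{shostak1978algorithm}). Your term-model construction for the contrapositive of the ``only if'' direction, and the soundness-of-congruence argument for the ``if'' direction, are exactly the canonical justification one would give, so there is nothing to compare against.
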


The main theorem we will need is that if $S$ is a regular set of terms then the congruence closure of $S$ under equating the elements of regular sets $S_1,\ldots,S_n$ is itself regular, and is computable as a function of the inputs.

\begin{theorem}\label{thm:congclos}
Let $\Aa,\Aa_1,\ldots,\Aa_n$ be tree automata accepting terms from $\terms{\vocab, \vars}$.  
Let $E=\setpred{s=t}{s,t\in L(\Aa_i) \text{ for some $i$}}$.  Then $\congclos_E(L(\Aa))$ is regular, and given by an automaton computable as a function of $\Aa,\Aa_1,\ldots,\Aa_n$.
\end{theorem}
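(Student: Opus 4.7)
The plan is to construct an automaton for $\congclos_E(L(\Aa))$ by augmenting a product automaton with $\epsilon$-transitions that simulate single-step $E$-rewrites.  WLOG assume $\Aa, \Aa_1, \ldots, \Aa_n$ are complete (adding a dead state if necessary).  Form the product $B = \Aa \times \Aa_1 \times \cdots \times \Aa_n$, whose states $(q, q_1, \ldots, q_n)$ simultaneously track what each of the $n+1$ automata reaches on a subterm.  For each $i$, let $T_i$ denote the set of reachable $B$-states whose $\Aa_i$-component lies in $F_i$; equivalently, $T_i$ consists of exactly those product states reached by some term in $L(\Aa_i)$.

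Next I would build $\Aa'$ by adding, for each $i$ and each pair $(p, p') \in T_i \times T_i$, an $\epsilon$-transition from $p$ to $p'$, with accepting states $\{(q, q_1, \ldots, q_n) : q \in F_\Aa\}$.  Standard $\epsilon$-elimination for tree automata then yields a finite tree automaton; the whole construction is plainly computable.  The intuition is that an $\epsilon$-transition within $T_i$ models the substitution of one $L(\Aa_i)$-term for another, i.e., the application of an $E$-equality at some subterm position.

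Correctness has two directions.  For $(\supseteq)$: if $t \in \congclos_E(L(\Aa))$, pick $s \in L(\Aa)$ with $t \equiv_E s$; by a standard normalisation there is a chain of single-position rewrites $t = t_0 \to t_1 \to \cdots \to t_m = s$, each replacing a subterm in $L(\Aa_i)$ with another in the same $L(\Aa_i)$.  Induction on $m$ yields a run of $\Aa'$ on $t$ using one jump per rewrite: at each rewrite position, jump within the appropriate $T_i$ from the state reached by the original subterm to the state reached by its replacement, then continue above as if processing the rewritten term; the final state matches $B$'s final state on $s$, which is accepting.  For $(\subseteq)$: given an accepting $\Aa'$-run on $t$ with jumps at positions $p_1, \ldots, p_k$, replace each subterm bottom-up (in the current, possibly already-modified term) with some $u' \in L(\Aa_i)$ witnessing the corresponding jump's target, producing an $s \in L(\Aa)$ with $t \equiv_E s$.

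The main obstacle will be the $(\subseteq)$ direction, where one must carefully handle nested jumps: a rewrite at a deeper position modifies the subterm seen by a shallower jump, so the ``current term'' at each jump position has to be tracked inductively.  The key observation is that the $B$-component of the run's state faithfully reflects this modified term, so the condition $p \in T_i$ at each jump still witnesses that the effective subterm lies in $L(\Aa_i)$, legitimising the rewrite.
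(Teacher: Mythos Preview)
Your construction has a genuine gap in the $(\supseteq)$ direction.  The rewrite chain $t = t_0 \to t_1 \to \cdots \to t_m = s$ may pass through terms whose shape differs from that of $t$, and a run of $\Aa'$ on $t$ can only take $\epsilon$-jumps at positions that actually occur in $t$.  Your inductive step says ``at each rewrite position, jump within the appropriate $T_i$''---but the rewrite position is a position in $t_j$, not in $t$, and may simply fail to exist in $t$.

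Concretely, take constants $a,b,c$ and a unary $f$, with $L(\Aa_1)=\{c,f(a)\}$, $L(\Aa_2)=\{a,b\}$, $L(\Aa)=\{f(b)\}$.  Then $c \equiv_E f(a) \equiv_E f(b)$, so $c \in \congclos_E(L(\Aa))$.  But on the single leaf $c$, your $\Aa'$ can reach only $B(c)$ and (via the $T_1$-jump) $B(f(a))$; neither lies in $T_2$, and neither has accepting $\Aa$-component, so $c$ is rejected.  The crucial step $f(a)\to f(b)$ rewrites at the \emph{child} of $f$, a position that does not exist in the input term $c$, so no jump in your run can simulate it.

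This is precisely why the paper's construction iterates.  It takes the disjoint union $\Aa\sqcup\Aa_1\sqcup\cdots\sqcup\Aa_n$ and repeatedly applies a $\grow$ step that adds a silent transition $q_i\to q$ whenever the \emph{current} language $L(q)$ meets $S_i$, until a fixpoint is reached; two auxiliary operators $\subst$ and $\subst'$ are used to show that the fixpoint coincides with $\congclos_E$.  The iteration is essential: equalities discovered after one round enlarge the $L(q)$'s and thereby expose new overlaps with other $S_j$.  Your product idea might be salvageable with an analogous fixpoint (recomputing the jump sets after each round, based on reachability in the current automaton rather than on the static $\Aa_i$-component), but a single pass of $\epsilon$-transitions does not suffice.
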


This is essentially proved in \cite{dauchet1990decidability} (see final paragraph of p.195), but for 
completeness we provide a self-contained proof in 
Appendix \ref{sec:congreg}.

An additional piece of notation we use in
 Appendix \ref{sec:congreg} 
and later in this section is the following: given $n$ fresh symbols $X_1, X_2, \ldots, X_n$, a \emph{context} $C(X_1, \ldots, X_n)$
(or simply $C$)
over $\alphabet$ is a tree over $\alphabet$ enriched with arity-0 symbols $X_1,\ldots,X_n$,
with the restriction
that $C(X_1, \ldots, X_n)$ has exactly one occurrence of each of $X_1, X_2, \ldots, X_n$.
Given a context $C(X_1, X_2, \ldots, X_n)$ and terms $t_1, t_2, \ldots, t_n$
over the alphabet $\alphabet$, the expression $C[t_1/X_1, \ldots, t_2/X_2]$
is the term obtained by replacing $X_i$ by $t_i$ in the context $C$.
If for automaton $\Aa$ we have that starting in state $q_i$ at each $X_i$ the automaton can reach 
state $q$ at the root of $C$ then we write $q_1(X_1), \ldots,q_k(X_k) \rightarrow q(C)$.

We are now ready for the main theorem of this section.





\quantfreecon*
\noindent
The proof of \thmref{quantfreecon} proceeds as follows.  
We first consider the special case where the theory $T$ is of the form $\{s=t|s,t\in S_i\}\cup \{s \neq t| s\in T_j, t\in T'_j\}$ for some finite collection of regular sets of terms $S_i,T_j,T'_j$, and show that this is decidable by Theorem \ref{thm:congclos}.  We will then reduce the general case to this case by guessing a set of atomic propositional states to produce only true formulae and a set to produce only false formulae, giving a satisfiability problem in the form of the special case.  We will show that in general $T$ is satisfiable if and only if there exists such a guess which is feasible in the above sense, and from which we can show as a matter of propositional logic that $T$ holds.

\begin{lemma}\label{lem:conjsat}
The problem of determining whether $\bigwedge T$ is satisfiable, where 
\[T = \bigcup_{i=1}^n \{s=t|s,t\in S_i\} \cup \bigcup_{j=1}^m \{s \neq t| s\in T_j, t\in T'_j\},\]
where $S_1,\ldots,S_n,T_1,T'_1,\ldots,T_m,T'_m$ are given regular sets of terms, is decidable.
\end{lemma}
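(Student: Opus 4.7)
The plan is to reduce the problem to checking emptiness of the intersection of certain regular tree languages, using Proposition \ref{prop:congsound} and Theorem \ref{thm:congclos} as the key tools. Write $T = E \cup D$ where $E = \bigcup_{i=1}^n \{s=t : s,t \in S_i\}$ is the equality part and $D = \bigcup_{j=1}^m \{s \neq t : s \in T_j, t \in T'_j\}$ is the disequality part. By Proposition \ref{prop:congsound}, $\bigwedge T$ is unsatisfiable if and only if there exist $j \in \{1,\ldots,m\}$, $s \in T_j$ and $t \in T'_j$ such that $t \in \congclos_E(\{s\})$.

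The key reformulation is to observe that $\congclos_E(\{s\})$ is the $R$-equivalence class of $s$, where $R$ is the induced congruence. Hence by the definition of $\congclos_E$ on a set, $t \in \congclos_E(T_j)$ if and only if there exists $s \in T_j$ with $t \in \congclos_E(\{s\})$. So the existence condition from Proposition \ref{prop:congsound} is equivalent to $\congclos_E(T_j) \cap T'_j \neq \emptyset$ for some $j$.

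The rest is a direct appeal to Theorem \ref{thm:congclos}. Let $\Aa_i$ be the tree automaton accepting $S_i$ and $\Aa^{(j)}$ the automaton accepting $T_j$. Then the set $E$ in the statement of Theorem \ref{thm:congclos}, when instantiated with these $\Aa_i$, coincides with our equality set $E$. Applying the theorem with $\Aa = \Aa^{(j)}$, we can effectively compute a tree automaton $\mathcal{B}_j$ recognising $\congclos_E(T_j)$. Since tree automata are closed under intersection with computable construction, and emptiness of tree automata is decidable (\cite{comon2008tree}), we can decide whether $L(\mathcal{B}_j) \cap T'_j = \emptyset$. Iterating over $j = 1, \ldots, m$ yields the decision procedure: $\bigwedge T$ is satisfiable if and only if all these intersections are empty.

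There is no serious obstacle since Theorem \ref{thm:congclos} does the heavy lifting; the only conceptual point that deserves a sentence is the reduction from the element-wise existential statement ``there exists $s \in T_j$ with $t \in \congclos_E(\{s\})$'' to the set-level statement ``$t \in \congclos_E(T_j)$'', which follows immediately from the fact that congruence closure of a set collects exactly the equivalence classes meeting that set.
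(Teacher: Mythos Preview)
Your proof is correct and follows essentially the same approach as the paper: reduce via Proposition~\ref{prop:congsound} to a congruence-closure intersection problem, then invoke Theorem~\ref{thm:congclos} and decidability of tree-automaton emptiness. The only (minor) difference is that the paper tests $\congclos_E(T_j)\cap\congclos_E(T'_j)=\emptyset$ while you test $\congclos_E(T_j)\cap T'_j=\emptyset$; your version is equivalent (since $\congclos_E$ is a union of equivalence classes, any witness in $\congclos_E(T'_j)$ can be pulled back to a witness in $T'_j$) and saves one closure computation per~$j$.
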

\begin{proof}
Let $E=\bigcup_{i=1}^n \{s=t|s,t\in S_i\}$.  By Proposition \ref{prop:congsound}, $\bigwedge T$ is unsatisfiable if and only if there exists some $j$ and $s\in T_j, t\in T'_j$ with $s\in \congclos_E(\{t\})$; equivalently if $\congclos_E(T_j)\cap \congclos_E(T'_j)\neq \emptyset$ for some $j$.  By Theorem \ref{thm:congclos} we can compute automata $\Aa_i,\Aa'_i$ such that $L(\Aa_i)=\congclos_E(T_i), L(\Aa'_i)=\congclos_E(T'_i)$, and then $\bigwedge T$ is satisfiable if and only if $L(\Aa_i)\cap L(\Aa'_i) = \emptyset$ for all $i$.
\end{proof}

Let $\Aa$ be a tree automaton recognising our theory $T$, and let $\qap$ be the set of atomic propositional states of $\Aa$.  Without loss of generality, for every $q\in \qap$ we have either $L(q)=\{s=t|s\in S, t\in T\}$ or $L(q)=\{s\neq t|s\in S, t\in T\}$ for non-empty regular sets of terms $S,T$.  Let $c$ be a fresh constant symbol, and write $p_\top$ for the vacuously true formula $c=c$ and $p_\bot$ for the formula $c \neq c$.  

For disjoint subsets $Q_\top,Q_\bot \subseteq \qap$, define the automaton $\Aa_{Q_\top,Q_\bot}$ to be $\Aa$ with each atomic propositional state $q$ replaced by an automaton which accepts $\{p_\top\}$ if $q\in Q_\top$, $\{p_\bot\}$ if $q\in Q_\bot$ and $\{p_\top,p_\bot\}$ if $q\in \qap \setminus (Q_\top\cup Q_\bot)$.  Note that the signature of $\Aa_{Q_\top,Q_\bot}$ has no function symbols and only the constant $c$.  Write $\MM_c$ for the unique one-element model corresponding to this signature.

We will call the pair $(Q_\top,Q_\bot)$ of disjoint sets \emph{good} if
\begin{enumerate}[label=(\roman*)]
    \item $\bigwedge T_{Q_\top,Q_\bot}$ is satisfiable, where
    \[T_{Q_\top,Q_\bot} = \bigcup_{q\in Q_\top} L_\Aa(q) \cup \bigcup_{q\in Q_\bot} \{\neg p|p\in L_\Aa(q)\},\] and 
    \item $\MM_c \models \bigwedge L(\Aa_{Q_\top,Q_\bot})$.
\end{enumerate}

We claim that $\bigwedge T$ is satisfiable if and only if there exists a good pair $(Q_\top,Q_\bot)$.  This suffices to prove Theorem \ref{thm:quantfreecon}, since there are only finitely many candidate pairs and for a given pair condition (i) is decidable by Lemma \ref{lem:conjsat} (note that $\bigwedge \{s=t|s\in S, t\in T\}$ is equivalent to $\bigwedge\{s=t|s,t\in S\cup T\}$ if $S$ and $T$ are non-empty) and condition (ii) is decidable by Corollary \ref{cor:modelcheck}.

To prove the claim, suppose that $\bigwedge T$ is satisfiable, so say $\MM \models \bigwedge T$.  Let 
\begin{align*}
Q_\top &= \left\{q\in \qap\mid\MM\models p,\ \forall p\in L_\Aa(q)\right\}\\
Q_\bot &= \left\{q\in \qap\mid\MM\models \neg p,\ \forall p\in L_\Aa(q)\right\}.
\end{align*}
Clearly $\MM\models \bigwedge T_{Q_\top,Q_\bot}$.  For any $\phi \in L(\Aa_{Q_\top,Q_\bot})$, by the construction of $\Aa_{Q_\top,Q_\bot}$ we have $\phi = C[p_1/X_1,\ldots,p_k/X_k]$, where $C$ is purely propositional, the $p_i\in \{p_\top,p_\bot\}$ and in $\Aa$ we have $q_1(X_1),\ldots,q_k(X_k) \rightarrow \qacc(C)$ for some states $q_1,\ldots,q_k\in \qap$ such that for each $q_i$ if $q_i\in Q_\top$ then $p_i=p_\top$ and if $q_i\in Q_\bot$ then $p_i=p_\bot$.  By the definition of $Q_\top$ and $Q_\bot$, if $q\in Q_\top$ there exists $p'_i\in L_\Aa(q)$ with $\MM\models p'_i$, if $q\in Q_\bot$ there exists $p'_i\in L_\Aa(q)$ with $\MM\models \neg p'_i$ and if $q\in \qap \setminus (Q_\top\cup Q_\bot)$ then $L_\Aa(q)$ contains both true and false statements so pick $p'_i\in L_\Aa(q)$ with $\MM\models p'_i$ if $p_i=p_\top$ and $\MM \models \neg p'_i$ if $p_i=p_\bot$.  Then we have $\psi = C[p'_1/X_1,\ldots,p'_k/X_k] \in T$ so $\MM \models \psi$, and for each $i$ we have $\MM\models p'_i \Leftrightarrow \MM_c\models p_i$ so also $\MM_c \models \phi$, as required.

Conversely, suppose that $(Q_\top,Q_\bot)$ is a good pair, and let $\MM \models \bigwedge T_{Q_\top,Q_\bot}$.  For any $\phi\in T$ we have that $\phi = C[p_1/X_1,\ldots,p_k/X_k]$ for some purely propositional $C$ and each $p_i\in L_\Aa(q_i)$ for some $q_i\in \qap$.  Since $\MM\models \bigwedge T_{Q_\top,Q_\bot}$, we have that if $q_i\in Q_\top$ then $\MM\models p_i$ and if $q_i\in Q_\bot$ then $\MM\models \neg p_i$.  Hence by construction of $\Aa_{Q_\top,Q_\bot}$ there exist $p'_1,\ldots,p'_k\in \{p_\top,p_\bot\}$ such that $p'_i = p_\top$ if and only if $\MM\models p_i$ and $\psi = C[p'_1/X_1,\ldots,p'_k/X_k] \in L(\Aa_{Q_\top,Q_\bot})$, so $\MM_c\models \psi$.  Since $\MM_c \models p'_i \Leftrightarrow \MM\models p_i$ we also have $\MM\models \phi$.  Since $\phi\in T$ was arbitrary we have $\MM\models \bigwedge T$ so $\bigwedge T$ is satisfiable, completing the proof of the claim and hence of Theorem \ref{thm:quantfreecon}. \qed



\section{A decidable semantic class: coherent formulae}
\seclabel{coherent}


In this section, we will define a semantic class of purely existential formulae 
(with functions and equality) for which the disjunctive satisfiability problem is decidable.  
This class is inspired by, and as we will show 
in Theorem \ref{thm:cosat}, is a generalisation of, the notion of \emph{coherence} 
for program executions introduced in \cite{Mathur2019}.

\myparagraph{A brief recap of coherent uninterpreted programs}{
The syntax of uninterpreted programs over an FO vocabulary 
$\vocab = (\consts, \funcs, \rels)$ and program variables $\mathcal{X}$ 
is given by the grammar in \figref{grammar}.
An uninterpreted program $P$ does not a priori
fix an interpretation for the symbols in $\Pi$.
The semantics of such a program is instead
determined by fixing an FO model $\model = (U, \interp{})$
that in turn fixes the domain of the variables $\mathcal{X}$ 
as well as the interpretation of the symbols in $\vocab$,
along with an initial assignment to the program variables $\mathcal{X}$.
The verification problem for such a program $P$ against a FO post-condition $\varphi$
(whose free variables belong to $\mathcal{X}$)
asks if there is a model $\model$ and an initial assignment to variables
so that the resulting state after executing the program on $\model$ satisfies $\varphi$~\cite{Mathur2019}.

While undecidable in general, the work in~\cite{Mathur2019} identified
the subclass of \emph{coherent uninterpreted programs} for which the verification problem
becomes decidable, using an automata-theoretic decision procedure.
At a high level, they show that there is a constant space algorithm,
that checks, in a streaming fashion, each sequence of 
statements coming out of the control structure of the program,
and tracks the equality relationships between program elements, but 
not how they relate to previously-computed data.  
Under the conditions of `coherence' of executions, 
the streaming algorithm is both sound and complete --- when 
an execution $\sigma$ is coherent, it is feasible iff the streaming
algorithm accepts it. 
Finally, a constant space algorithm can naturally be compiled to a deterministic
finite state word automaton, which is the key component in the decision procedure
for the verification problem for coherent programs~\cite{Mathur2019}.

Executions of uninterpreted programs can be viewed as if they
were computing terms (over $\vocab$) and storing
them in program variables $\mathcal{X}$.
A \emph{coherent execution}, in particular,
satisfies two 
properties:
\emph{memoization}, which essentially enforces that terms 
cannot be deleted and then recomputed during the course of the execution, and 
\emph{early assumes}, which enforces that whenever a term $t$ is deleted from memory, we cannot later 
see an equality constraint involving a subterm of $t$.
Together, these properties ensure that 
\emph{congruence closure} of the equality constraints can be accurately
performed at each step of the execution, 
by simply tracking equality and disequality relationships between program variables at each step.
In turn, performing accurate congruence closure
is sufficient to check for feasibility of program paths.
}

\myparagraph{From coherent executions to coherent formulae}{
We will see in \secref{exform} that every execution can be translated 
into an existential formula, but of course not every formula arises in this way: in particular, whereas an execution has a linear
structure and can be sequentially processed by a streaming algorithm, this is not the case for a general formula.
Towards this, we first generalize the essence of the streaming algorithm of \cite{Mathur2019} to a a characterization \emph{local consistency}.
at each node of the formula (viewed as a tree) we record the equality relationships between the variables at that node, and check that these are consistent between adjacent nodes.

Local consistency is a regular property, and so (given the undecidability result of~\thmref{exundec}) it 
does not fully capture satisfiability.  
The missing ingredient is a global consistency property, and ts absence can allow
for applying $f$ to equal terms at opposite ends of the formula to yield different answers.  However, it turns 
out that a property similar to memoization is sufficient to ensure that this does not happen: specifically, 
we will enforce that there is no path in the formula along which a particular value is forgotten and then 
recomputed---we will call such a path a \emph{forgetful path}.  With coherence defined in this way it turns 
out that equality constraints propagate backwards such that `early assumes' does not need to be imposed 
separately, and so our decidable class is a strict extension of that in \cite{Mathur2019} even in the 
setting of program executions.
}

\subsection{Coherent formulae}

We first note that throughout this section we make a couple of technical assumptions on the form
of our formulae.  Firstly, we assume that all terms that appear in formulae have depth at most $1$: that is 
they are either variables or have the form $f(x_1,\ldots,x_k)$ for some function $f$.  We denote the set of 
such terms $\shallowterms$ (here $x \in \vars$, $f \in \funcs$):
\begin{align*}
\begin{array}{rcl}
\shallowterms &::=&  x \; | \; f(x, \ldots, x)
\end{array}
\end{align*}
In Appendix~\ref{sec:appshallow},
we show that
any existential regular theory can 
be converted to an existential regular theory of this form.  Secondly, we assume that each formula of our theory 
is purely conjunctive: that is, does not use the symbol $\vee$.  Handling disjunctions is simple 
but adds a slight extra level of complexity and so we relegate it to 
Appendix \ref{sec:disapp}.

We will first define our equivalent of the streaming algorithm from \cite{Mathur2019}: what it 
means for a formula to be locally consistent.
The definition essentially asks if one assign equality relationships between
the different terms and sub-terms of the formula, while
ensuring no contradictions between adjacent nodes.
We first formalize the notion of equality relationships between terms using congruence.
A relation $\sim \subseteq \shallowterms \times \shallowterms$
is said to be a congruence if $\sim$ is an equivalence relation,
and further, whenever $x_1 \sim x'_1, x_2 \sim x'_2 \ldots x_k \sim x'_k$,
then we also have $f(x_1, \ldots, x_k) \sim f(x'_1, \ldots, x'_k)$
(where $f$ has arity $k$).
We use $\mathsf{Congruences}_{\shallowterms}$ to denote the set of all such congruences.
Given a formula $\phi$, a \emph{congruence mapping} 
for $\phi$ (represented as a tree $T$) to be a map 
\[\congr:T \to \mathsf{Congruences}_{\shallowterms},\]
$\congr$ thus records the equality and inequality relationships between 
(a) variables and (b) depth-one terms, for each node of the tree.

\begin{definition}[Local consistency]
Let $\phi$ be a formula and let $\congr$ be a congruence mapping for $\phi$.
$\congr$ is said to \emph{locally consistent} if 
\begin{itemize}
\item for every node $u \in T$ of the form $u = \mlq t_1 = t_2\mrq$, we have
$(t_1, t_2) \in \congr(u)$
\item for every node $u \in T$ of the form $u = \mlq t_1 \neq t_2\mrq$, we have
$(t_1, t_2) \not\in \congr(u)$
\item for every node $u \in T$ of the form $u = \mlq\land(u_1, u_2)\mrq$, we have
$\congr(u_1) = \congr(u_2) = \congr(u)$
\item for every node $u \in T$ of the form $u = \mlq\exists x (u_1)\mrq$, we have
$\congr(u) \wedge \congr(u_1)[x'/x]$ is satisfiable, interpreting $\congr(u)$ and $\congr(u_1)$ as 
propositional formulae over $\shallowterms$
($x'$ is a fresh variable).
\end{itemize}
A formula $\phi$ is locally consistent if there is a locally consistent
congruence mapping $\congr$ for $\phi$.
\end{definition}

\noindent
Local consistency is a necessary but not sufficient condition for satisfiability as we show next.

\noindent
\begin{minipage}{.55\textwidth}
 \begin{example}
    \label{ex:local}
Consider the formula $\phi_= \wedge \phi_\neq$, where
\begin{align*}
    \phi_= &\equiv \exists x'.\left( x' = f(x) \wedge \exists x. \left(x = f(x') \wedge f(x)=x\right)\right) \\
    \phi_\neq &\equiv  \exists x'.\left( x' = f(x) \wedge \exists x. \left(x = f(x') \wedge f(x) \neq x\right)\right)
\end{align*}
The tree corresponding to which is as shown on the right.
Observe that this formula is indeed locally consistent.
However, also observe that it is also plainly unsatisfiable --- writing $x_0$ for the initial 
value of $x$, we have that $\phi_=$ enforces $f^2(x_0)=f^3(x_0)$ 
but $\phi_\neq$ enforces $f^2(x_0)\neq f^3(x_0)$.
\end{example}
\end{minipage}
\hfill
\begin{minipage}{.4\textwidth}
\vspace{-0.3in}

\begin{figure}[H]
\newcommand{\ystep}{0.75}
\scalebox{0.65}{
\begin{tikzpicture}
\node (phi) at (0,0) [draw, rounded rectangle] {$\land$};
\node (phi0) at (-2,-1*\ystep) [draw, rounded rectangle] {$\exists x'$};
\node (phi00) at (-2,-2*\ystep) [draw, rounded rectangle] {$\land$};
\node (phi000) at (-3,-3*\ystep) [draw, rounded rectangle] {$x' = f(x)$};
\node (phi001) at (-1,-3*\ystep) [draw, rounded rectangle] {$\exists x$};
\node (phi0010) at (-1,-4*\ystep) [draw, rounded rectangle] {$\land$};
\node (phi00100) at (-2,-5*\ystep) [draw, rounded rectangle] {$x = f(x')$};
\node (phi00101) at (0,-5*\ystep) [draw, rounded rectangle] {$f(x) = x$};
\node (phi1) at (2,-1*\ystep) [draw, rounded rectangle] {$\exists x'$};
\node (phi10) at (2,-2*\ystep) [draw, rounded rectangle] {$\land$};
\node (phi100) at (1,-3*\ystep) [draw, rounded rectangle] {$x'=f(x)$};
\node (phi101) at (3,-3*\ystep) [draw, rounded rectangle] {$\exists x$};
\node (phi1010) at (3,-4*\ystep) [draw, rounded rectangle] {$\land$};
\node (phi10100) at (2,-5*\ystep) [draw, rounded rectangle] {$x=f(x')$};
\node (phi10101) at (4,-5*\ystep) [draw, rounded rectangle] {$f(x)\neq x$};

\draw (phi) edge[-{Latex[length=1mm, width=1mm]}, thick] (phi0);
\draw (phi0) edge[-{Latex[length=1mm, width=1mm]}, thick] (phi00);
\draw (phi00) edge[-{Latex[length=1mm, width=1mm]}, thick] (phi000);
\draw (phi00) edge[-{Latex[length=1mm, width=1mm]}, thick] (phi001);
\draw (phi001) edge[-{Latex[length=1mm, width=1mm]}, thick] (phi0010);
\draw (phi0010) edge[-{Latex[length=1mm, width=1mm]}, thick] (phi00100);
\draw (phi0010) edge[-{Latex[length=1mm, width=1mm]}, thick] (phi00101);
\draw (phi) edge[-{Latex[length=1mm, width=1mm]}, thick] (phi1);
\draw (phi1) edge[-{Latex[length=1mm, width=1mm]}, thick] (phi10);
\draw (phi10) edge[-{Latex[length=1mm, width=1mm]}, thick] (phi100);
\draw (phi10) edge[-{Latex[length=1mm, width=1mm]}, thick] (phi101);
\draw (phi101) edge[-{Latex[length=1mm, width=1mm]}, thick] (phi1010);
\draw (phi1010) edge[-{Latex[length=1mm, width=1mm]}, thick] (phi10100);
\draw (phi1010) edge[-{Latex[length=1mm, width=1mm]}, thick] (phi10101);
\end{tikzpicture}
}
\end{figure}

\end{minipage} \\

We will define a subclass of formulae, which we call `coherent', for which local consistency will 
imply satisfiability.  As discussed above, the condition will be that there is no `forgetful path' in the tree 
along which a particular value is erased and then recomputed: more precisely, the value is stored in a variable at 
both ends of the path, but not at some intermediate point.  
We track this putative value through the 
tree using a family of unary predicates $P_t$, where intuitively $P_t$ expresses that this critical value 
is equal to a shallow term $t \in \shallowterms$ at a particular node.

\begin{definition}[Forgetful path]
Given a pair $(\phi,\congr)$, a \emph{forgetful path} is a pair of nodes $u,v$ of $\phi$ such that there exists
a family of unary predicates $\setpred{P_t}{t\in \shallowterms}$ over
the nodes of $\phi$ with the following properties:
\begin{itemize}
    \item $P=\bigcup_t P_t$ contains the path from $u$ to $v$.
    \item For some variables $x,y$ we have $P_x(u)$ and $P_y(v)$.
    \item For some node $w$ on the path from $u$ to $v$ we have that $\neg P_z(w)$ for all variables z
    \item $\set{P_t}_{t}$ is a minimal healthy family.
\end{itemize}
In the above, we say a family $\set{P_t}_t$ is \emph{healthy} if:
\begin{itemize}
    \item It is non-empty: there exists a term $t$ and node $w$ such that $P_t(w)$ holds.
    \item For each $t,t'$, $\forall w\in \nodes(\phi)$, if $(t,t')\in \congr(w)$ then $P_t(w) \Leftrightarrow P_{t'}(w)$.
    \item For each $t,t'$, $\forall w\in \nodes(\phi)$, if $P_t(w)\wedge P_{t'}(w)$ then $(t,t')\in \congr(w)$.
    \item For every node $w$ of the form $w=\land(w_1, w_2)$ we have $P_t(w_1) \Leftrightarrow P_t(w) \Leftrightarrow P_t(w_2)$ 
    for every term $t$.
    \item For every node $w$ of the form $w = \exists x (w_1)$, we have that 
    \[\congr(w) \wedge \congr(w_1)[x'/x]
    \wedge \bigwedge_{t,t':P_t(w) \wedge P_{t'}(w_1)} t=t'[x'/x]\] is satisfiable.    
\end{itemize}
\end{definition}

Note that the existence of a forgetful path is an MSO property.

\begin{definition}[Coherent Formulae]
\deflabel{coherent-formulae}
A formula $\phi$ is \emph{coherent} if it is either locally inconsistent or has a locally consistent $\congr$ with no forgetful path.  A theory $T$ is coherent if and 
only if $\phi$ is coherent for every $\phi\in T$.
\end{definition}

Since coherence is an MSO property, coherence of a regular theory is decidable in polynomial time 
(for a fixed signature).  

The main theorem of this section 
is that for coherent formulae, local consistency is sufficient for satisfiability.

\begin{theorem}\label{thm:consat}
A coherent formula $\phi$ is satisfiable if and only if it is locally consistent.
\end{theorem}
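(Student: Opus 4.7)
The $(\Rightarrow)$ direction is routine. Given a satisfying model $\MM$ with fixed witnesses for each existential quantifier, define at each node $u$ of the parse tree the congruence $\congr(u)$ on $\shallowterms$ to be the relation obtained by declaring $t \sim t'$ when $t$ and $t'$ evaluate to the same element under the variable assignment at $u$. Local consistency follows immediately from the semantics of $=$, $\neq$, $\wedge$ and $\exists$.

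For the $(\Leftarrow)$ direction the plan is to build a model directly from the witnessing locally consistent $\congr$, with the absence of forgetful paths playing the decisive role. The universe $U$ of the model $\MM = (U, \interp{})$ I would construct consists of one distinguished element $u_P$ for each maximal healthy family $P = \{P_t\}_{t\in \shallowterms}$ of $(\phi,\congr)$, together with a supply of fresh ``junk'' elements for situations where no family is active. A healthy family represents a single value that is tracked coherently through the formula; collapsing each such family to one universe element is exactly the construction that the no-forgetful-path condition licenses.

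The interpretation of a function symbol $f$ of arity $k$ is defined by pattern matching on healthy families. Given $(u_{P^1},\ldots,u_{P^k})$, if there exist a node $w$ and variables $x_1,\ldots,x_k$ with $P^i_{x_i}(w)$ holding for each $i$ and the shallow term $f(x_1,\ldots,x_k)$ lying in some maximal healthy family $P$, then set $\interp{f}(u_{P^1},\ldots,u_{P^k}) = u_P$; otherwise assign a junk element. Variables are assigned per node: at node $w$, set $\alpha_w(x) = u_P$ whenever $P_x(w)$ holds (a unique such $P$ exists by the congruence-closure axiom of healthy families), and $\alpha_w(x)$ to a fresh junk element otherwise. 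One then verifies by induction over the tree, proceeding from leaves upwards for terms and from the root downwards for subformulae, that $\MM,\alpha_w \models \psi$ for every subformula $\psi$ rooted at $w$: the atomic cases use the two bidirectional healthy-family axioms tying $P$ to $\congr$, the conjunction case is direct, and the existential case uses the satisfiability condition bundled into the definition of healthy so that a coherent witness exists when crossing an $\exists x$ node.

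The main obstacle, and the heart of the proof, is showing that $\interp{f}$ is well-defined: two witnessing tuples $(w,x_1,\ldots,x_k)$ and $(w',y_1,\ldots,y_k)$ for the same input tuple of universe elements must yield the \emph{same} output class $u_P$. The argument here is by contradiction. A disagreement gives two distinct maximal healthy families both containing $f(\cdot)$ patterns that ought to agree, and chasing the two variables $x_i,y_i$ along the path between $w$ and $w'$ produces a family along which the tracked value must be held in a variable at both endpoints; if well-definedness fails, some intermediate node must drop all variable-witnesses for the family, which is precisely a forgetful path, contradicting coherence. A small secondary technicality is ensuring enough maximal healthy families exist to cover every variable appearing at every node; this will be handled by enlarging $\congr$ minimally and, where necessary, introducing singleton families, which preserves local consistency and cannot create forgetful paths.
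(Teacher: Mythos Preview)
Your approach is the paper's approach in different clothing: minimal healthy families are exactly the equivalence classes of the relation $\sim$ on $\nodes(\phi)\times\shallowterms$ generated by the $\congr$-equalities together with the $\land$ and $\exists$ propagation rules, and the paper's model is built on these classes (via an auxiliary congruence-closed relation $\sim'$, followed by the lemma that $\sim=\sim'$ when there are no forgetful paths). Your construction folds these two steps into one, which is fine in principle.

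Where your sketch falls short is precisely at the well-definedness of $\interp{f}$, which you correctly identify as the heart of the proof but do not actually argue. You assert that a disagreement forces some input family $P^i$ to drop all variable witnesses at an intermediate node on the $w$--$w'$ path. The contrapositive---if every $P^i$ retains a variable witness $z_{j,i}$ at every node $u_j$ along the path then $(w,f(x_1,\ldots,x_k))$ and $(w',f(y_1,\ldots,y_k))$ lie in the same family---requires an explicit inductive propagation: at each step from $u_j$ to $u_{j+1}$ one must use that $\congr(u_{j+1})$ is a \emph{congruence} to carry $f(z_{j,1},\ldots,z_{j,k})$ to $f(z_{j+1,1},\ldots,z_{j+1,k})$. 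This is the substance of the paper's Lemma~\ref{lem:badpaths}, and without it the implication simply does not follow. You also need that each $P^i$ covers the entire $w$--$w'$ path in the term sense (so that the forgetful-path definition applies at all); the paper obtains this by noting that any $\sim$-witnessing walk from $(w,x_i)$ to $(w',y_i)$ in a tree must traverse the unique path between them.

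Two minor points. You write ``maximal'' healthy family throughout, but the definition of forgetful path requires a \emph{minimal} one; minimal healthy families are precisely the $\sim$-classes (Proposition~\ref{prop:healthtilde}), every pair $(w,x)$ lies in one, and so your ``secondary technicality'' about coverage dissolves. Your forward direction is fine.
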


Since local consistency is a regular property, we immediately have that satisfiability is decidable 
for coherent theories:

\coherentDec*

To prove Theorem \ref{thm:consat} we will consider a pair of equivalence relations on $\nodes(\phi)\times \st$: first the relation 
$\sim$ induced by transitive application of the equalities of $\congr$ on pairs of adjacent vertices; this is tractable 
but insufficient for the existence of a model since it may not respect function application.  
Secondly we will consider the relation $\sim'$ which is $\sim$ enriched by requiring respect for 
function application; this is intractable in general but we will show that if $\phi$ has no forgetful paths 
then in fact $\sim$ and $\sim'$ agree.

Concretely, define $\sim$ to be the least equivalence relation on $\nodes(\phi)\times \st$ consistent with:
\begin{enumerate}[label=(\arabic*)]
    \item\label{prop:congt} If $(t_1,t_2)\in \congr(w)$ then $(w,t_1)\sim (w,t_2)$
    \item\label{prop:andt} If $w$ is of the form $w=\land(w_1, w_2)$ then $(w_1,t)\sim (w,t) \sim (w_2,t)$ for all terms $t$
    \item\label{prop:ext} If $w$ is of the form $w = \exists x (w_1)$ then $(w,t)\sim (w_1,t')$ if $\congr(w) \wedge \congr(w')[x'/x] 
    \models t=t'[x'/x]$.
\end{enumerate}

Define $\sim'$ to be the least equivalence relation containing $\sim$ and consistent with
\begin{enumerate}[label=(4)]
    \item\label{prop:funapp} For each function $f$ of arity $k$, if $(u,x_1)\sim' (v,y_1),\ldots,(u,x_k)\sim' (v,y_k)$ 
    for some tuples of variables $(x_1,\ldots,x_k)$ and $(y_1,\ldots,y_k)$ then $(u,f(x_1,\ldots,x_k)\sim' (v,f(y_1,\ldots,y_k))$.
\end{enumerate}


Note that this containment can be strict. Consider the formula $\phi_=\wedge\phi_\neq$ from Example~\ref{ex:local}. Writing 
$u_=$ and $u_\neq$ for the deepest nodes of $\phi_=$ and $\phi_\neq$ respectively, for any locally consistent 
$\congr$ we have $(u_=,x) \sim' (u_\neq,x)$ but $(u_=,x) \not\sim (u_\neq,x)$.

\begin{proposition}\label{prop:simprime}
    If for every node $w$ of $\phi$ of the form $t_1 \neq t_2$ we have that $(w,t_1)\not\sim' (w,t_2)$ then
    $\phi$ is satisfiable.
\end{proposition}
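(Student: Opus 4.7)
The plan is to construct a model $\model$ directly from the equivalence classes of $\sim'$ on $\nodes(\phi) \times \st$, and then verify satisfaction by a structural induction on the tree of $\phi$. Concretely, I would take the universe $U$ to be the set of $\sim'$-classes, augmented with a single absorbing element $\star$ used only when a function is applied to arguments that are not realised anywhere in the formula; at each node $w$ I define an assignment $\alpha_w$ by $\alpha_w(y) := [(w,y)]_{\sim'}$, and each constant $c$ is interpreted as $[(\mathrm{root}, c)]_{\sim'}$.

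The interpretation of a function symbol $f$ of arity $k$ is defined as follows: given a tuple of classes $([(w_1,x_1)], \ldots, [(w_k,x_k)])$ with each $x_i$ a variable, if there exist a common node $w$ and variables $y_1,\ldots,y_k$ such that $(w_i,x_i) \sim' (w,y_i)$ for every $i$, then $\interp{f}$ returns $[(w, f(y_1,\ldots,y_k))]_{\sim'}$; otherwise it returns $\star$ (and $\star$ is then absorbing). Clause~(4) in the definition of $\sim'$ is exactly what ensures that this value is independent of the choice of common node and representatives, so $\interp{f}$ is well-defined.

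The invariant to prove by induction on the subformula $\phi_w$ rooted at $w$ is that $(\model,\alpha_w) \models \phi_w$ and that every shallow term $t$ appearing at $w$ evaluates under $\alpha_w$ to $[(w,t)]_{\sim'}$. For an equality leaf $t_1 = t_2$, local consistency gives $(t_1,t_2)\in \congr(w)$, so clause~(1) yields $(w,t_1)\sim (w,t_2)$ and both sides have the same value; for a disequality leaf $t_1 \neq t_2$ the hypothesis of the proposition directly supplies $[(w,t_1)]_{\sim'} \neq [(w,t_2)]_{\sim'}$; and at a conjunction $w = \land(w_1,w_2)$ clause~(2) makes $\alpha_w$, $\alpha_{w_1}$, and $\alpha_{w_2}$ agree on every shallow term, so the two hypotheses compose immediately.

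The delicate case, and the one I expect to be the main obstacle, is the existential node $w = \exists x(w_1)$. The natural witness for $x$ is $v := [(w_1,x)]_{\sim'}$, and I must show that $\alpha_w[x \mapsto v]$ coincides with $\alpha_{w_1}$ on every variable. For $y \neq x$ this reduces to $(w,y)\sim (w_1,y)$, which is clause~(3) applied with $t = t' = y$ (since $\congr(w)\wedge\congr(w_1)[x'/x]$ trivially entails $y = y$). Propagating the term-value invariant further across the $\exists$ boundary then amounts to $(w,f(y_1,\ldots,y_k)) \sim' (w_1,f(y_1,\ldots,y_k))$ whenever all arguments already match, which is exactly the content of clause~(4)—so clause~(4) is really doing double duty, both ensuring that $\interp{f}$ is well-defined and that the invariant survives quantifier nodes. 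Once these observations are in place the induction closes and $(\model,\alpha_{\mathrm{root}}) \models \phi$, which proves that $\phi$ is satisfiable.
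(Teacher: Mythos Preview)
Your proposal is correct and follows essentially the same approach as the paper: take the universe to be the $\sim'$-classes, interpret each function via clause~(4), and use per-node assignments $\alpha_w(y) = [(w,y)]$. Your explicit structural induction (in particular the handling of the existential case via clause~(3)) fills in details that the paper's terse proof elides; one minor remark is that the term-value invariant already follows directly from the definition of $\interp{f}$ at each node (the arguments $[(w,x_i)]$ share $w$ as a common representative), so it need not be threaded through the induction.
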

\begin{proof}
Take the universe to be the set of equivalence classes of $\sim'$, and the interpretation of variable 
$x$ at node $w$ to be $[(w,x)]$.  To define function interpretations, for each function $f$ of arity $k$ and
each tuple of atoms $(a_1,\ldots,a_k)$, if there exists a node $w$ and a tuple of variables $(x_1,\ldots,x_k)$ 
such that $a_i = [(w,x_i)]$ then define $f(a_1,\ldots,a_k)=[(w,f(x_1,\ldots,x_k))]$; this is well-defined 
independent of the choice of $w$ by property \ref{prop:funapp}.  Otherwise, choose $f(a_1,\ldots,a_k)$ arbitrarily. 
Then in this interpretation all equality statements hold by the definition of $\sim'$, and all disequality 
statements hold by the assumption that $(w,t_1) \not\sim' (w,t_2)$ for each disequality statement 
$t_1\neq t_2$.
\end{proof}

\begin{proposition}\label{prop:tildeonevert}
    For any node $w$ and terms $t,t'$ we have $(w,t)\sim (w,t')$ if and only if $(t,t')\in \congr(w)$.
\end{proposition}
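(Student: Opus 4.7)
For the $(\Leftarrow)$ direction, $(t,t') \in \congr(w)$ immediately gives $(w,t)\sim(w,t')$ by rule \ref{prop:congt}. The substantive direction is $(\Rightarrow)$, and my plan is to construct a family of \emph{faithful} assignments $\{\alpha_u : \shallowterms \to V\}_{u\in \nodes(\phi)}$ (for a sufficiently large set $V$) with the following three properties:
\begin{itemize}
    \item (faithfulness) $\alpha_u(t_1) = \alpha_u(t_2)$ iff $(t_1,t_2)\in \congr(u)$;
    \item (compatibility with rule \ref{prop:andt}) $\alpha_u = \alpha_{u_1} = \alpha_{u_2}$ whenever $u = \land(u_1,u_2)$;
    \item (compatibility with rule \ref{prop:ext}) $\alpha_u(t) = \alpha_{u_1}(t')$ whenever $u = \exists x (u_1)$ and $\congr(u)\wedge \congr(u_1)[x'/x] \models t = t'[x'/x]$.
\end{itemize}
Given such a family, the kernel of the map $(u,s)\mapsto \alpha_u(s)$ is an equivalence relation containing each of the three generating relations of $\sim$, and hence contains $\sim$ itself. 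So $(w,t)\sim(w,t')$ implies $\alpha_w(t)=\alpha_w(t')$, which by faithfulness yields $(t,t')\in \congr(w)$.

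The family is built by recursion downward from the root $r$, where we take any faithful representation of $\congr(r)$ (for instance, the canonical quotient map embedded in $V$). At a $\land$-child, local consistency forces $\congr(u) = \congr(u_1) = \congr(u_2)$, so we simply copy: $\alpha_{u_1} = \alpha_{u_2} = \alpha_u$. The main obstacle is the $\exists$-child: given $\alpha_u$ fixed at $u = \exists x (u_1)$, we need $\alpha_{u_1}$ to be both faithful to $\congr(u_1)$ and linked to $\alpha_u$ via rule \ref{prop:ext}.

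For this, local consistency supplies a satisfying assignment $\gamma$ (on shallow terms over $\vars \cup \{x'\}$) for $\congr(u) \wedge \congr(u_1)[x'/x]$. Since $\congr(\cdot)$ as a propositional formula specifies both equalities and disequalities, $\gamma$ is simultaneously faithful to $\congr(u)$ on $\vars$-terms and faithful to $\congr(u_1)[x'/x]$ on the $x'$-renamed terms. The restriction of $\gamma$ to $\shallowterms$ is thus a second faithful representation of $\congr(u)$ alongside $\alpha_u$, so there is a bijection $\sigma$ aligning their ranges with $\alpha_u = \sigma \circ (\gamma\mid_{\shallowterms})$; extending $\sigma$ injectively on the remainder of $\gamma$'s range using fresh values of $V$, we may replace $\gamma$ by $\sigma\circ\gamma$ and assume $\gamma$ literally extends $\alpha_u$. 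Setting $\alpha_{u_1}(t) := \gamma(t[x'/x])$ then works: faithfulness of $\alpha_{u_1}$ to $\congr(u_1)$ is inherited from faithfulness of $\gamma$ to $\congr(u_1)[x'/x]$, and compatibility with rule \ref{prop:ext} holds because any entailed equality $t = t'[x'/x]$ must hold under the model $\gamma$, giving $\alpha_u(t) = \gamma(t) = \gamma(t'[x'/x]) = \alpha_{u_1}(t')$. This completes the recursion and the proof.
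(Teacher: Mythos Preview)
Your proof is correct but takes a different route from the paper's. The paper argues by minimal counterexample: given a shortest derivation chain $w=u_0,\ldots,u_n=w$ witnessing $(w,t)\sim(w,t')$ with $(t,t')\notin\congr(w)$, the tree structure forces $u_1=u_{n-1}$, minimality gives $(t_1,t_{n-1})\in\congr(u_1)$, and the local-consistency condition at the single edge $\{w,u_1\}$ (together with the fact that rule~\ref{prop:andt} or~\ref{prop:ext} applied at both ends) then yields $(t,t')\in\congr(w)$. Your argument is instead semantic: you manufacture by top-down recursion a family $\{\alpha_u\}$ of faithful valuations whose joint kernel contains all three generating relations, so that $\sim$ sits inside that kernel and faithfulness at $w$ finishes. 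The paper's approach is shorter and exploits the tree geometry directly; yours is more constructive---it makes the satisfying assignment for $\congr(u)\wedge\congr(u_1)[x'/x]$ do explicit work at each $\exists$-edge rather than appearing only through the entailment in rule~\ref{prop:ext}, and the resulting family $\{\alpha_u\}$ essentially anticipates the model built in Proposition~\ref{prop:simprime}.
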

\begin{proof}
    Clearly if $(t,t')\in \congr(w)$ then $(w,t)\sim (w,t')$.  For the converse, supposing the contrary there is a 
    sequence of nodes $u_0=w,u_1,\ldots,u_n=w$ and terms $t_0=t,t_1,\ldots,t_n=t'$ such that $(u_i,t_i)\sim 
    (u_{i+1},t_{i+1})$ by a single application of one of rules \ref{prop:andt} or \ref{prop:ext}, possibly 
    followed by an application of rule \ref{prop:congt}, and consider a 
    counterexample with $n$ minimal. Note that $u_1=u_{n-1}$ and by minimality $(t_1,t_{n-1}) \in \congr(u_1)$
    (otherwise $(u_1,t_1)\sim (u_1,t_{n-1})$ is a smaller counterexample);
    then whichever of the final two local consistency conditions applies to the form of the parent node, 
    together with the fact that the corresponding rule \ref{prop:andt} or \ref{prop:ext} could be applied, 
    gives that $(t,t')\in \congr(w)$.
\end{proof}

\begin{proposition}\label{prop:healthtilde}
    If $\{P_t\}$ is a minimal healthy family and $u$ is some node and $t_0$ some term such that $P_{t_0}(u)$ then 
    we have that $P_{t}(v)\Leftrightarrow (u,t_0)\sim(v,t)$. 
\end{proposition}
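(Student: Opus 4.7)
The plan is to prove the two directions of the biconditional $P_t(v) \Leftrightarrow (u,t_0)\sim(v,t)$ separately, each one exploiting the minimality of $\{P_t\}$ in a different way.

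For the forward direction $P_t(v) \Rightarrow (u,t_0)\sim(v,t)$, I would define the candidate sub-family $R_t(v) := P_t(v) \wedge \big((u,t_0)\sim(v,t)\big)$ and verify that $\{R_t\}$ is itself healthy. Non-emptiness is immediate from $R_{t_0}(u)$. The biconditional bullets 2 and 4 for $\{R_t\}$ follow by combining the corresponding bullets for $\{P_t\}$ with the closure of $\sim$ under rules (1) and (2) respectively: if $(t,t')\in \congr(w)$ then $P_t(w)\Leftrightarrow P_{t'}(w)$ and $(w,t)\sim(w,t')$, and a similar pairing works for the $\land$-rule. Bullet 3 is inherited directly from bullet 3 for $\{P_t\}$ since $R_t \subseteq P_t$ pointwise. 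Bullet 5 is also inherited: the added conjunction of equalities $t = t'[x'/x]$ ranges over a subset of the $P$-pairs, so any model witnessing satisfiability for $\{P_t\}$ at an $\exists$-node witnesses it for $\{R_t\}$ too. Since $\{R_t\}$ is healthy and $R_t \subseteq P_t$ pointwise, minimality of $\{P_t\}$ forces $R_t = P_t$, which is the forward direction.

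For the reverse direction $(u,t_0)\sim(v,t) \Rightarrow P_t(v)$, I would induct on the length of a derivation of $(u,t_0)\sim(v,t)$ under the three generating rules. The base case is the reflexive instance $(u,t_0)\sim(u,t_0)$, covered by the hypothesis $P_{t_0}(u)$. A derivation extending by rule (1) is closed off by bullet 2 of healthy, and one extending by rule (2) by bullet 4. The delicate case is extension via rule (3) (the existential crossing): given $P_s(w)$ at some $w = \exists x(w_1)$ with $\congr(w) \wedge \congr(w_1)[x'/x] \models s = s'[x'/x]$, one must deduce $P_{s'}(w_1)$ (and symmetrically for the upward direction). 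My plan here is to argue by contradiction against minimality: assuming $\neg P_{s'}(w_1)$, one constructs a strict healthy subfamily of $\{P_t\}$ still containing $P_{t_0}(u)$ by pruning the side of the $\exists$-crossing containing the offending element and closing back under bullets 2 and 4; \propref{tildeonevert} combined with bullet 3 is invoked to show that the pruned family remains internally consistent and that bullet 5 is preserved (it can only lose pairs, never add them).

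The principal obstacle will be this rule (3) case. Unlike bullets 2 and 4, which are genuine two-way propagation principles, bullet 5 is phrased purely as a satisfiability constraint and so does not, by itself, force $P$ to propagate across $\exists$. The needed propagation must therefore be extracted from minimality: the challenge is to identify precisely which elements on which side of the crossing may be removed simultaneously while preserving all five bullets, so that the resulting family is a proper healthy subfamily. Making this pruning concrete, and checking that \propref{tildeonevert} ensures no residual violation of bullet 3 after the pruning, is where I expect the technical work to concentrate.
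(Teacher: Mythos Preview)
Your approach is essentially the paper's: the forward direction via healthiness of the $\sim$-induced family $P'_t(v)\Leftrightarrow (u,t_0)\sim(v,t)$ plus minimality (your explicit intersection $R_t := P_t \wedge P'_t$ is exactly the right way to unpack the paper's one-line ``follows by minimality''), and the reverse direction by induction along a $\sim$-chain.  The one difference is that the paper treats all three generating rules uniformly in the reverse induction, simply asserting that ``from each of these we can conclude inductively that $P_{t_i}(u_i)$'', whereas you correctly isolate rule~(3) as the only non-trivial step and plan a separate minimality-based argument for it; this extra care is warranted---bullet~5 is indeed only a satisfiability constraint, not a propagation rule---and your sketch goes beyond what the paper actually spells out.
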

\begin{proof}
    Observe that $\sim$ induces a healthy family $\{P'_t\}$ by $P'_{t}(v)\Leftrightarrow (u,t_0)\sim(v,t)$ and 
    so the $\Rightarrow$ direction follows by minimality of $\{P_t\}$.  On the other hand, if $(u,t_0)\sim(v,t)$ 
    then there exists a sequence of nodes $u=u_0,\ldots,u_n=v$ and terms $t_1,\ldots,t_n=t$ such that each  
    $(u_i,t_i)\sim (u_{i+1},t_{i+1})$ follows by a single application of one of rules 
    \ref{prop:andt} or \ref{prop:ext}, perhaps followed by an application of rule 
    \ref{prop:congt}.  But from each of these we can conclude inductively that $P_{t_i}(u_i)$.
\end{proof}

Note that it immediately follows that any $\{P_t\}$ defined as $P_t(u) \Leftrightarrow (u,t)\sim (u_0,t_0)$ for fixed 
$u_0$ and $t_0$ is minimal, since otherwise consider a minimal healthy family $\{P'_t\}$ contained in $\{P_t\}$.  Then by Proposition 
\ref{prop:healthtilde} we have $\{P'_t\}=\{P_t\}$.

We are now ready to prove the main lemma:

\begin{lemma}\label{lem:badpaths}
    If $(\phi,\congr)$ has no forgetful paths then $\sim$ and $\sim'$ agree.
\end{lemma}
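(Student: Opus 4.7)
\smallskip

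\noindent\textbf{Proof proposal for Lemma~\ref{lem:badpaths}.}
The inclusion $\sim \,\subseteq\, \sim'$ is immediate from the definition of $\sim'$, so it suffices to show that $\sim$ is already closed under rule~\ref{prop:funapp}. Concretely, I will prove the following statement by structural reasoning on the tree: for every function symbol $f$ of arity $k$, every pair of nodes $u,v$, and every pair of tuples of variables $(x_1,\dots,x_k)$ and $(y_1,\dots,y_k)$, if $(u,x_i)\sim(v,y_i)$ for every $i$ then $(u,f(x_1,\dots,x_k))\sim(v,f(y_1,\dots,y_k))$. Granting this, a straightforward induction on the derivation that generates $\sim'$ shows that $\sim'$ cannot relate strictly more pairs than $\sim$.

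To prove the statement, argue as follows. Since $\phi$ is a tree, there is a unique simple path $u=u_0,u_1,\dots,u_n=v$ between $u$ and $v$; for each $i\in[k]$, any derivation witnessing $(u,x_i)\sim(v,y_i)$ can be normalised, by removing back-and-forth and collapsing uses of rule~\ref{prop:congt} at a single node, to a walk along this simple path. For each $i$, consider the family $P^i_t(w)\;\Leftrightarrow\;(w,t)\sim(u,x_i)$, which is minimal healthy by (the note following) Proposition~\ref{prop:healthtilde}, and satisfies $P^i_{x_i}(u)$ and $P^i_{y_i}(v)$. Since $(\phi,\congr)$ has no forgetful path, at every node $u_\ell$ of the simple path from $u$ to $v$ there must exist some variable $z^i_\ell$ with $P^i_{z^i_\ell}(u_\ell)$, i.e.\ $(u_\ell,z^i_\ell)\sim(u,x_i)$; choose $z^i_0=x_i$ and $z^i_n=y_i$.

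It then remains to propagate $f$ along the path: show that for every edge $u_\ell \to u_{\ell+1}$ we have $\bigl(u_\ell, f(z^1_\ell,\dots,z^k_\ell)\bigr) \sim \bigl(u_{\ell+1}, f(z^1_{\ell+1},\dots,z^k_{\ell+1})\bigr)$. There are two cases. If $u_{\ell+1}$ is a conjunction child of $u_\ell$ (or vice versa), rule~\ref{prop:andt} preserves the term as-is, so via rule~\ref{prop:congt} at each endpoint and the fact that $\congr(u_\ell)=\congr(u_{\ell+1})$ is a congruence containing $(z^i_\ell,z^i_{\ell+1})$ for every $i$, the claim follows. If the edge corresponds to an existential $\exists x$, then rule~\ref{prop:ext} gives $\congr(u_\ell)\wedge\congr(u_{\ell+1})[x'/x] \models z^i_\ell = z^i_{\ell+1}[x'/x]$ for each $i$; since this entailment is preserved by congruence (on both the equality theory and the substitution), it lifts to $f(z^1_\ell,\dots,z^k_\ell) = f(z^1_{\ell+1},\dots,z^k_{\ell+1})[x'/x]$, which is exactly what rule~\ref{prop:ext} needs. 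Chaining these edge-wise equivalences from $\ell=0$ to $\ell=n$ yields $\bigl(u,f(\vec{x})\bigr)\sim\bigl(v,f(\vec{y})\bigr)$, as required.

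The routine parts are the normalisation of the $\sim$-derivation into a walk along the simple tree path and the bookkeeping of the substitution $[x'/x]$ when the edge is existential. The main obstacle I anticipate is the \emph{simultaneous} availability of the variable witnesses $z^i_\ell$ at every intermediate node: the no-forgetful-path hypothesis gives this per-index $i$, and the reduction to a single simple path (valid only because $\phi$ is a tree) is what lets these per-index witnesses coexist and be fed into the congruence at each $u_\ell$. If at any intermediate node some $P^i_{\cdot}$ failed on all variables, one would obtain a forgetful path with endpoints $u$ and $v$ and witness family $\{P^i_t\}$, contradicting the hypothesis.
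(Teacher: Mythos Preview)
Your argument is essentially the paper's, just run directly rather than by contraposition: the paper assumes rule~\ref{prop:funapp} creates a new pair, shows (exactly as you do) that if every index $i$ has a variable witness at every node of the simple $u$--$v$ path then the $f$-term already propagates under $\sim$, and concludes that some index lacks such a witness at some intermediate node, producing a forgetful path via the family $P_t(w)\Leftrightarrow (w,t)\sim(u,x_i)$. Your edge-by-edge propagation of $f$ is the same core step, and your use of Proposition~\ref{prop:healthtilde} to identify the minimal healthy family with a $\sim$-class matches the paper's.

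One correction in your existential case: you write ``rule~\ref{prop:ext} gives $\congr(u_\ell)\wedge\congr(u_{\ell+1})[x'/x] \models z^i_\ell = z^i_{\ell+1}[x'/x]$'', but rule~\ref{prop:ext} is the opposite implication (entailment $\Rightarrow$ $\sim$). What you actually need here is the converse for adjacent nodes---an edge analogue of Proposition~\ref{prop:tildeonevert}: if $(u_\ell,s)\sim(u_{\ell+1},t)$ across an $\exists x$ edge then the entailment holds. This does follow from your own normalisation remark (a shortest $\sim$-derivation between adjacent tree nodes collapses, by the same back-and-forth elimination as in Proposition~\ref{prop:tildeonevert}, to a single application of rule~\ref{prop:andt} or~\ref{prop:ext} flanked by rule~\ref{prop:congt}), so the fix is simply to invoke that rather than rule~\ref{prop:ext} directly. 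The paper's proof glosses over the same point with ``applying rule~\ref{prop:andt} or~\ref{prop:ext} as appropriate'' and implicitly relies on the same edge-level fact.
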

\begin{proof}
If $\sim$ and $\sim'$ do not agree then there is an opportunity to apply rule \ref{prop:funapp}, i.e. some nodes $u,v$ and 
variables $(x_1,\ldots,x_k)$ and $(y_1,\ldots,y_k)$ such that $(u,x_i) \sim (v,y_i)$ but $(u,f(x_1,\ldots,x_k)) \not\sim (v,f(y_1,\ldots,y_k))$.
Let $X_i$ be the set of nodes $w$ such that $(u,x_i) \sim (w,z)$ for some variable $z$ and $Y_i \supseteq X_i$ the set of nodes $w$
such that $(u,x_i) \sim (w,t)$ for some term $t$.  Then for each $i$, by considering the sequence of nodes $(u_0,\ldots,u_n)$ as in the 
proof of Proposition \ref{prop:healthtilde} we have that $Y_i$ contains the path from $u$ to $v$.
%
On the other hand, we claim that if $X_i$ contains the path from $u$ to $v$ for all $i$ then $(u,f(x_1,\ldots,x_k)) 
\sim (v,f(y_1,\ldots,y_k))$.  Indeed, let the path be $u=u_0,\ldots,u_n=v$.  Write $z_{j,1},\ldots,z_{j,k}$ for variables 
such that $(u,x_i)\sim (u_j,z_{j,i})$ and suppose for induction that $(u,f(x_1,\ldots,x_k)) \sim (u_j,f(z_{j,1},\ldots,z_{j,k}))$.
Then applying rule \ref{prop:andt} or \ref{prop:ext} as appropriate to the pair of nodes $u_j,u_{j+1}$ gives that
$(u,f(x_1,\ldots,x_k)) \sim (u_{j+1},f(z_{j+1,1},\ldots,z_{j+1,k}))$.  Hence by induction $(u,f(x_1,\ldots,x_k)) 
\sim (u_n=v,f(z_{n,1},\ldots,x_{n,k})) \sim (v,f(y_1,\ldots,y_k))$, where the last $\sim$ is by the fact that 
$(v,z_{n,i}) \sim (u,x_i) \sim (v,y_i)$ and hence by Proposition \ref{prop:tildeonevert} we have $(z_{n,i},y_i)\in \congr(v)$
so $(f(z_{n,1},\ldots,z_{n,k}),f(y_1,\ldots,y_k)) \in \congr(v)$.  So the claim is proved and we have some $i$ such that 
$X_i$ does not contain the path from $u$ to $v$.

Thus, for some $i$, there is a node $w\in Y_i\setminus X_i$ on the path from $u$ to $v$; let $t_0$ be a term such that 
$(u,x_i)\sim (w,t_0)$.  Define the family $\{P_t\}$ by $P_t(w') \Leftrightarrow (w',t)\sim (w,t_0)$.  This is a 
healthy family and is minimal by the remark following Proposition \ref{prop:healthtilde}. 
Since $Y_i$ contains the path from $u$ to $v$ 
we have that $P=\bigcup_t P_t$ contains the path; we also have that $P_{x_i}(u)$ and $P_{y_i}(v)$ but $\neg P_z(w)$ 
for all variables $z$, so $u,v$ is a forgetful path, as required.
\end{proof}

Theorem \ref{thm:consat} follows immediately from Lemma \ref{lem:badpaths} and Proposition \ref{prop:simprime}.

\subsection{Executions as formulae}\label{sec:exform}

We now show that definition of coherence given above generalises the notion of coherence for uninterpreted programs 
from \cite{Mathur2019}: we will show how to translate executions of uninterpreted programs into FO formulae, and 
that for any execution which is coherent in the sense of \cite{Mathur2019}, the resulting FO formula is coherent 
in the sense of this section.

An execution in the sense of \cite{Mathur2019} is a finite sequence of statements of the following three types:
``$\assume(x=y)$'', ``$\assume(x\neq y)$'', ``$x:=t$'',  where $x$ and $y$ are variables and $t\in \shallowterms$.
We translate to a formula over a signature which contains for each variable $x$ an additional dummy variable $x'$, and 
define the translation function $\psi$ inductively as follows:
\begin{align*}
    \psi(\assume(x=y)\cdot \sigma) &= x=y\wedge \psi(\sigma) \\
    \psi(\assume(x\neq y)\cdot \sigma) &= x \neq y \wedge \psi(\sigma) \\
    \psi(x := t \cdot \sigma) &= \exists x'.(x' = t \wedge \exists x. (x = x' \wedge \psi(\sigma))) \\
    \psi(\epsilon) &= \top
\end{align*}


\begin{theorem}\label{thm:cosat}
    Let $\sigma$ be a coherent execution.  Then $\psi(\sigma)$ is a coherent formula.
\end{theorem}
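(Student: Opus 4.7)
The plan is to exhibit a locally consistent congruence mapping $\congr$ on $\psi(\sigma)$ and argue, by contradiction, that no forgetful path can arise under it, by translating any putative forgetful path back into a memoization violation in $\sigma$.

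First, $\psi(\sigma)$ has an almost linear tree structure: each clause of the translation contributes a spine node (a conjunction whose other child is an atomic formula, or an existential quantifier with a single spine child), and the spine nodes are in natural correspondence with the positions of $\sigma$, with two spine nodes per assignment (one introducing the dummy $x'$ and one renaming back to $x$). For each spine node $u$ corresponding to step $i$ of $\sigma$, I would define $\congr(u)$ to be the congruence on $\shallowterms$ induced by the equality assumptions of $\sigma$ accumulated up to step $i$, restricted to terms over the variables currently in scope; for atomic children of conjunctions, $\congr$ is inherited from the parent. Local consistency of this $\congr$ is immediate: equality atoms are captured by construction, disequality atoms are respected (otherwise $\psi(\sigma)$ is locally inconsistent and coherence holds trivially), and the existential cases exactly mirror the dummy variable substitution used in the translation.

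Suppose now, for contradiction, that $\congr$ admits a forgetful path from $u$ to $v$, with intermediate node $w$ and a minimal healthy family $\{P_t\}$. By \propref{healthtilde}, the $P_t$ track the $\sim$-equivalence class of some fixed pair $(u_0, t_0)$; call the associated semantic value $\tau$. Let $i_u, i_v, i_w$ be the execution positions corresponding to $u, v, w$. The hypotheses $P_x(u)$ and $P_y(v)$ say that $\tau$ is held by program variable $x$ at step $i_u$ and by program variable $y$ at step $i_v$, while $\neg P_z(w)$ for every variable $z$ says that no program variable holds $\tau$ at step $i_w$; yet healthiness together with the path-covering condition force $\tau$ to still be expressible as a depth-one term $f(z_1,\ldots,z_k)$ over the current variables at $w$. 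Hence $\tau$ is in memory at step $i_u$, absent from every program variable at $i_w$, and back in memory at $i_v$, precisely the forget-and-recompute pattern forbidden by the memoization condition of \cite{Mathur2019}. This contradicts coherence of $\sigma$.

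The main obstacle is the bookkeeping in the first step: aligning spine nodes with execution steps and verifying that the inductive definition of $\congr$ propagates correctly across existential nodes in the presence of the dummy variable substitution, so that equalities and disequalities accumulated by the execution land in the right place in the tree. Once this alignment is made precise, the reduction of forgetful paths to memoization violations in the third step is largely a matter of unwinding the definition of healthy family and comparing it with the coherence conditions of \cite{Mathur2019}; no separate treatment of early assumes is needed, as noted in the introduction to this section.
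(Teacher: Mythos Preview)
Your proposal has two genuine gaps.

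First, the congruence mapping you define is not locally consistent. The $\land$ clause of local consistency requires $\congr(u_1)=\congr(u_2)=\congr(u)$, so every equality atom deep in the spine is forced into $\congr$ at each ancestor (modulo renaming at $\exists$ nodes). For $\sigma=\assume(x=y)\cdot\assume(a=b)$, your forward-accumulated $\congr$ at the outer $\land$ contains only $x=y$, while at its spine child (the inner $\land$) it also contains $a=b$; these are unequal, so the $\land$ condition fails. The paper instead works with the \emph{minimal} locally consistent $\congr$, in which equality constraints propagate both up and down the spine.

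Second, and more importantly, your reduction from a forgetful path to a memoization violation is incomplete, and the claim that early assumes can be ignored rests on a misreading. By \propref{healthtilde}, $P_x(u)$ and $P_y(v)$ mean $(u,x)\sim(v,y)$ for the full-tree relation $\sim$, which incorporates equality constraints from the \emph{entire} execution, including $\assume$ statements occurring after step~$i_v$. Thus $P_x(u)$ does not say that $x$ stores the value $\tau$ under the congruence $\cong_{\alpha(\sigma_{i_u})}$ accumulated so far, which is what memoization in \cite{Mathur2019} is about. The paper handles this by picking $v$ as early as possible and then splitting into cases: either $\comp(\sigma_u,x)\cong_{\alpha(\sigma_v)}\comp(\sigma_v,y)$ already, giving a direct memoization violation, or the equality is established only by some later $\assume(z=z')$, in which case the paper invokes the early-assumes property to obtain a variable storing $\tau$ just before that assume and derives a memoization violation there. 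The sentence in the introduction says that early assumes need not be imposed as a separate clause in the \emph{new} definition of coherent formulae; it does not say the proof of this theorem avoids using early assumes from the old definition---indeed it uses both memoization and early assumes.
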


\begin{proof}
We will show that the minimal $\congr$ has no forgetful paths; in this proof we assume that the reader is familiar with 
concepts from~\cite{Mathur2019}.

A preliminary observation is that it suffices to consider only `joint nodes' which are the final node arising from a particular instruction (that is, the 
node where the recursive call $\psi(\sigma)$ occurs in the translation above).  \dm{Is the meaning of this clear?} \mv{maybe. we can leave it for now.}
Indeed, suppose we have a bad path beginning or ending at a node $u$ which is not a joint node.  If $u$ is inside the translation
of an $\assume$ statement then moving the endpoint of the path to the next joint node above or below $u$ will still give 
a bad path, since the translation of an $\assume$ statement does not contain any quantifiers.  If $u$ is inside the translation 
of an assignment, if $u$ is not between the two existential quantifiers then we can move it up or down as before.  Otherwise, 
we have $P_z(u)$ for some variable $z$, and if $z \neq x'$ then we can move the endpoint up to the next joint node, while if 
$z \neq x$ we can move it down to the next joint node.

For any joint node $w$, write $\sigma_w$ for the execution up to $w$.  Define $\sim_w$ to be the relation $\sim$ as in the previous section 
but considering only the tree truncated at $w$.  Then whenever $w$ is before $v$ we have that $\sim_v$ 
contains $\sim_w$, and if $w$ is the final node then $\sim_w = \sim$.


Suppose that the formula truncated at $w$ is coherent.  Then, $\sim_w$ and $\sim'_w$ agree, and 
$(u,x) \sim_w (v,y)$ if and only if $\comp(\sigma_u,x) \cong_{\alpha(\sigma_v)} \comp(\sigma_v,y)$ (\cite[Proposition 1]{Mathur2019}).

Now suppose that $u,v$ is a forgetful path with $v$ as early as possible.  
Say $\sigma_v = \sigma_{v'} \cdot y := t_v$; then we have that 
$y$ is the variable for which $P_y(v)$, and also that in the final node $v'$ of $\sigma'_v$ we have 
$\neg P_z(v')$ for all variables $z$, and that the formula truncated at $v'$ is coherent.  Now we have 
$\comp(\sigma_{v'},z) \not \cong_{\alpha(\sigma_{v'})} \comp(\sigma_v,y)$ for all $z$ (by Proposition \ref{prop:healthtilde}), so if 
$\comp(\sigma_u,x) \cong_{\alpha(\sigma_v)} \comp(\sigma_v,y)$ we have a memoization violation.  
For every variable $z$ 
other than $y$ we have $\comp(\sigma_u,x) \not \cong_{\alpha(\sigma_v)} \comp(\sigma_v,z)$ since otherwise 
$\comp(\sigma_{v'},z) = \comp(\sigma_v,z) \cong_{\alpha(\sigma_{v'})} \comp(\sigma_u,x)$ so $(u,x) \sim'_{v'} (v',z)$, contradiction.

Suppose on the other hand $\comp(\sigma_u,x) \not \cong_{\alpha(\sigma_v)} \comp(\sigma_v,z)$ for all variables $z$, but for some later node $w$ we have
$\comp(\sigma_u,x) \cong_{\alpha(\sigma_w)} \comp(\sigma_v,y)$.  Let $w$ be the first such node, and write $\sigma_w = \sigma_{w'}\cdot \assume(z=z')$.
Then $\comp(\sigma_u,x)$ must be a superterm of $\comp(\sigma_{w'},z)$ modulo $\cong_{\alpha(\sigma_{w'})}$ (WLOG $z$ rather than $z'$), 
so by the early assumes property there is a variable $z''$ such that $\comp(\sigma_{w'},z'') \cong_{\alpha(\sigma_{w'})} \comp(\sigma_u,x)$.
But this gives rise to a memoization violation since $\comp(\sigma_u,x)$ is stored in a variable after $\sigma_x$, then 
is not after $\sigma_v$, and then is again after $\sigma_w$. \sloppy
\end{proof}

Note that the converse of Theorem \ref{thm:cosat} does not hold: there exist executions $\sigma$ which 
are not coherent in the sense of \cite{Mathur2019}, but for which $\psi(\sigma)$ is coherent.  For 
example, one of the examples in \cite{Mathur2019} of a non-coherent execution is 
$\sigma=z := f(x) \cdot z := f(z) \cdot \assume(x = y)$ (which fails early assumes), but it is 
easy to see that $\psi(\sigma)$ is coherent.  This means that even in the setting of uninterpreted 
programs our definition of coherence generalises that in \cite{Mathur2019}.


\section{Related work}
\seclabel{related}

The classical decision problem of determining the satisfiability/validity of 
a given first-order sentence is undecidable~\cite{god30,chu36,tur37}. 
Investigations into decidable fragments led to an almost complete classification 
of decidable fragments, summarized in~\cite{classical-book}. 
In this paper, we look at theories belonging to two of these classes, 
for which the classical decision problem is decidable:
the EPR (or Bernays-Sch{\"o}nfinkel) class consisting of 
function-free formulae with quantifiers of the form $\forall^*\exists^*$ 
and shown decidable by Ramsey~\cite{ramsey1930problem}; 
and the Gurevich class consisting of purely existential formulae 
and shown decidable by Gurevich~\cite{gurevich1976decision}. 
In both cases, decidability is established through the finite model property. 
We show that the problems of
conjunctive and disjunctive satisfiability of theories of formulae from 
these classes are both undecidable in general. 
We then identify sub-fragments of these classes for which these problems become decidable.
Blumensath and Gr{\"a}del~\cite{blumensath2004finite} consider the problem
of model checking an infinite structure (presented finitely) 
against a single formula in first order logic, and is orthogonal to our work.

Our regular theories consist of formulae over finitely many variables. Properties of finite variable logics have also been extensively studied. It is known that the classical decision problem is undecidable for formulae when the number of variables is 3 or more~\cite{kmw62}, but the problem is {\nexp}-complete for two-variable logics~\cite{mor75,gkv97}. Decidability is once again established due to the finite model property. Limiting the number of variables, limits the ability to count within the logic. Therefore, finite variable logics have been extended with counting quantifiers that allow one to say that ``there are at least $k$ elements''~\cite{imm}. 
Two-variable logics with counting quantifiers also admit decidable satisfiability~\cite{gor97}, 
even though they do not enjoy the finite model property. 
Extensions of two-variable logics with transitive closure predicates leads to undecidability~\cite{gor99}.

Satisfiability modulo theories (SMT) considers the problem of determining the satisfiability of formulas where functions and relations need to satisfy special properties often identified through first order theories. A number of decidable theories have been identified and these include linear arithmetic over rationals and integers, fixed-width bit vectors~\cite{bit-vectors}, floating point arithmetic~\cite{floating-point1,floating-point2}, strings~\cite{strings}, arrays~\cite{arrays}, and uninterpreted functions~\cite{shostak1978algorithm}.
The work of Krogmeier and Madhusudan~\cite{Krogmeier2022,Krogmeier2023} considers the problem
of learning logical formulae over an infinite space of FO and SMT formulae from a finite
set of models, and find applications in unrealizability of synthesis problems in various contexts~\cite{coherent-synthesis2020,Hu-unrealizability-2019,hu-unrealizability-2020} or in synthesis of axiomatizations~\cite{krogmeier2022axiomatizations}.
Combinations of SMT theories have also been
shown to be decidable under certain conditions~\cite{EQSMT2018,Tinelli1996,NelsonOppen1979}.


\section{Conclusions and Future Work}

Our work is a first attempt at a natural generalisation of the classical decision problem, 
also known as the `Entscheidungsproblem', that asks if a given first order logic formula is valid, 
to the case of an infinite set of formulae, presented effectively using a finite state automaton.

We have considered two of the six maximal decidable classes for the classical decision  problem, the EPR class and the Gurevich class.  We showed that each of these classes is undecidable 
for regular theories, but that each contains a decidable subclass.  Nevertheless, this is only a 
glimpse of the landscape within each of these classes, and much remains to explore: for example, what 
about the classes $[\exists^*\forall,\allc,(0)]_=$ and $[\exists^*\forall^2,\allc,(0)]_=$ inside the EPR class? 

A more challenging prospect is to investigate the remaining four decidable classes.  For example, 
Rabin's class $[\allc,(\omega),(1)]_=$ tells us that for the classical decision problem everything is 
decidable if we are restricted to unary predicates and a single unary function.  Is this also the 
case for regular theories?  At the very least the proof will have to be rather different, since 
the proof for single formulae goes by Rabin's theorem that MSO on trees is decidable, which as 
we have seen in Theorem \ref{thm:eprundec} fails for regular theories. 
Another interesting direction is to investigate decidability for the case of second order logic, possibly borrowing upon decidable classes such as {\tt EQSMT}~\cite{EQSMT2018} or first order logic with least fixpoints.

The formulation of the classical decision problem in terms of regular theories serves as yet another connection between logic and automata theory, and draws upon the rich literature and seminal results in these two fields, while also opening up new and interesting directions of research in both areas. 
We also believe that viewing problems in areas such as program verification and synthesis from the lens of this generalization is both natural and can pave the way for interesting research directions.
A popular paradigm in algorithmic (i.e., completely automatic) verification is to view programs as sets of program paths is often regular (also see~\secref{motivating}). A popular class of verification techniques, namely  trace abstraction (and its refinement)~\cite{traceAbstraction2009} crucially uses this insight and has been effective in pushing the boundary of practical automated verification~\cite{ultimate-automizer-2013,HeizmannAutomata2013} and test generation~\cite{ultimate-test-gen-2024}. We opt for a similar view, but align it more closely with logic-driven approaches such as symbolic execution~\cite{SymbExec1976,KLEE2008} where each program path is viewed as a formula that characterises its feasibility. We think this connection will be interesting to explore and may result into new classes of programs for which program verification and synthesis is decidable.


\begin{acks}
We thank anonymous reviewers for their constructive feedback on
an earlier drafts of this manuscript. 
Umang Mathur was partially supported by the Simons Institute for the Theory of Computing, 
and by a Singapore Ministry of Education (MoE) Academic Research Fund (AcRF) Tier 1 grant.
Mahesh Viswanathan are partially supported by NSF SHF 1901069 and NSF CCF 2007428. 
\end{acks}

\bibliographystyle{ACM-Reference-Format}
\bibliography{refs}

\clearpage
\newpage
\appendix


\section{Regularity of congruence closure}\label{sec:congreg}

\begin{theorem}[Theorem \ref{thm:congclos}]
Let $\Aa,\Aa_1,\ldots,\Aa_n$ be tree automata.  Let $E=\{s=t|s,t\in L(\Aa_i) \text{ for some $i$}\}$.  Then $\congclos_E(L(\Aa))$ is regular, and given by an automaton computable as a function of $\Aa,\Aa_1,\ldots,\Aa_n$.
\end{theorem}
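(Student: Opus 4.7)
My plan is to construct the desired automaton via a congruence-closure saturation on the product of the determinized input automata. First I would determinize and complete each of $\mathcal{A}, \mathcal{A}_1, \ldots, \mathcal{A}_n$---writing $Q_i, F_i$ for the resulting state and final-state sets---and form the product $\mathcal{B}$ with state space $Q = Q_0 \times \cdots \times Q_n$ and the componentwise transitions. Since $\mathcal{B}$ is deterministic and complete, every term $t \in \terms{\vocab, \vars}$ reaches a unique state $\phi(t) \in Q$, and $t \in L(\mathcal{A}_i) \iff \pi_i(\phi(t)) \in F_i$. Let $Q_r \subseteq Q$ be the set of states reachable in $\mathcal{B}$.

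Next I would define $\approx$ on $Q_r$ as the smallest equivalence relation satisfying: (i) $q \approx q'$ whenever $\pi_i(q), \pi_i(q') \in F_i$ for some common $i$, and (ii) the congruence rule: whenever $q_1 \approx q'_1, \ldots, q_k \approx q'_k$ and $\mathcal{B}$ has transitions $q_1(X_1),\ldots,q_k(X_k) \to q(f(X_1,\ldots,X_k))$ and $q'_1(X_1),\ldots,q'_k(X_k) \to q'(f(X_1,\ldots,X_k))$, then $q \approx q'$. Since $Q_r$ is finite, this is computable by standard fixpoint iteration. The output automaton $\mathcal{A}'$ has state set $Q_r/{\approx}$, induced transitions, and accepts an equivalence class $[q]$ iff it contains some $q'$ with $\pi_0(q') \in F_0$.

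To prove $L(\mathcal{A}') = \congclos_E(L(\mathcal{A}))$ I would argue both containments. For $\supseteq$, by induction on the derivation of $s \equiv_E t$ (where $\equiv_E$ denotes the congruence generated by $E$), I would show $\phi(s) \approx \phi(t)$: the base case $(s,t) \in E$ gives $\pi_i(\phi(s)), \pi_i(\phi(t)) \in F_i$ for some $i$ so rule (i) applies, and the congruence step of $\equiv_E$ is matched by rule (ii); then any $t \equiv_E s \in L(\mathcal{A})$ satisfies $[\phi(t)] = [\phi(s)]$, which is accepting. For $\subseteq$, by induction on the saturation sequence I would extract, for each pair $q \approx q'$ produced, witnesses $s,t$ with $\phi(s)=q$, $\phi(t)=q'$, and $s \equiv_E t$: for rule (i) the witnesses exist by reachability and lie in $L(\mathcal{A}_i)$ so $(s,t) \in E$; for rule (ii) we take $s = f(s_1,\ldots,s_k)$ and $t = f(t_1,\ldots,t_k)$ from the witnesses at the children.

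The main subtlety will be ensuring the witness extraction in the $\subseteq$ direction goes through, which is exactly why the saturation must be performed only on reachable states---otherwise an application of rule (i) could merge an empty class with a nonempty one and the induction in the extraction step would fail at the base case. Determinization accounts for the sole (exponential) blowup; everything else is polynomial in $|Q|$ and clearly computable as a function of the input automata.
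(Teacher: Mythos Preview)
Your $\subseteq$ direction does not go through, and in fact the construction overapproximates $\congclos_E(L(\mathcal{A}))$. The witness extraction shows only that whenever $q\approx q'$ there exist \emph{some} terms $s,t$ with $\phi(s)=q$, $\phi(t)=q'$ and $s\equiv_E t$; but to conclude $L(\mathcal{A}')\subseteq\congclos_E(L(\mathcal{A}))$ you need that \emph{every} term $u$ with $\phi(u)\approx q'$ (for some $q'$ with $\pi_0(q')\in F_0$) is $\equiv_E$-related to something in $L(\mathcal{A})$, and two terms reaching the same product state need not be congruent. Concretely, take constants $a,b,c$ and a unary $f$; let $L(\mathcal{A})=\{f(a)\}$ and $L(\mathcal{A}_1)=\{a,b\}$, so that $\congclos_E(\{f(a)\})=\{f(a),f(b)\}$. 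Rule~(i) forces $\phi(a)\approx\phi(b)$ and then rule~(ii) forces $\phi(f(a))\approx\phi(f(b))$, making $[\phi(f(b))]$ an accepting class. But in the subset-construction product both $f(b)$ and $c$ land in the all-$\emptyset$ sink state, so your $\mathcal{A}'$ also accepts $c$ (and $f(c)$, $f^2(a)$, \ldots). The reachability restriction does not help here: $c$ is reachable and sits in a state you are forced to merge with $\phi(f(a))$.

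The root cause is that determinize-and-product already identifies terms that are not $\equiv_E$-related, and quotienting can only coarsen further; in the example no quotient of your fixed $\mathcal{B}$ can separate $f(b)$ from $c$. The paper's construction is quite different: it keeps everything nondeterministic, forms the \emph{disjoint union} $\mathcal{A}\sqcup\mathcal{A}_1\sqcup\cdots\sqcup\mathcal{A}_n$, and iteratively adds $\epsilon$-transitions $q_i\to q$ (with $q_i$ the accepting state of $\mathcal{A}_i$) whenever the current $L(q)$ meets $S_i$. Each such transition models the one-step rewrite ``replace a subterm in $S_i$ by another element of $S_i$'', so an accepting run of the saturated automaton directly encodes a chain of such substitutions from a term of $L(\mathcal{A})$; correctness is then argued via an auxiliary substitution operator on sets of terms whose least fixed point is shown to coincide with congruence closure.
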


Fix $S=L(\Aa)$ and $S_i=L(\Aa_i)$.  The algorithm to compute $\congclos_E(S)$ proceeds as follows:
\begin{itemize}
    \item Form the automaton $\Aa'=\Aa \sqcup \Aa_1 \sqcup \ldots \sqcup \Aa_n$, with accepting state the accepting state of $\Aa$.  Let the accepting states of the $\Aa_i$ be denoted by $q_i$.
    \item Repeatedly apply the $\grow$ operation to $\Aa'$ until stability, where grow consists of testing for every state $q$ and every $i$ whether $L(q)\cap S_i$ is non-empty, and if so adding a silent transition $q_i \rightarrow q$ (if not already present).
\end{itemize}

This clearly terminates, since $\grow$ adds new transitions but not new states, so it remains to prove that $L(\grow^\omega(\Aa')) = \congclos_E(S)$.  In order to prove that this operation on automata corresponds to the set of terms for which there exist proofs of congruence using the transitivity and congruence rules, we will define two distinct but related functions on sets of terms, show that both have congruence closure as their least fixed point, and finally use their relationship to $\grow$ to show that $\grow^\omega$ also gives the congruence closure.

The first function, $\subst$, is given by substituting subterms of the input term which are in some $S_i$ with another element of $S_i$, possibly for several subterms at the same time (and several $i$); more formally, for any set of terms $L$ define 
\begin{align*}
\subst&(L) = \{t\in \Ss^*|\exists s\in L \text{ s.t. $\exists$ context $C$ and terms } t_1,t'_1,\ldots,
t_k,t'_k \\ &\text{ s.t. for all $i$, }
t_i,t'_i\in S_j \text{ for some $j$, and } s=C[t_1/X_1,\ldots,t_k/X_k]\\ 
&\text{ and } t=C[t'_1/X_1,\ldots,t'_k/X_k]\}.
\end{align*}

\begin{lemma}\label{lem:substom}
For any set $L$, $\subst^\omega(L)=\congclos_E(L)$, where $E=\{s=t|s,t \in S_i \text{ for some $i$}\}$.
\end{lemma}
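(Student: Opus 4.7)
The plan is to prove both inclusions by relating $\subst$ to the standard one-step rewriting characterization of congruence closure. First I would observe a basic symmetry: since $E = \{s=t \mid s,t\in S_i \text{ for some } i\}$, the relation $E$ is symmetric, and so the congruence $R$ generated by $E$ coincides with $\leftrightarrow^*_E$, the reflexive-symmetric-transitive closure of the one-step rewriting relation $\rightarrow_E$, where $s \rightarrow_E t$ iff $t$ is obtained from $s$ by replacing one subterm $u$ (in some $S_i$) by another term $v \in S_i$. Then $\congclos_E(L) = \{t \mid \exists s \in L, s \leftrightarrow^*_E t\}$.

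For the forward inclusion $\subst^\omega(L) \subseteq \congclos_E(L)$, I would show by induction on $n$ that $\subst^n(L) \subseteq \congclos_E(L)$. The key step: if $s \in \congclos_E(L)$ and $t \in \subst(\{s\})$, then by the definition of $\subst$ we have $s = C[t_1/X_1,\ldots,t_k/X_k]$ and $t = C[t'_1/X_1,\ldots,t'_k/X_k]$ with each $(t_i, t'_i) \in E$. Applying the congruence rule for $R$ at each substitution position in turn, we get $(s, t) \in R$; since $s \in \congclos_E(L)$, also $t \in \congclos_E(L)$.

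For the reverse inclusion $\congclos_E(L) \subseteq \subst^\omega(L)$, it suffices to show that for each $s \in L$, the equivalence class $[s]_R$ is contained in $\subst^\omega(\{s\})$. Using the rewriting characterization above, take any $t \in [s]_R$; then there is a chain $s = s_0 \leftrightarrow_E s_1 \leftrightarrow_E \cdots \leftrightarrow_E s_m = t$. Each one-step rewrite $s_i \leftrightarrow_E s_{i+1}$ is in particular an instance of $\subst$ (the context $C$ and the single replaced subterm), so $s_{i+1} \in \subst(\{s_i\})$, and hence $t \in \subst^m(\{s\}) \subseteq \subst^\omega(\{s\})$. Taking unions over $s \in L$ gives the claim. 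The main obstacle is just making the rewriting characterization of congruence closure rigorous---specifically, verifying that $\leftrightarrow^*_E$ is closed under function application (so it contains $R$), which is handled by rewriting each argument $a_i$ to $b_i$ in turn at its position inside $f(a_1,\ldots,a_k)$. Everything else is routine structural bookkeeping.
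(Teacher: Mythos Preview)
Your argument is correct. The forward inclusion is the same as the paper's. For the reverse inclusion you take a genuinely different route: the paper inducts directly on the length of a congruence-closure derivation, with a case split on whether the last step is transitivity or the congruence rule. In the transitivity case it uses $s\in\subst^\omega(\{t\})\Rightarrow\subst^\omega(\{s\})\subseteq\subst^\omega(\{t\})$; in the congruence case it uses $f(\subst^\omega(\{s_1\}),\ldots,\subst^\omega(\{s_m\}))\subseteq\subst^\omega(\{f(s_1,\ldots,s_m)\})$.

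You instead factor through the standard term-rewriting characterization $R=\leftrightarrow^*_E$ and then observe that each one-step rewrite is literally a $\subst$ application with a one-hole context, so a rewriting chain of length $m$ lands in $\subst^m(\{s\})$. This is cleaner and even gives an explicit bound, but the real work---showing that $\leftrightarrow^*_E$ is closed under function application---is essentially the same context-lifting argument the paper performs inline in its congruence case. So the two proofs differ mainly in packaging: the paper argues directly from the inductive definition of $R$, while you invoke a standard rewriting lemma and get the conclusion almost for free.
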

\begin{proof}
Clearly $\subst^\omega(L) \subseteq \congclos_E(L)$: indeed, $\subst(L) \subseteq \congclos_E(L)$ (since $t_i=t'_i \in E$ for all $i$ and hence $(s,t)\in R$ by congruence), so also $\subst^\omega(L) \subseteq \congclos_E(L)$.

For the converse, note that it suffices to consider singleton sets $L=\{t\}$.  We will proceed by induction on the length of the congruence proof, but some care is required: we have to perform the induction simultaneously for all singletons $t$.  Thus our inductive hypothesis is `for all $t$, $\kcc{k}_E(\{t\}) \subseteq \subst^\omega(\{t\})$', where $\kcc{k}(L)$ is the set of elements with congruence closure proofs from $L$ of length at most $k$.  The base case is trivial since $\kcc{0}(L)=L$.

Now suppose that $t'\in \kcc{(k+1)}(\{t\})$.  The final step of the congruence proof is either
\[\text{(trans): }t=s, s=t \Rightarrow t=t'\]
or
\[\text{(cong): }s_1=s'_1,\ldots,s_m=s'_m \Rightarrow t=f(s_1,\ldots,s_m)=f(s'_1,\ldots,s'_m)=t',\]
and we consider the two cases separately.

Case (trans): since $t=s$ and $s=t'$ appeared as conclusions in the first $k$ steps of the proof, we have $s\in \kcc{k}(\{t\})$ and $t'\in \kcc{k}(\{s\})$.  Hence by the inductive hypothesis $s\in \subst^\omega(\{t\})$, so $\subst^\omega(\{s\})\subseteq \subst^\omega(\{t\})$.  Also by the inductive hypothesis we have $t'\in \subst^\omega(\{s\})$ and so $t'\in \subst^\omega(\{t\})$, as required.

Case (cong): by the inductive hypothesis we have $s'_i\in \subst^\omega(\{s_i\})$, and hence $t'\in f(\subst^\omega(\{s_1\}),\ldots,\subst^\omega(\{s_m\})) \subseteq \subst^\omega(\{f(s_1,\ldots,s_m)\}) = \subst^\omega(\{t\})$, as required. \sloppy
\end{proof}

Now it is easy to see that $\subst(L(\grow^k(\Aa'))) \subseteq L(\grow^{k+1}(\Aa'))$.  Indeed, if $t\in \subst(L(\grow^k(\Aa')))$ then $\exists s\in L(\grow^k(\Aa'))$ and context $C$ and terms $t_1,t'_1,\ldots,t_k,t'_k$ with each $t_i,t'_i \in S_{j_i}$ for some $j_i$ and $s=C[t_1/X_1,\ldots,t_m/X_m]$, $t=C[t'_1/X_1,\ldots,t'_m/X_m]$.  Since $s\in \grow^k(\Aa')$ there must exist states $q'_1,\ldots,q'_m$ such that in $\grow^k(\Aa')$ we have $t_i\in L(q'_i)$ and $q'_1(X_1),\ldots,q'_m(X_m)\rightarrow \qacc(C)$.  Since $t_i \in L(q'_i)\cap S_{j_i}$ we have that in $\grow^{k+1}(\Aa')$ there is a silent transition $q_{j_i} \rightarrow q'_i$ and so $S_{j_i}\subseteq L(q'_i)$ and in particular $t'_i\in L(q'_i)$, and hence $t\in L(\grow^{k+1}(\Aa'))$, as required.  Combining this with Lemma \ref{lem:substom} gives
\begin{equation}\label{eq:substom}\subst^\omega(L(\Aa')) = \congclos_E(L(\Aa')) \subseteq L(\grow^\omega(\Aa')).\end{equation}

For the reverse containment, we define a slightly stronger function $\subst'$, which allows substituting subterms in $S_i$ not only by other elements of $S_i$ but of $\subst^\omega(S_i)$; formally, for any set $L$ define

\begin{align*}
\subst'&(L) = \{t\in \Ss^*|\exists s\in L \text{ s.t. $\exists$ context $C$ and terms } t_1,t'_1,\ldots,
t_k,t'_k\\ &\text{ s.t. for all $i$, }
t_i\in S_j, t'_i \in \subst^\omega(S_j) \text{ for some $j$, and } \\ &s=C[t_1/X_1,\ldots,t_k/X_k] \text{ and }
t=C[t'_1/X_1,\ldots,t'_k/X_k]\}.
\end{align*}

Unsurprisingly, a single application of $\subst'$ can be obtained as a repeated application of $\subst$ and so it has the same least fixed point as $\subst$:

\begin{lemma}\label{lem:substpom}
For any set $L$, $\subst'^\omega(L)=\subst^\omega(L)$.
\end{lemma}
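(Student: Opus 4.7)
The plan is to prove the two inclusions $\subst^\omega(L) \subseteq \subst'^\omega(L)$ and $\subst'^\omega(L) \subseteq \subst^\omega(L)$ separately, where only the second direction requires any real work. The first is immediate from the fact that $\subst(M) \subseteq \subst'(M)$ for every set $M$: every $\subst$-step (which replaces subterms in some $S_j$ by other elements of the same $S_j$) is a fortiori a $\subst'$-step (which allows replacements by elements of the larger set $\subst^\omega(S_j)\supseteq S_j$). Iterating yields $\subst^\omega(L)\subseteq \subst'^\omega(L)$.

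For the reverse direction, I would first establish a context-lifting sublemma: for any term $b$, any context $D$ with a single hole, and any $a\in \subst^\omega(\{b\})$, we have $D[a]\in \subst^\omega(\{D[b]\})$. The point is that each individual $\subst$-step from $b$ to some $b'$ only replaces subterms of $b$, and these very same subterms occur at the corresponding positions inside $D[b]$, so the same simultaneous substitution yields $D[b']$ as an element of $\subst(\{D[b]\})$. Applying this reasoning to each step in a length-$n$ derivation of $a$ from $b$ gives $D[a]\in \subst^n(\{D[b]\})$. I would also note the (obvious) fact that $\subst$ distributes over unions, so $\subst^\omega(S_j) = \bigcup_{u\in S_j} \subst^\omega(\{u\})$.

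Now suppose $t\in \subst'(L)$, witnessed by $s\in L$, a context $C(X_1,\ldots,X_k)$, indices $j_1,\ldots,j_k$, and terms $t_i\in S_{j_i}$ and $t'_i\in \subst^\omega(S_{j_i})$ with $s = C[t_1/X_1,\ldots,t_k/X_k]$ and $t = C[t'_1/X_1,\ldots,t'_k/X_k]$. By distributivity, for each $i$ there exists $u_i\in S_{j_i}$ with $t'_i\in \subst^\omega(\{u_i\})$. A single $\subst$-step applied to $s$, replacing each $t_i$ by $u_i$ simultaneously (valid because $t_i,u_i\in S_{j_i}$), yields $s' = C[u_1/X_1,\ldots,u_k/X_k]\in \subst(L)$. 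Now I apply the context-lifting sublemma $k$ times, successively transforming each $u_i$ into $t'_i$ inside the surrounding context (treating the other coordinates as frozen, which is legitimate because at each stage we have a context around a single $u_i$). This shows $t\in \subst^\omega(\{s'\})\subseteq \subst^\omega(L)$, hence $\subst'(L)\subseteq \subst^\omega(L)$. Iterating, $\subst'^\omega(L)\subseteq \subst^\omega(L)$, completing the proof.

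The main (mild) obstacle I expect is making precise the context-lifting sublemma, because $\subst$ is defined using simultaneous substitution at several positions of a common context, and one has to verify that a substitution performed at positions inside $b$ can be realised as a substitution performed at the corresponding positions inside $D[b]$ using the same (or a nested) context argument. This is essentially bookkeeping but is the only place where the definition of $\subst$ is actually used; everything else is set-theoretic iteration.
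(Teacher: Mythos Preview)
Your proof is correct. Both directions are handled soundly: the trivial inclusion is fine, and your argument for $\subst'(L)\subseteq\subst^\omega(L)$ via the context-lifting sublemma goes through. The only place to be careful is, as you note, verifying that a single $\subst$-step on $b$ lifts to a single $\subst$-step on $D[b]$; this works because composing the one-hole context $D$ with the multi-hole context $C'$ witnessing the step yields a valid context (each $X_i$ still occurs exactly once). A small simplification: you do not actually need the auxiliary $u_i$'s and the initial $\subst$-step $s\to s'$. Since $t_i\in S_{j_i}$ and $t'_i\in\subst^\omega(S_{j_i})$, picking any $u_i\in S_{j_i}$ with $t'_i\in\subst^\omega(\{u_i\})$ and using the identity context already gives $u_i\in\subst(\{t_i\})$, hence $t'_i\in\subst^\omega(\{t_i\})$ directly; you can then apply context-lifting starting from $s$ itself.

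The paper takes a different route. Rather than isolating a context-lifting sublemma and then replacing coordinates one at a time, it stratifies $\subst'$ by depth, defining $\subst'_k$ to be $\subst'$ with the restriction $t'_i\in\subst^k(S_j)$, and proves $\subst'_k(L)\subseteq\subst^\omega(L)$ by induction on $k$. The inductive step peels off one layer of $\subst$ from \emph{all} the $t'_i$ simultaneously, then nests the resulting contexts $C_i$ inside $C$ to land in $\subst'_{k-1}$ followed by one more $\subst$-step. Your approach is more modular---the context-lifting sublemma is a clean standalone statement---while the paper's approach avoids the coordinate-by-coordinate bookkeeping by handling all substitutions in parallel. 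Either decomposition is fine here; they amount to two different orderings of the same nested-context unfolding.
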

\begin{proof}
Trivially $\subst'^\omega(L) \supseteq \subst^\omega(L)$.  For the converse, it suffices to prove $\subst'(L) \subseteq \subst^\omega(L)$.  We will prove by induction on $k$ that $\subst'_k(L) \subseteq \subst^\omega(L)$, where $\subst'_k$ is defined as $\subst'$ where each of the $t'_i\in \subst^k(S_j)$.  The base case is trivial since $\subst'_0=\subst$.

Now say $t\in \subst'_{k+1}(L)$, so $t=C[t'_1/X_1,\ldots,t'_m/X_m]$ for some $t'_i\in \subst^{k+1}(S_{j_i})$, and $C[t_1/X_1,\ldots,t_m/X_m] \in L$ for some $t_i \in S_{j_i}$.  Since each $t'_i\in \subst^{k+1}(S_{j_i})$, we have $t'_i=C_i[t'_{i,1}/X_{i,1},\ldots,t'_{i,m_i}/X_{i,m_i}]$ and $C_i[t_{i,1}/X_{i,1},\ldots,t_{i,m_i}/X_{i,m_i}]\in \subst^k(S_{j_i})$ for some context $C_i$ and terms $t_{i,j}, t'_{i,j}\in S_{k_{i,j}}$ for some $S_{k_{i,j}}$.  But then we have
\[t = (C[C_1/X_1,\ldots,C_m/X_m])[t'_{1,1}/X_{1,1},\ldots,t'_{2,1}/X_{2,1},\ldots],\]
and 
\begin{align*}
    t' := (C[C_1/&X_1,\ldots,C_m/X_m])[t_{1,1}/X_{1,1},\ldots,t_{2,1}/X_{2,1},\ldots,t_{m,1}/X_{m,1},\ldots] \\ 
    &= C[C_1[t_{1,1}/X_{1,1},\ldots]/X_1,\ldots,C_m[t_{m,1}/X_{m,1},\ldots]/X_m] \\
    &\in \subst'_k(L)
\end{align*}
since each of the $C_i[t_{i,1}/X_{i_1},\ldots]\in \subst^k(S_{j_i})$ and $C[t_1/x_1,\ldots]\in L$.  By the inductive hypothesis $\subst'_k(L)\subseteq \subst^\omega(L)$ and so $t'\in \subst^\omega(L)$ and $t\in \subst(\{t'\}) \subseteq \subst^\omega(L)$, as required.
\end{proof}

We are now ready to complete the proof of Theorem \ref{thm:congclos}.  We will prove by induction on $l$ that for any state $q$ we have $L_{\grow^l(\Aa')}(q) \subseteq \subst^\omega(L_\Aa'(q))$.  To do this, for fixed $l$ we show by induction on $k$ that for every term $t$ with an accepting run of size at most $k$, if $t\in L_{\grow^{l}(\Aa')}(q)$ then $t\in \subst^\omega(L_{\Aa'}(q))$.

Let $t$ be such a term.  Unless $t\in L_{\grow^{l-1}}(\Aa')(q)$, we have $t=C[t'_1/X_1,\ldots,t'_m/X_m]$, where $t'_i\in L_{\grow^l(\Aa')}(q_{j_i})$ for some $j_i$, $q_{j_i}\rightarrow q'_i$ are new silent transitions inserted by the $l$th application of $\grow$ and in $\grow^{l-1}(\Aa')$ we have $q'_1(X_1),\ldots,q'_m(X_m)\rightarrow q(C)$ and this forms part of a minimum-size run for $q(t)$ in $\grow^l(\Aa')$ (intuitively, we are cutting up a minimum accepting run for $t$ at the first places newly-added transitions are taken). 

Now since the $q_{j_i}\rightarrow q'_i$ transitions were inserted, we must have $t_i\in L_{\grow^{l-1}(\Aa')}(q'_i)$ for some $t_i\in S_{j_i}$.  Since $q'_1(X_1),\ldots,q'_m(X_m) \rightarrow q(C)$ in $\grow^{l-1}(\Aa')$, we have that $C[t_1/X_1,\ldots,t_m/X] \in L_{\grow^{l-1}(\Aa')}(q) \subseteq \subst^\omega(L_{\Aa'}(q))$ by the inductive hypothesis on $l$. Also the $t'_i$ must have accepting runs in $L_{\grow^l(\Aa')}(q_{j_i})$ of size less than $k$, and so by the inductive hypothesis on $k$ we have $t'_i\in \subst^\omega(L_{\Aa'}(q_{j_i})) = \subst^\omega(S_{j_i})$.  Hence we have that $t\in \subst'(\subst^\omega(L_{\Aa'}(q))) = \subst^\omega(L_{\Aa'}(q))$ by Lemma \ref{lem:substpom}, as required.\sloppy

Taking $q=\qacc$ gives $L(\grow^\omega(\Aa')) \subseteq \subst^\omega(L(\Aa'))$.  Combining this with equation (\ref{eq:substom}) completes the proof of Theorem \ref{thm:congclos}. \qed

\section{Reduction to shallow terms}\label{sec:appshallow}

Without loss of generality say that all functions symbols in our signature have arity $k$, and 
let $x_1,\ldots,x_k$ be fresh.  Given a grammar with production rules of the form
\[X := f_1(X_{1,1},\ldots X_{1,k}) | \ldots | f_m(X_{m,1},\ldots,X_{m,k})|y_1|\ldots |y_n,\]
define the grammar
\begin{align*}
    \Psi_X(x) := &\big|_{i=1}^m \exists x_1,\ldots,x_k. x=f_i(x_1,\ldots,x_k) \wedge \Psi_{X_{i,1}}(x_1) 
    \wedge \ldots \wedge \Psi_{X_{i,k}}(x_k) \\
    &| x=y_1 | \ldots | x=y_n.
\end{align*}

Then in a given model $\model$ we have that $\model \models x=t$ for some $t\in X$ if and only if $\model \models \psi$ for some $\psi \in \Psi_X(x)$.

Hence in the grammar defining our regular theory, whenever we have $X=X'$, we can replace it with 
\[\exists x_1,x_2.x_1=x_2 \wedge \Psi_X(x_1) \wedge \Psi_{X'}(x_2),\]
and similarly replace $X\neq X'$ with 
\[\exists x_1,x_2.x_1\neq x_2 \wedge \Psi_X(x_1) \wedge \Psi_{X'}(x_2).\]


\section{Formulae with disjunction}\label{sec:disapp}

In Section \ref{sec:coherent}, we assumed that our formulae were purely conjunctive, i.e. did not 
contain the symbol $\wedge$; we now show that the definitions of local consistency and coherence 
can be trivially generalised to this setting.  We enrich $\congr$ with a function $\val:T\rightarrow \{0,1\}$, 
where $\val(u)$ denotes whether the subformula at $u$ is known to hold.  The local consistency conditions 
are now:
\begin{itemize}
\item for every node $u \in T$ of the form $u = `t1 = t2'$, we have
$\val(u) \Rightarrow (t_1, t_2) \in \congr(u)$
\item for every node $u \in T$ of the form $u = `t1 \neq t2'$, we have
$\val(u) \Rightarrow (t_1, t_2) \not\in \congr(u)$
\item for every node $u \in T$ of the form $u = `\land(u1, u2)'$, we have
$\congr(u_1) = \congr(u_2) = \congr(u)$ and $\val(u) \Rightarrow \val(u_1) \wedge \val(u_2)$
\item for every node $u \in T$ of the form $u = `\lor(u1, u2)'$, we have
$\congr(u_1) = \congr(u_2) = \congr(u)$ and $\val(u) \Rightarrow \val(u_1) \vee \val(u_2)$
\item for every node $u \in T$ of the form $u = `\exists x (u_1)'$, we have
$\congr(u) \wedge \congr(u_1)[x'/x]$ is satisfiable, interpreting $\congr(u)$ and $\congr(u_1)$ as 
propositional formulae over $\shallowterms_1$, and $\val(u) \Rightarrow \val(u_1)$
\item for $u$ the root node of $T$, we have $\val(u)=1$
\end{itemize}

Then in the definition of coherence (and correspondingly in the proof of Theorem \ref{thm:consat}) we 
just completely ignore the parts of the tree for which $\val$ is $0$.  Concretely, in the definition of 
a bad path we modify the healthiness conditions on $\{P_t\}$ to say that $\neg P_t(u)$ whenever 
$\val(u)=0$, the other healthiness conditions are only imposed for nodes where $\val(u)=1$, and the 
condition for disjunctive nodes is
\begin{itemize}
    \item For every node $w$ of the form $w = \lor(w_1,w_2)$ we have that if $\val(w_i)=\val(w)=1$ then 
    we have $P_t(w_i) \Leftrightarrow P_t(w)$ for every term $t$.
\end{itemize}

\end{document}
\endinput